\documentclass[12pt]{article}

\usepackage[letterpaper,margin=1in]{geometry}
\usepackage{mathptmx}
\usepackage[T1]{fontenc}
\usepackage[utf8]{inputenc}
\usepackage{setspace}
\usepackage{amsmath,amssymb,amsthm,mathtools}
\usepackage{bm}
\usepackage{graphicx}
\usepackage{booktabs}
\usepackage{multirow}
\usepackage{caption}
\usepackage[round]{natbib}
\usepackage[protrusion=true,expansion=false]{microtype}
\usepackage[dvipsnames]{xcolor}
\usepackage[colorlinks=true,linkcolor=NavyBlue,citecolor=NavyBlue,urlcolor=NavyBlue]{hyperref}
\usepackage{enumitem}
\usepackage{placeins}

\newtheorem{theorem}{Theorem}
\newtheorem{corollary}[theorem]{Corollary}
\newtheorem{assumption}{Assumption}
\theoremstyle{definition}
\newtheorem{remark}{Remark}

\newcommand{\R}{\mathbb{R}}
\newcommand{\E}{\mathbb{E}}
\newcommand{\Prob}{\mathbb{P}}

\newcommand{\N}{\mathcal{N}}
\newcommand{\logit}{\mathrm{logit}}

\newcommand{\KL}{\mathrm{KL}}

\newcommand{\Phat}{\widehat{P}}
\newcommand{\Ptil}{\widetilde{P}}
\newcommand{\bbeta}{\bm{\beta}}
\newcommand{\trans}{^{\!\top}}
\DeclareMathOperator*{\argmin}{arg\,min}
\DeclareMathOperator*{\argmax}{arg\,max}
\newtheorem{lemma}{Lemma}
\newcommand{\Pbb}{\mathbb{P}}
\newcommand{\F}{\mathcal{F}}
\newcommand{\1}{\mathbf{1}}
\newcommand{\norm}[1]{\left\lVert #1\right\rVert}
\DeclareMathOperator{\kl}{kl}
\DeclareMathOperator{\Bern}{Bernoulli}
\DeclareMathOperator{\ECE}{ECE}

\title{\bf Bayesian Predictive Synthesis for Dynamic Networks: Forecasting and Identifying Structural Mechanisms}

\author{
Marios Papamichalis\thanks{Human Nature Lab, Yale University, New Haven, CT 06511, \texttt{marios.papamichalis@yale.edu}} \and
Regina Ruane\thanks{Department of Statistics and Data Science, The Wharton School, University of Pennsylvania, 3733 Spruce Street, Philadelphia, PA 19104-6340, \texttt{ruanej@wharton.upenn.edu}} \and
Theofanis Papamichalis\thanks{Department of Economics, Yale University, 28 Hillhouse Ave, New Haven, USA, \texttt{theofanis.papamichalis@yale.edu}}
}

\date{\today}

\begin{document}
\maketitle
\singlespacing

\begin{abstract}
\noindent
Networks are shaped by competing structural mechanisms, such as communities, geometry, or hubs. In a dynamic network the most predictive mechanism can change, and a model tied to one mechanism, or to fixed weights, cannot adapt as the dominant structure shifts. We develop dynamic Bayesian predictive synthesis for networks, in which a mechanism is an agent forecasting the next snapshot's edges and a synthesis layer combines them with time-varying weights. At each step the method returns a calibrated edge forecast and inference on the mechanism weights, with intervals valid given the fitted agents, so it also reports which mechanism is most informative. Inference of this kind requires a sparse-safe parametrization and an identification theory, under which a single graph identifies and estimates the weights. A sharp threshold separates distinguishable from indistinguishable mechanisms, a change in the active mechanism is tracked at an optimal per-switch cost, and for a single snapshot the method reduces to calibrated link prediction. On real networks, simulations, and benchmarks, the synthesis gives accurate, calibrated forecasts and recovers the leading mechanism when agents are well separated.
\smallskip

\noindent\textbf{Keywords:} dynamic networks; Bayesian predictive synthesis; forecast combination; calibration; link prediction.
\end{abstract}

\section{Introduction}
\label{sec:intro}

A network observed at a single time is often modeled as a draw from a generative mechanism. Three
families are standard: stochastic block models for community structure
\citep{holland1983stochastic,karrer2011stochastic}, latent-position and random dot product graphs
for geometry \citep{hoff2002latent,rubin2022statistical}, and configuration or Chung--Lu models for
degree heterogeneity \citep{chung2002connected}. Two difficulties arise in practice. First, the
practitioner rarely knows which mechanism generated the data. Second, when the network changes over
time, the mechanism that best explains it changes as well. Contact networks alternate between
community structure during class and hub structure during breaks within a single day, and
co-authorship and online social networks reorganize as communities form and dissolve. A model committed to one mechanism cannot follow such a change. Combining several mechanisms with
weights set in advance does not help. Once the most predictive mechanism changes, those weights still favor
the mechanism that was dominant before, and the combination misrepresents the current structure. We
instead let the weights change with the network and estimate them from data. The weights become the
object of inference: we estimate them, with uncertainty, from a single network snapshot, and we track their
change over time. The weight on an agent is a residualized log-odds projection coefficient: a positive value means that agent carries predictive structure beyond the others, not that the network was generated by that mechanism, so the leading mechanism is a pool-dependent statement and adding or removing an agent can move the lead. Forecast combination for time series learns such weights from a long history.

The changing balance of mechanisms raises three questions. Can the weights on competing mechanisms be
learned from data, not fixed in advance? At what rate can they be learned from a single
network? And can the resulting forecast be trusted as a calibrated probability? Existing work does
not answer them. Forecast combination is developed for time series, not for graph-valued outcomes
with single-snapshot inference. Network estimation treats a single fixed graph, not weights that
follow a changing mechanism.

We extend dynamic Bayesian predictive synthesis to networks. Predictive synthesis
\citep{west1992modelling,mcalinn2019dynamic} specifies how a set of agent forecasts relates to the
outcome, and updates that specification as agents and outcomes are observed. Each mechanism is fit
to the history and forecasts the edge probabilities of the next network snapshot. A synthesis layer combines
these forecasts on the logit scale through time-varying coefficients. The coefficients are the state
of a dynamic linear model \citep{west1997bayesian}: they drift as a random walk and are updated
against the observed edges. At each step the method returns two things: a calibrated forecast of the
next edges, and an estimate of the weights with its uncertainty. With a single time point, it
reduces to calibrated link prediction.

A network on $n$ nodes has of order $n^2$ node pairs. Under a conditionally independent edge model,
each pair is a Bernoulli observation, and the Fisher information for the weights is of order
$n^2\rho_n$, where $\rho_n$ is the edge density. The weights are therefore estimable from one network
snapshot, at rate $(n^2\rho_n)^{-1/2}$, a rate set by the size of the network, not the length
of the series. In a large network $n^2\rho_n$ is in the thousands even at low density, while a change
in the most predictive mechanism takes a few snapshots, so the weights are estimated faster than they
change. The same quantity $n^2\rho_n$ also determines the contrast two mechanisms need to be
distinguished from one network snapshot, and the rate at which a change in mechanism is tracked.

Bayesian predictive synthesis grew out of the reconciliation of probability assessments
\citep{lindley1979reconciliation,west1984bayesian} and the modeling of probabilistic agent
opinion \citep{genest1985modeling,west1992modelling}, in which a decision maker
updates a prior over an outcome by treating the agents' predictive distributions as data through
a conditional synthesis density. \citet{mcalinn2019dynamic} made the synthesis dynamic for time
series by letting the synthesis coefficients follow a dynamic linear model, and
\citet{mcalinn2020multivariate} carried it to multivariate macroeconomic forecasting;
\citet{johnson2017bayesian} developed agent-specific calibration, \citet{tallman2024bayesian}
introduced decision-guided synthesis, and \citet{chernis2023predictive} replaced the linear
synthesis function with a nonparametric tree ensemble. The construction generalizes Bayesian
model averaging \citep{hoeting1999bayesian} and the stacking of predictive distributions
\citep{wolpert1992stacked,breiman1996stacked,yao2018using}, and is related to opinion pools
\citep{genest1986combining} and optimal prediction pools \citep{geweke2011optimal}. Every paper
in this lineage operates on scalar or vector time series: the outcome is a number or a
low-dimensional vector and the agents forecast it. None acts on a graph, where the outcome is an
adjacency matrix, the agents are generative network mechanisms, and the unit of information is the
dyad. We extend the dynamic synthesis of \citet{mcalinn2019dynamic} to networks. Incorporating the network
structure into the synthesis yields results with no time-series counterpart: the combination weights
are identified and estimated from a single graph, which requires a sparse-safe parametrization and an
identification theory the time-series literature does not need.

The work nearest to ours combines several network predictors for one prediction task.
\citet{ghasemian2020stacking} stack a large collection of link-prediction algorithms into a
near-optimal meta-learner on static networks, \citet{he2024sequential} extend feature-based
stacking to temporal networks, and \citet{zhang2025network} average latent-space models of
differing latent dimension by $K$-fold edge cross-validation \citep{li2020network}, with
asymptotic optimality for link prediction. These are combination methods for networks, and
we benchmark against them, but they differ from the present construction in what they return and
what they justify: they stack non-mechanistic predictor scores or average models within a single family
by a frequentist criterion, so in the convex-pooling form used here they produce a ranking, not a calibrated predictive
distribution, carry no calibration intercept, and stay within the convex hull of their
inputs, and they provide no inference on the contribution of any mechanism, no single-snapshot
identification theory, no separation rate, and no tracking guarantee. \citet{papamichalis2025graphon} combine network mechanisms at the population graphon; they use no dynamic state and give no inference on a finite graph.
\citet{papamichalis2026minimax} study, for one fixed graph, when two mechanisms are distinguishable, and \citet{papamichalis2025state} model time-varying spillovers on a network in state-space form.
The present paper differs in three ways. It estimates the combination weights, with uncertainty,
from a finite graph. It tracks a change in the active mechanism at an optimal per-switch cost. And it shows
when model averaging fails under misspecification.

The agents instantiate the standard families. The community agent is the stochastic block model
\citep{holland1983stochastic} in its mixed-membership \citep{airoldi2008mixed} and spectral
\citep{rohe2011spectral} forms; the latent-geometry agent is the latent space model
\citep{hoff2002latent} and the random dot product and generalized random dot product graphs
\citep{athreya2018statistical,rubin2022statistical}; the degree agent is the Chung--Lu
model \citep{chung2002connected}; and the local agent is the Adamic--Adar index
\citep{adamic2003friends}. We treat each as a one-step-ahead predictive distribution and combine
them, not selecting one \citep{saldana2017many,wang2023occam} or merging their parameters
into a single hybrid model.

A parallel response to dynamic networks fits a single dynamic generative model: dynamic latent
space models \citep{sewell2015latent,durante2014nonparametric}, dynamic stochastic block models
\citep{xu2014dynamic,matias2017statistical}, and temporal exponential random graph models
\citep{hanneke2010discrete,krivitsky2014separable}, surveyed alongside the static families by
\citet{goldenberg2010survey}. Each commits to one mechanism for the entire series. The present
method instead holds several mechanisms at once and lets a dynamic synthesis layer reweight them as
their relative fit changes, the regime-switching behavior the experiments target.

Two further threads fix the scale of the theory. \citet{kolaczyk2015question} show that a graph
carries between $O(n)$ and $O(n^2)$ units of information about a tie propensity according to its
density, and the dyadic-inference literature \citep{graham2020network,chiang2026double} treats the
dyad count as the unit of information; we use this to set the rate at which combination weights,
not model parameters, are learned from a single snapshot. Detection thresholds for a fixed
graph, community recovery at the Kesten--Stigum boundary
\citep{decelle2011asymptotic,mossel2015reconstruction,abbe2018community}, latent-geometry detection
\citep{bubeck2016testing}, and global structure tests \citep{gao2017testing,jin2018network}, have
as their snapshot-wise, mechanism-against-mechanism analogue our separation rate inside a tracking
problem, with graphon estimation rates \citep{gao2015rate,klopp2017oracle} entering through the
agents. Dynamic model averaging \citep{raftery2010online} tracks which model is active over time
but, as an averaging rule, inherits the misspecification collapse
\citep{berk1966limiting,kleijn2006misspecification} that Theorem~\ref{thm:t4} formalizes and the
experiments expose, while the link-prediction problem itself \citep{liben2003link,lu2011link}
is the $T=1$ task that our construction calibrates.

We develop dynamic Bayesian predictive synthesis for dynamic
networks: each candidate mechanism is used as an agent that forecasts the next snapshot's edges,
and a synthesis layer combines these forecasts with weights that drift over time, so the method
is at once a calibrated one-step-ahead edge forecaster, in the Bayesian predictive synthesis
tradition of \citet{mcalinn2019dynamic,mcalinn2020multivariate} and scored sequentially by a proper
scoring rule, and a reading of which mechanism is currently active.
A sparse-safe parametrization keeps it valid on sparse graphs, and with a single snapshot it
reduces to calibrated link prediction. The main contribution is a theory for these combination
weights, built on one fact: a graph on $n$ nodes at sparsity $\rho_n$ contains on the order of
$n^2\rho_n$ edges, so the weights are learned from a single snapshot, not from a long
series. The theory establishes five properties of these weights. First, the weights are identified and
estimable from one snapshot, with a normal limit; their intervals are valid conditional on the
fitted agents, and remain valid unconditionally for finite-dimensional mechanisms through a
cross-fold orthogonal correction that accounts for refitting the agents on the same graph. Second,
two mechanisms are distinguishable from one snapshot precisely when their edge probabilities differ
by more than a threshold that no test can improve upon. Third, a change in the active mechanism is
tracked at a cost incurred only at the changes, at a rate matching a minimax lower bound. Fourth,
when no agent is correct, the weights converge to a well-defined projection, the forecast remains
calibrated, and the synthesis improves on Bayesian model averaging, which under misspecification collapses onto one agent. Fifth, the static
link-prediction case carries a case-control correction.

We evaluate the method on five real dynamic networks, controlled simulations, and
link-prediction benchmarks, comparing against model averaging, stacking, and purpose-built dynamic
network models. The real networks are a financial correlation network, high-school and hospital-ward contact networks with known group labels, and two
timestamped networks in the tens of thousands of nodes. Two findings stand out. The synthesis produces
calibrated forecasts together with interpretable, time-varying weights that carry valid intervals;
a recalibrated latent-space competitor matches the probabilities only after a separate
recalibration step and returns no weights, so the contribution is calibrated forecasting with
mechanism inference.

\section{Dynamic networks and mechanism agents}
\label{sec:model}

We observe a sequence of undirected networks on a fixed node set $V=\{1,\dots,n\}$,
recorded as symmetric adjacency matrices $A_1,\dots,A_T \in \{0,1\}^{n\times n}$ with zero
diagonal. Conditionally on a symmetric
edge-probability matrix $P_t \in [0,1]^{n\times n}$, the upper-triangular entries of $A_t$
are independent Bernoulli draws, $A_{t,ij}\sim \mathrm{Bernoulli}(P_{t,ij})$ for $i<j$. We
work in the sparse regime
\begin{equation}
\label{eq:sparse}
P_t = \rho_n\, M_t, \qquad \rho_n \to 0, \qquad n\rho_n \ge c\log n,
\end{equation}
where $\rho_n$ is a global sparsity factor and $M_t$ is a bounded structural matrix; the
condition $n\rho_n\ge c\log n$ keeps the average degree above the connectivity threshold
\citep{le2017concentration}.

A \emph{mechanism agent} is a forecaster that maps the history $A_{1:t-1}$ to a one-step
predictive edge-probability matrix $\Phat^{(j)}_t$. We use three model-based agents.

\begin{itemize}[leftmargin=1.4em,itemsep=2pt]
\item \textbf{SBM (community).} A degree-corrected stochastic block model
\citep{karrer2011stochastic} fit by regularized spectral clustering
\citep{rohe2011spectral,lei2015consistency}, yielding block assignments and
Laplace-smoothed block connection probabilities; $\Phat^{\mathrm{SBM}}_{t,ij}$ is the
estimated probability for the blocks of $i$ and $j$.
\item \textbf{GRDPG (geometry).} A generalized random dot product graph fit by adjacency
spectral embedding with magnitude-truncated eigenvalues \citep{rubin2022statistical};
$\Phat^{\mathrm{GRDPG}}_{t,ij}= x_i\trans I_{p,q}\, x_j$ for embedded positions $x_i$ and
signature $(p,q)$.
\item \textbf{Chung--Lu (degree).} A sparse degree model
$\Phat^{\mathrm{CL}}_{t,ij}= d_i d_j / (2m)$ with degrees $d_i$ and edge count $m$
\citep{chung2002connected}, truncated to the sparse scale by the two-sided clip of
Section~\ref{sec:bps} so that high-degree pairs do not exceed it; this is closely related to the preferential-attachment per-dyad score.
\end{itemize}

For link prediction we additionally include a local per-dyad agent, Adamic--Adar
\citep{adamic2003friends,liben2003link}, $\mathrm{AA}(i,j)=\sum_{z\in N(i)\cap N(j)} 1/\log
d_z$, mapped to a probability by the fixed, label-free squashing $\Phat^{\mathrm{AA}}_{t,ij}=1-e^{-\mathrm{AA}(i,j)/\tau}$, with $\tau$ the mean Adamic--Adar score over the scored dyads. The three model-based
agents are those for which Theorem~\ref{thm:t1} supplies explicit rates; the local agent
is a nonparametric predictor that the oracle inequality covers without a parametric rate.

\section{Dynamic Bayesian predictive synthesis}
\label{sec:bps}

Let $\sigma(\cdot)$ denote the logistic function and $\logit$ its inverse. Stack the
agent predictives for a dyad $(i,j)$ at time $t$ into a feature vector $z_{t,ij} = (1,\,
\logit \Phat^{(1)}_{t,ij},\dots,\logit \Phat^{(J)}_{t,ij})\trans$. The synthesis predictive
is the affine-logit pool
\begin{equation}
\label{eq:synthesis}
\Ptil_{t,ij} = \sigma\!\big(\bbeta_t\trans z_{t,ij}\big),
\qquad
\bbeta_t = (\beta_{t0},\beta_{t1},\dots,\beta_{tJ})\trans \in \R^{J+1},
\end{equation}
where $\beta_{t0}$ is a calibration intercept and $\beta_{tj}$ is the time-$t$ weight on
agent $j$. Each $\beta_{tj}$ is a log-odds weight, not a mixture share: a positive value indicates that
agent $j$ aligns with the edge structure left unexplained by the other agents, and a negative value
that it is anti-aligned (Table~\ref{tab:weights}). The individual weights are interpretable when the agents are not aliased,
a condition made precise in Section~\ref{sec:theory}.

\begin{table}[t]
\centering
\caption{Interpretation of the synthesis weights, reported through the scale-invariant standardized contribution
$|\theta_{tj}|H_{t,jj}^{1/2}$; the condition number $\kappa(H_t)$ gates whether individual weights
are separately interpretable.}
\label{tab:weights}
\small
\begin{tabular}{@{}p{0.33\textwidth}p{0.60\textwidth}@{}}
\toprule
Weight pattern & Reading \\
\midrule
Large positive standardized contribution & agent supplies residual predictive structure beyond the others \\
Near zero & agent is redundant given the others \\
Negative & agent is anti-aligned with the edges, a sign of misspecified or opposite structure \\
Ill-conditioned $H_t$ (large $\kappa$) & individual weights are not separately interpretable \\
\bottomrule
\end{tabular}
\end{table}

The coefficient vector is the latent state of a dynamic linear model
\citep{west1997bayesian}:
\begin{equation}
\label{eq:dlm}
\bbeta_t = \bbeta_{t-1} + \omega_t, \quad \omega_t \sim \N(\bm 0, W_t),
\qquad
A_{t,ij} \mid \bbeta_t \sim \mathrm{Bernoulli}\!\big(\sigma(\bbeta_t\trans z_{t,ij})\big).
\end{equation}
The evolution variance $W_t$ is set by a single discount factor $\delta\in(0,1]$, which
controls how quickly weights may move: $\delta=1$ holds the weights fixed (suited to $T=1$),
while $\delta<1$ permits tracking. A single discount controls the whole filter, and the results below
are insensitive to its value: sweeping $\delta\in[0.7,0.95]$ leaves one-step-ahead switch recovery at a
single snapshot, the agent-refit lag, up to $\delta=0.95$, while the transductive recovery time, which
carries no refit lag, scales as $1/(1-\delta)$ exactly as the tracking bound predicts, so a default near
$0.9$ suffices and no per-network tuning of the discount is needed. The Bernoulli observation gives no closed-form update, so we use a Laplace (extended-Kalman) step.
The posterior mode is computed by damped Newton iteration with step-halving, which decreases the
negative log posterior monotonically, and the curvature at the mode is propagated as the posterior
covariance. The one-step forecast is the plug-in predictive
$\sigma(\widehat\bbeta_{t\mid t-1}^\top z_{t,e})$ at the filtered mean; the curvature enters the
weight intervals but not the point forecast. The agents are fit on a fold of dyads held out from the
fold used to update the synthesis, so no dyad is used both to fit an agent and to estimate the
weights. The forward pass at each time $t$ is: fit the agents on $A_{1:t-1}$ and form the features
$z_t$; compute the one-step-ahead forecast $\Ptil_t$ from the current state; and, on observing
$A_t$, filter $\bbeta_t$ by the discounted Bernoulli update.

\paragraph{Sparse-safe parametrization.}
In the sparse regime $\rho_n\to0$ a fixed affine-logit pool of raw logits need not preserve
the global density, so for the analysis (and in the implementation) we separate the density
from the structural combination. Writing $\rho_n$ for an estimate of the global edge density
(the observed dyad mean), set the centered agent features and the offset synthesis
\begin{equation}
\label{eq:offset}
u^{(j)}_{t,ij} = \logit \Phat^{(j)}_{t,ij} - \log\rho_n, \qquad
\Ptil_{t,ij} = \sigma\!\Big(\log\rho_n + \alpha_t + \textstyle\sum_{j=1}^J \theta_{tj}\,
u^{(j)}_{t,ij}\Big),
\end{equation}
so $\log\rho_n$ is a fixed offset carrying the density, $\alpha_t$ is the calibration
intercept, and $\theta_t=(\theta_{t1},\dots,\theta_{tJ})$ are the structural weights; the
state is $\bbeta_t=(\alpha_t,\theta_t)$. The clipping of $\Phat^{(j)}$ is at the sparse scale on both sides,
$[\rho_n\epsilon,\,\rho_n/\epsilon]$ intersected with $[\rho_n\epsilon,\,1-\epsilon]$, not a
fixed $[\epsilon,1-\epsilon]$. The two-sided sparse cap keeps the centered features bounded,
$|u^{(j)}_{t,e}|\le\log(1/\epsilon)+O(\rho_n)$, as Assumption~\ref{ass:agents} and the design
condition of Theorem~\ref{thm:t1} require; without an upper cap at the sparse scale a high-degree
Chung--Lu dyad with $d_id_j/(2m)=O(1)$ would give $u^{(j)}_{t,e}\asymp\log(1/\rho_n)\to\infty$, so
the cap, not merely the lower clip, is what enforces boundedness, and it does not distort the density
as $\rho_n\to0$. In the implementation the density level is carried entirely by the free intercept $\alpha_t$, which
the filter updates from past snapshots only, so the one-step-ahead forecast uses no information from
the snapshot being scored; the fixed offset $\log\rho_n$ of the theory and the filtered intercept
differ only by a constant that $\alpha_t$ absorbs, leaving the structural-weight inference unchanged. The theory
states a single scale $\rho_n$ for transparency; the statements and rates carry over unchanged to a
time-varying $\rho_{t,n}$ with $c\rho_n\le\rho_{t,n}\le C\rho_n$ uniformly in $t$, on replacing
$N_t\asymp n^2\rho_n$ by $N_t\asymp n^2\rho_{t,n}$. The theory below targets the population weight
$\bbeta^\circ_t$ in this parametrization.

\paragraph{Computation.}
The agents are standard static network models fit to the observed history; the dynamic linear
model acts only on the low-dimensional vector of weights that combines them, not on the network
itself. Because that weight state is Gaussian and linear, its recursion uses standard Kalman
filtering and Rauch--Tung--Striebel smoothing \citep{west1997bayesian,papamichalis2025state}.

\section{Theory}
\label{sec:theory}

Every result in this section is network-specific and is organized around one quantity, the
edge information in a single snapshot. With the sparse-safe synthesis \eqref{eq:offset},
write the dyad feature $z_{t,e}=(1,u^{(1)}_{t,e},\dots,u^{(J)}_{t,e})\trans$ for a dyad
$e=(i,j)$ and let
\begin{equation}
\label{eq:info}
N_t \;:=\; \sum_{e\in\mathcal E_t} q_{t,e}(1-q_{t,e})
\;\asymp\; |\mathcal E_t|\,\rho_n \;\asymp\; n^2\rho_n ,
\qquad
H_t \;:=\; N_t^{-1}\!\sum_{e\in\mathcal E_t} q_{t,e}(1-q_{t,e})\,z_{t,e}z_{t,e}\trans ,
\end{equation}
where $q_{t,e}=q_{t,e}(\bbeta^\circ_t)$ and $\mathcal E_t$ is the set of sampled dyads. $N_t$
is the scalar dyadic information scale and grows as the expected edge count of one graph, never as
the series length $T$; $H_t$ is the normalized information matrix, so $N_tH_t$ is the Fisher
information for the weights. The central
object is the population log-odds projection of the network onto the set of mechanisms,
\begin{equation}
\label{eq:popbeta}
\bbeta^\circ_t \;=\; \argmin_{\bbeta\in\Theta}\; R_t(\bbeta),
\qquad
R_t(\bbeta) \;=\; \frac{1}{|\mathcal E_t|}\sum_{e\in\mathcal E_t}
\KL\!\big\{\mathrm{Bern}(p_{t,e}),\,\mathrm{Bern}(q_{t,e}(\bbeta))\big\},
\end{equation}
and the theory is about estimating, tracking, calibrating, and interpreting
$\bbeta^\circ_t$. Three results are the headline. First, a single snapshot identifies and estimates the weights at the dyadic-information rate, with an aliasing condition for when individual weights are recoverable (Theorem~\ref{thm:t1}); this single-graph identification has no time-series counterpart and is the conceptual core. Second, two mechanisms separate above a contrast threshold, and a switch in the active mechanism is tracked with a bounded recovery delay that is minimax-optimal up to constants (Theorems~\ref{thm:t2} and \ref{thm:t3}, with the matching lower bound in Section~\ref{supp:s7} of the Supplement). Third, when no agent equals the truth the weights converge to a log-odds projection and the intercept enforces calibration, separating the synthesis from model averaging and stacking under misspecification in this setting (Theorem~\ref{thm:t4}). A corollary specializes the first result to one snapshot, recovering calibrated link prediction (Corollary~\ref{cor:t1}).

We write $a_{n,t}$ for the agents' estimation error on the information-weighted logit (score) scale,
$a_{n,t}^2=N_t^{-1}\sum_e q_{t,e}(1-q_{t,e})\,|\widehat z_{t,e}-z^\circ_{t,e}|^2$, the deviation of the fitted
dyad logits from their population targets; the probability-scale graphon rates cited in
Assumption~\ref{ass:agents} convert to this scale through the sparse lower bound $p_{t,e}\asymp\rho_n$,
which amplifies probability error by $1/\rho_n$ (Supplement~S2), and
$r_{\mathrm{Lap}}$ for the Laplace/extended-Kalman linearization error of the filter.

\begin{assumption}[Sparsity]
\label{ass:sparse}
The edge probabilities satisfy \eqref{eq:sparse} with the entries of $M_t$ bounded in
$[c_0,C_0]$ on their support scale, and the average degree exceeds the connectivity
threshold, $n\rho_n\ge c\log n$. The number of blocks $K$ and the embedding dimension
$d=p+q$ are fixed.
\end{assumption}

\begin{assumption}[Agent regularity and rates]
\label{ass:agents}
After two-sided sparse-scale clipping to $[\rho_n\epsilon,\,\rho_n/\epsilon]\cap[\rho_n\epsilon,\,1-\epsilon]$, each
centered agent feature $u^{(j)}_{t,e}$ is bounded, $|u^{(j)}_{t,e}|\le\log(1/\epsilon)+O(\rho_n)$; the
upper sparse cap is what enforces this, since a Chung--Lu hub with $\Phat^{(j)}_{t,e}=O(1)$ would
otherwise give $u^{(j)}_{t,e}\asymp\log(1/\rho_n)$. The model-based agents are consistent for their best-fitting
parameters. On the probability scale the dyadic mean-squared error of the fitted edge probabilities
is of order $\rho_n(K^2/n^2+\log K/n)$ (SBM, \citealp{gao2015rate,lei2015consistency}),
$\rho_n d/n\cdot\mathrm{polylog}\,n$ (GRDPG, \citealp{le2017concentration,rubin2022statistical}),
and $\rho_n/n$ (Chung--Lu). Because a centered logit feature equals the relative probability
deviation, the score-scale quantity $a_{n,t}^2$ defined above is larger by $\rho_n^{-2}$: thus
$a_{n,t}^2\asymp K^2/(n^2\rho_n)$ (SBM, the $\log K/n$ clustering term a loose worst-case bound that
does not bind for fixed separated blocks), $d/(n\rho_n)\cdot\mathrm{polylog}\,n$ (GRDPG), and
$1/(n\rho_n)$ (Chung--Lu).
\end{assumption}

\begin{assumption}[Design non-degeneracy]
\label{ass:design}
The normalized information matrix is well conditioned, $\lambda_{\min}(H_t)\ge
\underline\lambda>0$ uniformly in $t$. This is a quantitative strengthening of the condition that the agents' dyad-logit surfaces
$(1,u^{(1)}_{t,e},\dots,u^{(J)}_{t,e})$ are not affinely dependent across dyads, and is the
identifiability condition for $\bbeta^\circ_t$; its failure is the aliasing of
Theorem~\ref{thm:t1}.
\end{assumption}

\begin{assumption}[State evolution]
\label{ass:state}
The state $\bbeta_t=(\alpha_t,\theta_t)$ follows \eqref{eq:dlm} with discount $\delta$ and
evolution scale $\tau^2=1-\delta$. A comparator sequence $\bbeta^\circ_t$ has total movement
$\sum_{t=2}^T\|\bbeta^\circ_t-\bbeta^\circ_{t-1}\|^2\le \mathcal V$, realized by $S$ regime
switches of jump size at most $D$ in the piecewise-stable case.
\end{assumption}

\subsection{Single-snapshot dyadic-rate identification and aliasing}

One result separates network synthesis from time-series synthesis. A single graph identifies and
estimates the weights, at a rate set by the size of the graph, not the length of the series. The one-snapshot
estimator $\widehat\bbeta_t$ is the cross-fitted maximum a posteriori (or, with a flat prior,
maximum likelihood) logistic fit of the dyads $A_{t,e}$ on the offset design $z_{t,e}$.

\begin{theorem}[Single-snapshot identification, dyadic rate, and aliasing]
\label{thm:t1}
Fix $t$, condition on the sigma-field $\F_{t,n}$ generated by the history, the cross-fitted
agents, the sampled dyad set $\mathcal E_t$ (with $m_{t,n}=|\mathcal E_t|$), the offset
$\log\rho_n$, and the true conditional edge probabilities $p_{t,e}$, and assume the dyad set
is conditionally non-informative, so that $A_{t,e}\sim\mathrm{Bernoulli}(p_{t,e})$ are
independent given $\F_{t,n}$. Suppose the basic sparse-scale conditions hold: the parameter
space $\Theta$ is compact and convex, $\max_e\|z_{t,e}\|\le B$, and there are constants
$0<c_q<C_q$ with $c_q\rho_n\le q_{t,e}(\beta)\le C_q\rho_n\le\tfrac12$ uniformly on $\Theta$,
$p_{t,e}\le C_p\rho_n$, and $m_{t,n}\rho_n\to\infty$ (Supplement~S1, conditions
B1--B3).
\begin{enumerate}[label=(\alph*),leftmargin=1.9em,itemsep=2pt]
\item \textbf{(Identification and aliasing.)} Because $q_{t,e}(1-q_{t,e})>0$, the null space
of $H_t(\beta)$ equals $\mathcal N_t=\{a:a\trans z_{t,e}=0\ \forall e\in\mathcal E_t\}$ for
every $\beta$. Hence the conditional synthesis law $\beta\mapsto\{q_{t,e}(\beta)\}$ is
identified if and only if $H_t$ is full rank ($\mathcal N_t=\{0\}$); otherwise $\beta$ is
identified only modulo $\mathcal N_t$, and a linear functional $a\trans\beta$ is estimable iff
$a\in\mathrm{Range}(H_t)$. An exact affine dependence among $1,u^{(1)}_{t,\cdot},\dots,
u^{(J)}_{t,\cdot}$ aliases the corresponding separate agent weights.
\item \textbf{(Rate and conditional normality.)} Adding interior, Gram-nonsingularity, and
weak-penalty conditions (Supplement~S1, C1--C4), the one-snapshot estimator
$\widehat\bbeta_t$ satisfies, conditionally on $\F_{t,n}$,
\begin{equation}
\label{eq:t1rate}
\|\widehat\bbeta_t-\bbeta^\circ_t\|
= O_p\!\Big(\sqrt{\tfrac{J+1}{N_t}}\Big),
\quad N_t\asymp m_{t,n}\rho_n\asymp n^2\rho_n \ (\text{with } m_{t,n}\asymp n^2 \text{ for full-snapshot synthesis}),
\end{equation}
with the asymptotic linear representation $\sqrt{N_t}(\widehat\bbeta_t-\bbeta^\circ_t)=
H_t^{-1}N_t^{-1/2}\sum_e(A_{t,e}-p_{t,e})z_{t,e}+o_p(1)$ and therefore
\begin{equation}
\label{eq:t1clt}
\sqrt{N_t}\,\big(\widehat\bbeta_t-\bbeta^\circ_t\big)
\;\Rightarrow\;
\N\!\big(0,\;H_t^{-1}V_t H_t^{-1}\big),
\quad V_t=H_t \text{ under correct Bernoulli-logit specification.}
\end{equation}
The effective sample size is the edge count: every weight is estimable from one graph at the
dyadic rate $(n^2\rho_n)^{-1/2}$, not the series rate. If the fitted-agent population score is
within $O_p(a_{n,t})$ of an oracle score and the implemented filter is within
$O_p(r_{\mathrm{Lap}})$ of $\widehat\bbeta_t$, then the oracle-centered error is
$O_p(\sqrt{(J+1)/N_t}+a_{n,t}+r_{\mathrm{Lap}})$, and the oracle-centered limit holds with the
same covariance when $\sqrt{N_t}\{a_{n,t}+r_{\mathrm{Lap}}\}\to0$.
\end{enumerate}
\end{theorem}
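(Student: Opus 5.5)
Part (a) is linear algebra, and I would handle it first. For any $\beta$ and any $a$,
\[
a\trans H_t(\beta)a=N_t^{-1}\sum_e q_{t,e}(1-q_{t,e})(a\trans z_{t,e})^2 .
\]
Every weight $q_{t,e}(1-q_{t,e})$ is strictly positive because $c_q\rho_n\le q_{t,e}\le\tfrac12$. So the quadratic form vanishes exactly when $a\trans z_{t,e}=0$ for every sampled dyad, and since $H_t(\beta)$ is positive semidefinite its null space is $\mathcal N_t$, for every $\beta$.

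Identification follows from this. The map $\beta\mapsto\{q_{t,e}(\beta)\}=\{\sigma(\beta\trans z_{t,e})\}$ is injective iff $\beta\mapsto\{\beta\trans z_{t,e}\}_e$ is injective, because $\sigma$ is strictly monotone. That linear map is injective iff its kernel $\mathcal N_t$ is trivial. Otherwise the law is constant on the cosets $\beta+\mathcal N_t$. A functional $a\trans\beta$ is constant on those cosets iff $a\perp\mathcal N_t$, which by symmetry of $H_t$ means $a\in\mathrm{Range}(H_t)$. An exact affine relation $c_0+\sum_j c_ju^{(j)}_{t,e}\equiv0$ puts $c$ in $\mathcal N_t$, so the separate agent weights are aliased.

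For part (b) I would use the standard M-estimation argument for logistic regression with a triangular array, normalized by $N_t$ rather than $m_{t,n}$. Conditionally on $\F_{t,n}$, write the sample negative log-likelihood $\ell_t(\beta)=-\sum_e\{A_{t,e}\beta\trans z_{t,e}-\log(1+e^{\beta\trans z_{t,e}})\}$. Its mean is minimized at $\bbeta^\circ_t$, since the KL risk $R_t$ differs from $\ell_t/m_{t,n}$ only by terms free of $\beta$. The score at $\bbeta^\circ_t$ is $S_t=\sum_e(A_{t,e}-p_{t,e})z_{t,e}$, using the population first-order condition $\sum_e(p_{t,e}-q_{t,e}^\circ)z_{t,e}=0$ from interiority (C1).

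The score has conditional covariance $\sum_e p_{t,e}(1-p_{t,e})z_{t,e}z_{t,e}\trans=N_tV_t$. Since $p_{t,e}\le C_p\rho_n$ and $q_{t,e}\ge c_q\rho_n$, we get $V_t\preceq C H_t$, hence $\|S_t\|=O_p(\sqrt{(J+1)N_t})$. The Hessian is $N_tH_t(\beta)$. Because $q_{t,e}(\beta)\asymp\rho_n$ uniformly on $\Theta$, $H_t(\beta)\succeq c\,H_t(\bbeta^\circ_t)$ on all of $\Theta$, and the Gram condition (C2) gives $\lambda_{\min}\ge\underline\lambda$. Strong convexity of $\ell_t$ on the convex set $\Theta$, at curvature $N_t\underline\lambda$, then yields
\[
\|\widehat\bbeta_t-\bbeta^\circ_t\|\le \frac{\|S_t\|}{N_t\underline\lambda}=O_p\!\big(\sqrt{(J+1)/N_t}\big),
\]
after accounting for the weak penalty (C4), which contributes $o(N_t^{-1/2})$.

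For the linear representation I would Taylor-expand the score around $\bbeta^\circ_t$. The remainder is controlled by the Lipschitz property of $q(1-q)$ in $\beta$: since $|q'|\le q$, the Hessian satisfies $\|H_t(\beta)-H_t(\bbeta^\circ_t)\|\le CB\|\beta-\bbeta^\circ_t\|$, which is $o_p(1)$. Asymptotic normality of $N_t^{-1/2}S_t$ then follows from Lindeberg–Feller via Cramér–Wold. Each summand is bounded by $B/\sqrt{N_t}\to0$ and the total variance is normalized to $V_t$, with $N_t\asymp m_{t,n}\rho_n\to\infty$. Under correct specification $p=q^\circ$, so $V_t=H_t$.

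The oracle statement follows from a triangle inequality. I would compare the fitted-feature estimator with the oracle-feature estimator through the difference of their scores, which is $O_p(N_t a_{n,t})$ by Cauchy–Schwarz with the information weights. Dividing by the curvature $N_t\underline\lambda$ gives the $a_{n,t}$ term, and the filter adds $r_{\mathrm{Lap}}$. Slutsky's theorem under $\sqrt{N_t}(a_{n,t}+r_{\mathrm{Lap}})\to0$ then gives the oracle-centered limit.

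The main obstacle is the sparse-regime uniformity. Every curvature and variance bound must be stated at scale $\rho_n$, not at a constant, so the normalization by $N_t$ has to cancel the $\rho_n$ factors uniformly over $\Theta$. The tool for this is condition B2, $q\asymp\rho_n$ on all of $\Theta$, which turns local curvature into global curvature. I expect most of the care to go into checking this step.
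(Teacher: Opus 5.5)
Your proposal is correct and follows the paper's proof essentially step by step. For (a) you use the positive-weight quadratic form to identify the null space. For the rate you get curvature of order $N_t$ from the Gram condition together with the uniform sparse bound $q_{t,e}(1-q_{t,e})\ge c\rho_n$. For the normal limit you use a Taylor expansion of the score plus Lindeberg--Feller, Cram\'er--Wold and Slutsky. For the oracle statement you divide a score-difference bound by the curvature and apply the triangle inequality.

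The only real difference is in that last step. The paper compares the population targets $\bbeta^\circ_t$ and $\bbeta^{\mathrm{or}}_{t,n}$ directly, using the assumed population-score closeness and an oracle curvature bound on the segment between them. Comparing the fitted and oracle estimators, as you propose, additionally requires the rate result to hold for the oracle-feature problem itself.
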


The Fisher information for the weights is of order $n^2\rho_n$, so they are estimable from a single
network, without a time series. The case $T=1$, a single network on a fixed node set, is the static
link-prediction specialization of the same estimator. The rank condition in
Assumption~\ref{ass:design} is a diagnostic of identifiability: when it fails, two mechanisms are
aliased and their individual weights are not separately identified.

The limit of Theorem~\ref{thm:t1} is conditional on the fitted agents. When a flexible agent is
refit on the same snapshot, the conditional intervals can undercover. The following debiased
construction removes the bias. Split the synthesis dyads of
snapshot $t$ into $K$ folds $I_1,\dots,I_K$, estimate each agent surface on the dyads outside a
fold, and replace the synthesis score $s_e(\bbeta,\xi)=z_e(\xi)\{A_e-q_e(\bbeta,\xi)\}$ by a
fold-specific orthogonal score $\psi^\perp_{k,e}=s_e-c_{k,e}$, where the correction $c_{k,e}$ is
built from the nuisance influence function so that it annihilates the first-order effect of
agent-estimation error, $\partial_\xi\Psi^\perp_{n,k}(\bbeta^\circ,\xi^\circ)[h]=0$, while
preserving the projection target $\bbeta^\circ$. Let $\widetilde\bbeta$ solve the cross-fold
orthogonal estimating equation $\sum_{k}\sum_{e\in I_k}\psi^\perp_{k,e}(\bbeta,\widehat\xi^{-k})=0$,
with $\widehat\xi^{-k}$ measurable with respect to the out-of-fold data.

The synthesis is computed in two passes. The filtering pass of \eqref{eq:dlm} produces the
one-step-ahead forecast and the weight point estimates. The cross-fold orthogonal pass produces weight
estimates whose confidence intervals remain valid when the agents are refit on the same graph. The
forecasts below use the first pass; the weight intervals use the second.

This unconditional validity is the content of the cross-fold orthogonal limit. The orthogonal
estimator is asymptotically normal \emph{unconditionally} over the dyad draw, the random fold split,
and the agent refits under a score-scale agent-error rate of $o(N_t^{-1/4})$, far weaker than the
$o(N_t^{-1/2})$ the conditional limit needs. This second-order rate is met by agents whose parameter
count is fixed, so that each parameter is informed by order $n^2$ dyads: the SBM block probabilities
give $a_{n,t}^2\asymp K^2/(n^2\rho_n)$, hence $\sqrt{N_t}\,a_{n,t}^2\asymp K^2/(n\sqrt{\rho_n})\to0$.
It is not met in the sparse regime by agents whose $\Theta(n)$ latent parameters are each informed by
only $\Theta(n)$ dyads: the GRDPG positions and Chung--Lu degrees give $a_{n,t}^2\asymp d/(n\rho_n)$
and $1/(n\rho_n)$, so $\sqrt{N_t}\,a_{n,t}^2\asymp d/\sqrt{\rho_n}$ and $1/\sqrt{\rho_n}$ do not
vanish. The unconditional Wald intervals are therefore established for the finite-dimensional
community weight; for every agent the conditional limit of Theorem~\ref{thm:t1}, centered at the
fitted projection, supplies valid intervals given the fitted features. The construction is dyadic
double/debiased machine learning \citep{chiang2026double} specialized to a structure those methods do
not address: the nuisances are network models refit on the same graph on which the weights are estimated, and
the target is the vector of interpretable mechanism weights.

\subsection{A separation rate for distinguishing mechanisms}

The scale $n^2\rho_n$ also determines how far apart two mechanisms must be to be distinguished from a
single network snapshot. The natural object is a
\emph{separation rate} in graphon log-odds, not a density threshold; a hard $n^2\rho_n=O(1)$
regime is excluded by Assumption~\ref{ass:sparse} (which forces $n^2\rho_n\ge cn\log n\to
\infty$), so the boundary lives at vanishing contrast inside the assumed regime.

The argument rests on a perturbation lemma for sparse Bernoulli-logit models: on the sparse scale,
the divergence between two edge laws is controlled by the dyadic log-odds gap.

\begin{lemma}[Sparse Bernoulli-logit perturbations]
\label{lem:sparse}
Let $p,q\in(0,1)$ with $x=\logit q-\logit p$, and suppose $c_0\rho\le p,q\le C_0\rho\le\tfrac12$
and $|x|\le\eta_0$ for fixed constants. Then there are $0<c<C<\infty$, depending only on
$c_0,C_0,\eta_0$, such that the Kullback--Leibler divergence, the negative log Hellinger
affinity, and the chi-square divergence between $\mathrm{Bern}(q)$ and $\mathrm{Bern}(p)$ are
all between $c\,p(1-p)x^2$ and $C\,p(1-p)x^2$ (the chi-square only bounded above), with the
same bounds when $p,q$ are interchanged; and the per-dyad log-likelihood-ratio has variance at
most $C\,p(1-p)x^2$.
\end{lemma}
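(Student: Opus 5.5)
The plan is to reduce every divergence to a one-dimensional calculation in the logit coordinate. Fix $p$, write $\lambda=\logit p$, and set $q(s)=\sigma(\lambda+sx)$ for $s\in[0,1]$, so that $q(0)=p$ and $q(1)=q$. Along this path, $q'(s)=x\,q(s)(1-q(s))$. The key preliminary fact is that the hypotheses force $q(1-q)\asymp p(1-p)$ along the whole path. The reason is that $q(s)/p=e^{sx}(1-q(s))/(1-p)$, while $e^{-\eta_0}\le e^{sx}\le e^{\eta_0}$ and $1-q(s),1-p\in[1/2,1]$. Hence $q(s)\in[c_1p,C_1p]$ with constants depending only on $\eta_0$, and $q(s)(1-q(s))\in[c_1p(1-p)/2,\,2C_1p(1-p)]$. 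The sparsity scale $\rho$ and the constants $c_0,C_0$ enter only through $p,q\le 1/2$. Everything below is a Taylor expansion whose remainder is controlled by this uniform comparability.

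\textbf{KL.} Define $f(s)=\KL\{\Bern(q(s)),\Bern(p)\}$. Then $f(0)=f'(0)=0$, and, since the Bernoulli family is exponential in the logit, $f''(s)$ equals $x^2\,q(s)(1-q(s))$ plus a term equal to the derivative of the variance times $sx\cdot x$. A cleaner route is to use the exact identity $\KL\{\Bern(q),\Bern(p)\}=\int_0^1 (1-s)\,x^2\,q(s)(1-q(s))\,ds$, which is the Bregman-divergence representation for the log-partition function $\psi(\lambda)=\log(1+e^\lambda)$ with $\psi''=\sigma(1-\sigma)$. Strictly, this integral gives $\KL(\Bern(p)\Vert\Bern(q))$ with the path started at the other endpoint, but both orderings are of the same form. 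Plugging in the comparability bounds gives $\KL\in[c\,p(1-p)x^2,\,C\,p(1-p)x^2]$ in both directions.

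\textbf{Hellinger and chi-square.} For the negative log Hellinger affinity, I would use $-\log\mathrm{BC}=\psi(\lambda)/2+\psi(\lambda+x)/2-\psi(\lambda+x/2)$. This is again a second difference of $\psi$, equal to $\tfrac{x^2}{4}$ times an average of $\psi''$ over $[\lambda,\lambda+x]$ (Jensen gap / integral form), so the same sandwich applies. For chi-square, I would compute directly $\chi^2=(q-p)^2/\{p(1-p)\}$ and bound $|q-p|=|\int_0^1 q'(s)ds|\le |x|\,2C_1p(1-p)$. This gives $\chi^2\le C p(1-p)x^2$, and symmetry is immediate because $q(1-q)\asymp p(1-p)$.

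\textbf{Log-likelihood ratio.} Finally, the log-likelihood ratio for one dyad with $A\sim\Bern(p)$ is $\ell=A\,x+\psi(\lambda)-\psi(\lambda+x)$. This is affine in $A$, so $\mathrm{Var}(\ell)=x^2p(1-p)$ exactly, and under $\Bern(q)$ it is $x^2q(1-q)\le Cp(1-p)x^2$.

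The only step needing care is ensuring that the constants in the comparability of $\psi''$ along the path do not depend on $\rho$. This is where $p,q\le1/2$ (so $1-q\ge 1/2$) and $|x|\le\eta_0$ are used. Note also that $c_0,C_0$ are essentially redundant once $|x|\le\eta_0$ holds.
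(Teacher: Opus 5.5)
Your proof is correct, and it takes a genuinely different route from the paper's. The paper stays in the mean parameter. It uses the explicit formula $q-p=p(1-p)(e^x-1)/(1-p+pe^x)$ to get $(q-p)^2\asymp p^2(1-p)^2x^2$. It then obtains the KL bound from a Taylor expansion in $q$ with second derivative $1/\{r(1-r)\}$ at an intermediate point. The Hellinger bound comes from $2(1-A)=(\sqrt p-\sqrt q)^2+(\sqrt{1-p}-\sqrt{1-q})^2$ together with $-\log A\asymp 1-A$. The variance bound comes from bounding the second moment using $\log(q/p)=O(x)$ and $\log\{(1-q)/(1-p)\}=O(px)$. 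So the paper treats each divergence separately through its standard closed form, and each step leans on $p\asymp q$ and $1-p\asymp 1-q\asymp 1$.

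You work instead in the natural parameter, exploiting the exponential-family structure. KL becomes a Bregman divergence of $\psi(\lambda)=\log(1+e^\lambda)$, the negative log affinity becomes a midpoint second difference of $\psi$, and the log-likelihood ratio is affine in the observation. Everything then reduces to the single comparability of $\psi''$ along the segment from $\lambda$ to $\lambda+x$. This buys exact identities instead of two-sided estimates: for instance, the variance is exactly $x^2q(1-q)$, which is sharper than the paper's second-moment bound. It also gives explicit constants.

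Your route in fact proves more than you state. Since $(\log\psi'')'=1-2\sigma$ has absolute value at most one, $\psi''(\lambda+sx)/\psi''(\lambda)\in[e^{-\eta_0},e^{\eta_0}]$ for every $\lambda$ and every $s\in[0,1]$. So not only $c_0,C_0$ but also the restriction $p,q\le\tfrac12$ is dispensable. The lemma holds on all of $(0,1)$, with constants depending on $\eta_0$ alone; sparsity matters only later, when the per-dyad bounds are aggregated.

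One harmless slip: the midpoint second difference equals $x^2/8$, not $x^2/4$, times a weighted average of $\psi''$ over the segment. This changes only the constants.
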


\begin{theorem}[Mechanism separation rate and tracking]
\label{thm:t2}
Let two mechanism configurations produce sparse edge laws $p_e,q_e\asymp\rho_n$ over
$m_{t,n}\asymp n^2$ conditionally independent dyads, with bounded dyadic log-odds gap
$x_e=\logit q_e-\logit p_e$, and aggregate Fisher signal $S_n=\sum_e p_e(1-p_e)x_e^2\asymp
n^2\rho_n\Delta_n^2$ under a uniform gap $|x_e|\asymp\Delta_n$. Let the synthesis track with
discount $\delta$.
\begin{enumerate}[label=(\alph*),leftmargin=1.9em,itemsep=2pt]
\item \textbf{(Achievability and tracking.)} If $S_n\to\infty$ (equivalently
$n^2\rho_n\Delta_n^2\to\infty$), the likelihood-ratio test separates the two configurations
from a single snapshot with total error tending to zero, and between switches the filtered
synthesis recovers the active weight at the per-snapshot squared rate $O((n^2\rho_n)^{-1})$,
the drift over the horizon contributing the discounted tracking regret
$O(\bar N\,\mathcal V/(1-\delta)^2)$ of Theorem~\ref{thm:t3}.
\item \textbf{(Impossibility.)} If $S_n=O(1)$ (equivalently $n^2\rho_n\Delta_n^2=O(1)$), the
two snapshot laws are mutually contiguous and no test, hence no measurable functional of the
data, separates them consistently: no test has power tending to one, and the minimax total testing
error is bounded below by a positive constant; in the limiting case $S_n\to0$ the power of every test
tends to its size.
\end{enumerate}
The critical separation rate is therefore $\Delta_n\asymp(n^2\rho_n)^{-1/2}$: two mechanisms
are distinguishable from one snapshot precisely when their graphon log-odds differ by more than
$(n^2\rho_n)^{-1/2}$, and affinely dependent mechanisms ($\Delta_n=0$) are never distinguishable
at any density.
\end{theorem}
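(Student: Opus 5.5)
The plan is to reduce both parts to the per-dyad bounds of Lemma~\ref{lem:sparse} and then tensorize over the $m_{t,n}\asymp n^2$ conditionally independent dyads. Write $P_n=\bigotimes_e\mathrm{Bern}(p_e)$ and $Q_n=\bigotimes_e\mathrm{Bern}(q_e)$ for the two snapshot laws. Lemma~\ref{lem:sparse} applies dyadwise, since $p_e,q_e\asymp\rho_n$ and $|x_e|\le\eta_0$. Additivity of KL and of the negative log Hellinger affinity under products then gives
\[
c\,S_n\le \KL(Q_n\|P_n)\le C\,S_n,\qquad -\log\rho_H(P_n,Q_n)\asymp S_n .
\]
The log-likelihood ratio $L_n=\sum_e \ell_e$ is a sum of independent terms with $\mathrm{Var}(L_n)\le C S_n$. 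Under a uniform gap, $S_n\asymp m_{t,n}\rho_n\Delta_n^2\asymp n^2\rho_n\Delta_n^2$, so every statement reduces to the behaviour of $S_n$. The critical rate $\Delta_n\asymp(n^2\rho_n)^{-1/2}$ is the solution of $S_n\asymp1$.

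For achievability in part (a), I would bound the total error of the likelihood-ratio test with threshold $0$ by the Hellinger affinity:
\[
P_n(L_n>0)+Q_n(L_n\le0)\le 2\rho_H(P_n,Q_n)\le 2e^{-cS_n}\to0 .
\]
Alternatively, Chebyshev applies directly: $E_P L_n=-\KL(P_n\|Q_n)\le -cS_n$ and $\mathrm{Var}_P L_n\le CS_n$, so $P_n(L_n>0)\le C/(c^2S_n)\to0$, and symmetrically under $Q_n$.

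For the tracking clause I would not give a new argument. Between switches the comparator is constant, so the conditional rate \eqref{eq:t1rate} of Theorem~\ref{thm:t1} with $N_t\asymp n^2\rho_n$ gives squared error $O((J+1)/(n^2\rho_n))$ per snapshot. The drift contribution is taken verbatim from Theorem~\ref{thm:t3}, which the statement explicitly invokes. The only point to check is that the filtered estimate inherits the one-snapshot rate when the prior is the discounted posterior from $t-1$. I would handle this by treating that prior as a weak penalty satisfying condition C4, so the Theorem~\ref{thm:t1}(b) expansion still holds.

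For impossibility in part (b), when $S_n=O(1)$ I would show that $\KL(Q_n\|P_n)\le CS_n$ and $\KL(P_n\|Q_n)\le CS_n$ stay bounded. Le Cam's inequality then bounds the minimax total error below: it is at least $1-\mathrm{TV}(P_n,Q_n)$, and Bretagnolle--Huber gives $\inf_\phi\{P_n\phi+Q_n(1-\phi)\}\ge \tfrac12 e^{-\KL}\ge\tfrac12e^{-CS_n}$, a positive constant. For mutual contiguity I would use Le Cam's first lemma. $L_n$ under $P_n$ has mean bounded by $CS_n$ and variance $\le CS_n$, so it is tight, and since $\{e^{L_n}\}$ is uniformly integrable (bounded second moment via the chi-square bound of Lemma~\ref{lem:sparse}, $E_P e^{2L_n}=\prod_e(1+\chi^2_e)\le e^{CS_n}$), $Q_n\triangleleft P_n$; the reverse direction is symmetric. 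Contiguity rules out any test with power tending to one at size tending to zero. When $S_n\to0$, Pinsker gives $\mathrm{TV}\le\sqrt{\KL/2}\to0$, so $|Q_n\phi-P_n\phi|\to0$ for every test $\phi$, i.e.\ power tends to size. The final sentence of the theorem follows because an affine dependence forces $x_e\equiv0$, hence $S_n=0$ and $P_n=Q_n$ at every density.

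The main obstacle I expect is the tracking clause rather than the testing bounds. The testing bounds are routine once Lemma~\ref{lem:sparse} is available. The tracking clause, by contrast, needs the one-snapshot rate of Theorem~\ref{thm:t1} to hold for the filtered (prior-penalized, Laplace-approximated) estimator, not just the cross-fitted MLE. I would first try absorbing the discounted prior precision, which is of order $1/(1-\delta)$ and hence $O(1)\ll N_t$, into the weak-penalty condition, and controlling $r_{\mathrm{Lap}}$ by the Newton-step contraction at the mode.
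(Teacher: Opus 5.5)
Your two testing arguments are correct, but the plan for the tracking clause breaks at the step you flagged: the discounted prior entering at time $t$ is not a weak penalty in the sense of condition (C4) of Theorem~\ref{thm:t1supp}.

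The filter analyzed in Theorem~\ref{thm:t3supp} is, up to the implementation error $\varepsilon_t$, the minimizer of $\sum_{s\le t}\delta^{t-s}\ell_s(\bbeta)$. Its history part $\delta\sum_{s\le t-1}\delta^{t-1-s}\ell_s(\bbeta)$ therefore has curvature of order $\delta N^\delta_{t-1}$. That is already of order $N_t$ at $t=2$, and $\asymp N/(1-\delta)$ in a settled regime; this is the pooling to effective information $N_t/(1-\delta)$ that the main text describes. A precision of order $1/(1-\delta)$ is what an additive random walk with fixed evolution variance $(1-\delta)I$ would give. That filter, however, essentially discards the history, and it is not the estimator whose drift term you import from Theorem~\ref{thm:t3}.

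With the correct order, both halves of (C4) fail. The penalty Hessian is not $o(N_t)$. The penalty gradient at $\bar\bbeta_{t,n}$ has norm of order $\sqrt{N^\delta_{t-1}}$ from the sampling noise of the history alone, plus up to $N^\delta_{t-1}$ times the discount-weighted comparator movement, rather than $o(\sqrt{N_t})$. So the expansion of Theorem~\ref{thm:t1}, whose only leading term is the current snapshot's score, does not describe the filtered estimate. In particular it cannot see the bias carried in from pre-switch snapshots, which is exactly what ``between switches'' has to control.

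The clause should be read off Theorem~\ref{thm:t3} directly, as the statement itself signals. The paper's supplementary proof (Theorem~\ref{thm:t2supp}) covers only the two testing statements and leaves the tracking clause to Theorem~\ref{thm:t3supp}.
\begin{itemize}
\item (T3.1) gives $\|\widehat\bbeta_{t\mid t}-\bbeta^\circ_t\|_2\le C\{\eta_t(\alpha)+B^\delta_t+\varepsilon_t\}$.
\item Strong convexity at scale $N^\delta_t\ge N_t\asymp n^2\rho_n$ and Freedman's inequality give $\eta_t(\alpha)^2\lesssim(J+1)\lambda_T(\alpha)/N^\delta_t$.
\item After an isolated switch at $\nu$, the drift bias obeys $B^\delta_{\nu+h}\le CD\delta^h$.
\end{itemize}
Once $D\delta^h\lesssim(n^2\rho_n)^{-1/2}$, the squared error is $O\{(J+1)\lambda_T(\alpha)/(n^2\rho_n)+\varepsilon_t^2\}$. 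This is the stated per-snapshot rate up to the logarithmic factor (avoidable for a pointwise $O_p$ statement) and the implementation floor. (T3.6) then supplies the drift term $\bar N\mathcal V/(1-\delta)^2$. Theorem~\ref{thm:t3supp} takes the implementation error as the assumed event $\mathcal A_T$, so no Newton-contraction control of $r_{\mathrm{Lap}}$ is needed for this clause.

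On the testing parts your route differs from the paper's and is in one respect stronger. The paper tensorizes $\chi^2$ and concludes from $\mathrm{TV}\le\tfrac12\sqrt{\chi^2(Q_n\|P_n)}$ that the total variation stays bounded away from one. That inequality certifies $\mathrm{TV}<1$ only when $\chi^2(Q_n\|P_n)<4$, i.e.\ for small $S_n$. Your Bretagnolle--Huber bound $1-\mathrm{TV}\ge\tfrac12e^{-CS_n}$ covers every $S_n=O(1)$. Your second-moment bound $\E_{P_n}(dQ_n/dP_n)^2\le e^{CS_n}$, with its mirror image, actually proves the mutual contiguity the paper only asserts. For achievability, your Hellinger bound on the threshold-zero likelihood-ratio test gives error of order $e^{-cS_n}$. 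That is sharper than the $O(1/S_n)$ the paper obtains from Chebyshev with the midpoint threshold.
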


\begin{remark}[Temporal tracking versus single-graph distinguishability]
\label{rmk:vs-minimax}
Theorem~\ref{thm:t2} concerns \emph{tracking a moving target} and is distinct from asking
whether two mechanisms can be distinguished in one fixed graph, the static question studied by
\citet{papamichalis2026minimax}. The two boundaries share the scale
$n^2\rho_n$ for the same elementary reason, that a snapshot's order-$n^2$ dyads carry Fisher information of order $n^2\rho_n$, but the operative
quantity here is the discounted accumulated separation
$n^2\rho_n\,\Delta_n^2\,(1-\delta^h)/(1-\delta)$ over an $h$-snapshot window, which over the $O(1)$
snapshots that localization takes reduces to $\asymp h\,n^2\rho_n\Delta_n^2$ (the $1/(1-\delta)$
canceling, in agreement with the local rate of Theorem~\ref{thm:t3}) and tends to
$n^2\rho_n\Delta_n^2/(1-\delta)$ in a stable regime; this discount interaction
has no counterpart in the static problem.
\end{remark}

\subsection{Dynamic tracking, switch recovery, and mechanism localization}

Because the weights are learned per snapshot, the filter tracks a switching mechanism by
paying only for the switches, and, when the margin condition holds, it identifies the active
mechanism once the discounted information exceeds the margin between mechanisms. The following
theorem establishes the bounded recovery delay and the margin condition under which the active
mechanism is identified, and is tested by the weight-trajectory experiments of
Section~\ref{sec:experiments}.

\begin{theorem}[Prequential tracking, switch recovery, and localization]
\label{thm:t3}
Let $\widehat\bbeta_{t\mid t-1}$ be the one-step-ahead filtered state. Under
Assumptions~\ref{ass:sparse}--\ref{ass:state} with cross-fitting,
\begin{equation}
\label{eq:regret}
\begin{split}
\sum_{t=1}^{T} |\mathcal E_t|\big\{R_t(\widehat\bbeta_{t\mid t-1})-R_t(\bbeta^\circ_t)\big\}
\;\le\;
C\Big[&(J{+}1)\,\lambda_T\big(1+\log T+T(1-\delta)\big)\\
&+ \frac{\bar N}{(1-\delta)^2}\!\sum_{t=2}^T\|\bbeta^\circ_t-\bbeta^\circ_{t-1}\|^2
+ \bar N\!\sum_{t=1}^T \varepsilon_t^2\Big],
\end{split}
\end{equation}
where $\lambda_T=\log(2(J{+}1)T/\alpha)$, $\bar N\asymp N_t\asymp n^2\rho_n$ is the per-snapshot
information, and the implementation error obeys $\varepsilon_t\le a_{n,t}+r_{\mathrm{Lap}}$. The left
side is the cumulative excess one-step log-loss: $|\mathcal E_t|\{R_t(\cdot)-R_t(\bbeta^\circ_t)\}$ is
the excess expected log-loss of snapshot $t$ in the per-dyad normalization of \eqref{eq:popbeta}.
For piecewise-stable comparators with $S$ switches of size at most $D$ the movement is
$\sum_{t}\|\bbeta^\circ_t-\bbeta^\circ_{t-1}\|^2\le SD^2$, so its contribution is at most
$C\bar N SD^2/(1-\delta)^2$: the synthesis charges for switches, not for every time point. The
discounted-filter estimation term $(J{+}1)\lambda_T(1+\log T+T(1-\delta))$ and the
agent-and-linearization floor $\bar N\sum_t\varepsilon_t^2$ are lower order when the agents are
well estimated and the Laplace step is accurate. Suppose an active
mechanism $r_t\in\{\mathrm{SBM},\mathrm{GRDPG},\mathrm{CL},\mathrm{AA}\}$ has margin
$\theta^\circ_{t,r_t}-\max_{j\ne r_t}\theta^\circ_{t,j}\ge\kappa>0$, and let
$N_t^\delta=\sum_{s\le t}\delta^{t-s}N_s$ be the discounted effective information. Then
\begin{equation}
\label{eq:localize}
\|\widehat\bbeta_{t\mid t}-\bbeta^\circ_t\|_\infty
= O_p\!\Big(\sqrt{\tfrac{\log(JT)}{N_t^\delta}} + \frac{\sum_{s\le t}\delta^{t-s}N_s
\|\bbeta^\circ_s-\bbeta^\circ_t\|}{N_t^\delta} + a_{n,t}+r_{\mathrm{Lap}}\Big),
\end{equation}
and whenever the right side is below $\kappa/2$, $\Prob\{\argmax_j\widehat\theta_{t,j}=r_t\}
\to1$. Localization is defined in coefficient space: the agent features $u^{(j)}_{t,e}$ share the
centered-logit scale and are bounded in the same range, so the margin compares commensurable
quantities; for a scale-invariant statement one may compare the signed standardized contributions
$\theta_{t,j}\,H_{t,jj}^{1/2}$, to which the same margin argument applies. After a switch the recovery delay is of order
$h=O\big(\log(JT/\alpha)/(N\kappa^2)\;\vee\;\log(D/\kappa)/|\log\delta|\big)$.
\end{theorem}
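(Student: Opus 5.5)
The regret bound (eq:regret) comes from an online-Newton (exp-concave) argument for the discounted Laplace filter. The localization bound (eq:localize) comes from a weighted M-estimation argument that reuses the one-snapshot analysis of Theorem~\ref{thm:t1}, summed with discount weights.

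\textbf{Reduction to an online Newton step.} Write the per-snapshot conditional loss $L_t(\bbeta)=-\sum_{e\in\mathcal E_t}\{A_{t,e}\log q_{t,e}(\bbeta)+(1-A_{t,e})\log(1-q_{t,e}(\bbeta))\}$. Its expectation given $\F_{t,n}$ equals $|\mathcal E_t|R_t(\bbeta)$ plus a constant.

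On the compact $\Theta$, the sparse-scale bounds $c_q\rho_n\le q\le C_q\rho_n$ and $\|z_{t,e}\|\le B$ hold. With Assumption~\ref{ass:design}, these make $L_t$ curvature-bounded: $c\,N_tH_t\preceq\nabla^2L_t\preceq C\,N_tH_t$, and $\lambda_{\min}(H_t)\ge\underline\lambda$. So $L_t$ is $\eta$-exp-concave in the geometry of $N_tH_t$, with $\eta$ depending only on $B,c_q,C_q$.

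The Laplace/extended-Kalman update with discount $\delta$ is exactly an online Newton step. The precision $P_t=\delta P_{t-1}+\nabla^2L_t(\widehat\bbeta_{t\mid t})$ plays the role of the accumulated Hessian, with the gradient step taken at the filtered mode. The difference between the implemented step and the exact Newton step is carried by the error $\varepsilon_t\le a_{n,t}+r_{\mathrm{Lap}}$. By curvature, this error costs at most $\bar N\varepsilon_t^2$ per snapshot.

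\textbf{Regret bound (eq:regret).} I then run the standard dynamic-regret analysis of discounted ONS against a moving comparator $\bbeta^\circ_t$, in three steps.
\begin{enumerate}[label=(\roman*),leftmargin=1.9em,itemsep=2pt]
\item The telescoping potential $\|\widehat\bbeta-\bbeta^\circ\|_{P_t}^2$ gives the elliptical-potential term $\sum_t\nabla L_t^\top P_t^{-1}\nabla L_t\lesssim(J{+}1)\log\det$. Under discounting this is $(J{+}1)(1+\log T+T(1-\delta))$.
\item The mismatch between $\delta P_{t-1}$ and $P_{t-1}$ together with comparator drift gives $\|\bbeta^\circ_t-\bbeta^\circ_{t-1}\|^2_{P_t}$ terms. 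Since $\|P_t\|\lesssim\bar N/(1-\delta)$, the cross terms, handled by Young's inequality, contribute $\bar N(1-\delta)^{-2}\|\bbeta^\circ_t-\bbeta^\circ_{t-1}\|^2$.
\item Passing from realized to expected loss uses a Freedman martingale bound. The score variance is at most $N_t$, by Lemma~\ref{lem:sparse}'s variance bound applied to the log-likelihood ratio. A union bound over $J{+}1$ coordinates and $T$ times, at level $\alpha$, produces the factor $\lambda_T$.
\end{enumerate}
Cross-fitting makes $z_{t,e}$ measurable with respect to $\F_{t,n}$, so the martingale structure holds. The piecewise-stable specialization with $S$ switches of size at most $D$ is then immediate.

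\textbf{Localization bound (eq:localize).} The filtered mode $\widehat\bbeta_{t\mid t}$ approximately minimizes the discounted loss $\sum_{s\le t}\delta^{t-s}L_s(\bbeta)$. I expand the score around $\bbeta^\circ_t$ with Hessian $\sum_s\delta^{t-s}N_sH_s\succeq\underline\lambda N_t^\delta$. This splits the error into a noise part and a bias part.
\begin{itemize}[leftmargin=1.4em,itemsep=2pt]
\item The noise part is $\sum_s\delta^{t-s}\sum_e(A_{s,e}-p_{s,e})z_{s,e}$, with variance of order $N_t^\delta$ in each coordinate. Bernstein's inequality plus a union bound over $J$ coordinates and $T$ times gives $\sqrt{\log(JT)/N_t^\delta}$ in sup norm.
\item The bias part comes from the population scores at $\bbeta^\circ_s\neq\bbeta^\circ_t$. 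By the curvature bounds it is at most $\sum_s\delta^{t-s}N_s\|\bbeta^\circ_s-\bbeta^\circ_t\|/N_t^\delta$.
\end{itemize}
The implementation error enters additively, as in Theorem~\ref{thm:t1}. When the total is below $\kappa/2$, the margin condition forces $\argmax_j\widehat\theta_{t,j}=r_t$.

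\textbf{Recovery delay.} Consider $h$ steps after a jump of size at most $D$. The pre-switch bias is $\lesssim D\delta^h$ times a bounded factor, which falls below $\kappa/4$ once $h\ge\log(D/\kappa)/|\log\delta|$. The noise term falls below $\kappa/4$ once $hN\gtrsim\log(JT/\alpha)/\kappa^2$, since $N_t^\delta\ge hN$ roughly. Together these give the stated order of $h$.

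\textbf{Main obstacle.} I expect the hardest step to be making the exp-concavity and curvature constants uniform despite $q\asymp\rho_n\to0$. The Hessian scales as $\rho_n$ but the gradient scales as $\sqrt{\rho_n}$ in the appropriate norm. The regret must therefore be written in the $N_tH_t$-weighted geometry rather than the Euclidean one, so that no $1/\rho_n$ factors leak in. I would work throughout in the rescaled parameterization $\sqrt{N_t}H_t^{1/2}\bbeta$ and invoke Lemma~\ref{lem:sparse} to make KL and quadratic forms equivalent up to constants.
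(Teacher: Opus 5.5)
\paragraph{Verdict.} Your localization and recovery-delay arguments are essentially the paper's: the discounted minimizer, the split into drift bias and martingale fluctuation under curvature of order $N_t^\delta$, a union bound over coordinate--time pairs, the margin argument, and $D\delta^h$ decay of the pre-switch bias. The gap is in the regret bound.

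\paragraph{Where the regret argument fails.} Your claim that the snapshot loss is $\eta$-exp-concave with $\eta$ depending only on $B,c_q,C_q$ is false.
\begin{itemize}
\item For one dyad with $A_{t,e}=1$, the condition $\nabla^2\succeq\eta\nabla\nabla^\top$ reads $q(1-q)\ge\eta(1-q)^2$, so $\eta\lesssim\rho_n$.
\item For the whole snapshot, $\nabla\ell_t(\bbeta)\approx N_tH_t(\bbeta-\bbeta^\circ_t)$ plus $O_p(\sqrt{N_t})$ noise, while $\nabla^2\ell_t\asymp N_t$. On a fixed-diameter $\Theta$ this forces $\eta\lesssim 1/(\bar N\,\mathrm{diam}^2)$.
\item Exp-concavity, and the product $GD$, are invariant under the linear map $\bbeta\mapsto\sqrt{N_t}H_t^{1/2}\bbeta$. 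In those coordinates the domain has diameter of order $\sqrt{\bar N}$, so the ONS constant $(1/\eta+GD)(J{+}1)\log T$ is of order $\bar N(J{+}1)\log T$, a factor $\bar N$ too large. Rescaling cannot fix this.
\item Step (i) fails for the same reason. The elliptical-potential bound needs $g_tg_t^\top\preceq c(P_t-\delta P_{t-1})$. With Hessian-built $P_t$ and $\|g_t\|\approx\bar N\|\widehat\bbeta_{t\mid t-1}-\bbeta^\circ_t\|$, this holds only once the predicted state is within $O(\bar N^{-1/2})$ of the comparator, that is, after you have already localized.
\item You also idealize the same filter two ways: as a single ONS step for the regret and as the discounted minimizer for localization. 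One $\varepsilon_t$ cannot bound the implementation error against both unless you show the two are close.
\end{itemize}

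\paragraph{How to repair it.} You need no online-learning machinery. The regret follows from the uniform-in-$t$ high-probability bound your localization step already gives:
$\|\widehat\bbeta_{t\mid t}-\bbeta^\circ_t\|\le C\{\sqrt{(J{+}1)\lambda_T/N^\delta_t}+B^\delta_t+\varepsilon_t\}$, with $B^\delta_t=\sum_{s\le t}\delta^{t-s}N_s\|\bbeta^\circ_s-\bbeta^\circ_t\|/N^\delta_t$. The union bound over the $(J{+}1)T$ coordinate--time pairs is where $\lambda_T$ comes from. The paper then argues as follows.
\begin{itemize}
\item Since $\nabla L_t(\bbeta^\circ_t)=0$ and $\nabla^2L_t\preceq MN_tI$, the excess loss is at most $CN_t\|\bbeta-\bbeta^\circ_t\|^2$.
\item With $\widehat\bbeta_{t\mid t-1}=\widehat\bbeta_{t-1\mid t-1}$ and $\Delta_t=\|\bbeta^\circ_t-\bbeta^\circ_{t-1}\|$, one has $\|\widehat\bbeta_{t\mid t-1}-\bbeta^\circ_t\|^2\le 2\|\widehat\bbeta_{t-1\mid t-1}-\bbeta^\circ_{t-1}\|^2+2\Delta_t^2$.
\item Because $N^\delta_t\asymp N\{t\wedge(1-\delta)^{-1}\}$, the stochastic term sums to $(J{+}1)\lambda_T\sum_tN_t/N^\delta_{t-1}\lesssim(J{+}1)\lambda_T(1+\log T+T(1-\delta))$.
\item The bias satisfies $B^\delta_t\le C\sum_{u\le t}\delta^{t-u}\Delta_u$, and Young's convolution inequality gives $\sum_t(B^\delta_t)^2\le C\sum_t\Delta_t^2/(1-\delta)^2$.
\item The implementation errors contribute $\bar N\sum_t\varepsilon_t^2$.
\end{itemize}

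\paragraph{Smaller points.} In the recovery delay, $N^\delta_{\nu+h}\gtrsim hN$ holds only for $h\lesssim(1-\delta)^{-1}$, because discounted information saturates at $N/(1-\delta)$. You therefore need the extra condition $N/(1-\delta)\gtrsim(J{+}1)\lambda_T/\kappa^2$ for the stochastic term ever to drop below $\kappa/4$. Also, $z_{s,e}$ and $p_{s,e}$ are predictable rather than fixed, so the cross-snapshot noise sum needs Freedman's inequality, not Bernstein's.
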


\begin{remark}[Model averaging has switch inertia]
\label{rmk:inertia}
Bayesian model averaging weights by cumulative marginal likelihood, so after a regime of
length $L$ its log-odds for the old mechanism are of order $L$ and it needs an
$\Omega(L)$-long window to switch; the discounted filter forgets in $O(1/(1-\delta))$ steps,
independent of $L$. The wrong-mechanism window of averaging grows with how long the previous
mechanism was dominant, whereas the synthesis's recovery delay does not.
\end{remark}

Theorem~\ref{thm:t3} is an upper bound: the discounted filter localizes the active mechanism
within $O(\log(JT/\alpha)/(N_t^\delta\kappa^2))$ post-switch snapshots and pays $O(SD^2)$ for
$S$ switches. A matching local minimax lower bound, stated and proved in Section~\ref{supp:s7} of the
Supplement, shows these rates are minimax-optimal up to constants, not merely attained: on the
one-switch sparse family over a uniformly non-aliased design, no procedure, even one told the
switch time, the agent arrays, the information level $N$, and the margin $\kappa$, localizes the
post-switch mechanism in fewer than $\log J/(n^2\rho_n\kappa^2)$ snapshots, any $\alpha$-reliable
declaration has expected delay at least $c\,\log(eJ/\alpha)/(n^2\rho_n\kappa^2)$, and the cumulative
reliable delay and switch regret obey matching $S\log(eJ/\alpha)/(n^2\rho_n\kappa^2)$ and $S\log J$
floors. The lower-bound constant comes from a two-point/Fano reduction and the upper-bound constant
from a Freedman inequality, so the matching is in rate, not in constant.

For $\delta$ bounded away from $1$, the discount affects only the constant, since the discounted post-switch information
$N(1-\delta^h)/(1-\delta)\asymp hN$ coincides with the undiscounted $hN$ over the $O(1)$ snapshots
localization takes, so no $1/(1-\delta)$ factor enters the rate.

\subsection{Universal misspecification: projection, calibration, and competitor gaps}

When no agent equals the truth, the weights converge to a log-odds projection, the intercept
enforces calibration through a score equation, and the synthesis improves on model averaging by a
computable gap.

\begin{theorem}[Projection, calibration, and BMA/stacking gaps]
\label{thm:t4}
Suppose no agent equals the truth.
\begin{enumerate}[label=(\alph*),leftmargin=1.9em,itemsep=2pt]
\item \textbf{(Projection and sparse limit.)} The one-snapshot weight converges at
$(n^2\rho_n)^{-1/2}$ to $\bbeta^\circ_t$, the log-odds projection \eqref{eq:popbeta} of the
true mechanism onto the set of agents. In the sparse graphon limit
$\rho_n^{-1}\Phat^{(j)}_t\to W^{(j)}_t$ and $\rho_n^{-1}p_t\to W^\circ_t$, the offset
synthesis tends to the log-linear form $\rho_n^{-1}q_t(\bbeta)\to
e^{\alpha_t}\prod_j(W^{(j)}_t)^{\theta_{tj}}=:W_t(\bbeta)$, and $\bbeta^\circ_t$ solves the
Bernoulli/KL (Poisson-type) variational problem $\argmin_{\bbeta}\int[W_t(\bbeta)-W^\circ_t\log
W_t(\bbeta)]$, not an $L^2$ projection.
\item \textbf{(Calibration.)} The score equations $\sum_e(p_{t,e}-q^\circ_{t,e})=0$ and
$\sum_e(p_{t,e}-q^\circ_{t,e})u^{(j)}_{t,e}=0$ hold: the intercept forces zero average
miscalibration and the slopes make the residual orthogonal to the mechanism features. Full
reliability additionally requires the approximation error of the set
$\inf_{\bbeta}\|p_t-q_t(\bbeta)\|_{2,t}$ to be small; the finite-sample bound is
$\mathrm{ECE}_b(\widehat q_t)\le C[\inf_{\bbeta}\|p_t-q_t(\bbeta)\|_{2,t}+\rho_n(\sqrt{(J{+}1)/N_t}
+\sqrt{b/N_t}+a_{n,t}+r_{\mathrm{Lap}})]$, where $b$ is the number of equal-mass calibration bins (distinct from the block count $K$); the estimation, binning, and agent terms enter on the sparse probability scale, hence the factor $\rho_n$.
\item \textbf{(Competitor gaps.)} Against Bayesian model averaging, which collapses onto the
single Kullback--Leibler-closest agent (or, under ties, the Kullback--Leibler-minimizing set) \citep{berk1966limiting}, with $e_j$ the coefficient vector $(\alpha_t,\theta_t)=(0,\mathbf 1_j)$ for which the offset gives $q_t(e_j)=\Phat^{(j)}_t$ exactly,
\begin{equation}
\label{eq:gap}
R_t(\mathrm{BMA})-R_t(\mathrm{BPS}) \;\to\; \min_j R_t(e_j)-\inf_{\bbeta\in\Theta}R_t(\bbeta)
\;\ge\;0,
\end{equation}
strict whenever the affine-logit projection strictly improves on every single agent (a Kullback--Leibler/Bregman
Pythagorean relation for the affine-natural-parameter family of~(a) at an interior projection, and a
convexity inequality at the boundary). Against convex stacking we claim no universal dominance, the families
differ; but when the truth lies at curvature-distance $d_t$ outside the convex probability
hull of the agents and the affine-logit approximation error $\epsilon_t$ is at most $\tfrac12 c\,d_t^2$, the projection improves on convex stacking, $\inf_{w\in\Delta_J}R_t(\sum_j w_j\Phat^{(j)}_t)-\inf_{\bbeta}
R_t(\bbeta)\ge \tfrac12 c\,d_t^2$.
\end{enumerate}
\end{theorem}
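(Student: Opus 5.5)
The three parts of Theorem~\ref{thm:t4} are proved in order, all from the convex population risk $R_t$ of \eqref{eq:popbeta}. For a Bernoulli outcome with affine natural parameter $\eta_e(\bbeta)=\log\rho_n+\bbeta\trans z_{t,e}$, the quantity $\KL\{\mathrm{Bern}(p_e),\mathrm{Bern}(q_e(\bbeta))\}$ equals, up to terms free of $\bbeta$, the convex function $\log(1+e^{\eta_e})-p_e\eta_e$. Its gradient is $|\mathcal E_t|^{-1}\sum_e(q_e-p_e)z_{t,e}$ and its Hessian is $(N_t/|\mathcal E_t|)H_t$.

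\textbf{Part (a).} Under Assumption~\ref{ass:design} the Hessian is positive definite, so $\bbeta^\circ_t$ is the unique minimizer. Theorem~\ref{thm:t1}(b) is stated for a general truth $p_{t,e}$, with $V_t\neq H_t$ allowed. Hence it already gives $\|\widehat\bbeta_t-\bbeta^\circ_t\|=O_p((n^2\rho_n)^{-1/2})$ toward this projection without correct specification, and I will simply cite it. For the sparse limit I will write $\rho_n^{-1}q_e(\bbeta)=e^{\alpha}\prod_j(\rho_n^{-1}\Phat^{(j)}_e)^{\theta_j}\{1+O(\rho_n)\}$. This holds because $\logit\Phat=\log\Phat+O(\rho_n)$ and $\sigma(\eta)=e^\eta(1+O(\rho_n))$ on the sparse scale. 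Then I will expand the KL to leading order in $\rho_n$:
\[
\KL\{\mathrm{Bern}(p),\mathrm{Bern}(q)\}=\rho_n\big[w^\circ\log(w^\circ/w)-w^\circ+w\big]+O(\rho_n^2),
\]
with $p=\rho_n w^\circ$ and $q=\rho_n w$. This is the Poisson KL. After dividing by $\rho_n$ and passing to the graphon limit via a Riemann-sum argument, $R_t/\rho_n$ converges uniformly on compact $\Theta$ to $\int[W_t(\bbeta)-W^\circ_t\log W_t(\bbeta)]$ plus a constant. Uniform convergence of convex functions together with a unique limiting minimizer gives convergence of the argmin.

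\textbf{Part (b).} The score equations are the first-order condition $\nabla R_t(\bbeta^\circ_t)=0$ at an interior minimizer, read coordinatewise: the intercept coordinate gives zero average residual and the slope coordinates give orthogonality to $u^{(j)}$. For the ECE bound I will fix $b$ equal-mass bins of $\widehat q_t$. I then decompose the bin-wise gap $|\bar p_B-\bar{\widehat q}_B|$ into three pieces:
\begin{itemize}
\item the deviation $p-q^*$ from the best approximant in $L^2$, controlled by Cauchy--Schwarz by $\inf_{\bbeta}\|p_t-q_t(\bbeta)\|_{2,t}$;
\item the deviation $q^*-q^\circ$ and estimation error $\widehat q-q^\circ$, bounded by $\rho_n$ times the coefficient error. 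This uses $|\partial q/\partial\bbeta|\le q|z|\lesssim\rho_nB$ and Theorem~\ref{thm:t1} together with its $a_{n,t}+r_{\mathrm{Lap}}$ extension;
\item the sampling noise of bin averages, of order $\rho_n\sqrt{b/N_t}$ by Bernstein's inequality with per-bin variance about $\rho_n\cdot|\mathcal E_t|/b$.
\end{itemize}
One subtlety is that the bins depend on $\widehat q$. I will handle it either with the cross-fitting or with a union bound over the VC class of threshold bins.

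\textbf{Part (c).} The BMA limit follows from Berk's theorem: the log marginal likelihood ratios between agents are $|\mathcal E_t|\{R_t(e_k)-R_t(e_j)\}+O_p(\sqrt{N_t})$, so the posterior weights concentrate on $\argmin_j R_t(e_j)$. BMA's risk therefore tends to $\min_jR_t(e_j)$, while BPS attains $\inf R_t$ by (a), giving \eqref{eq:gap}. Nonnegativity is immediate because each $e_j\in\Theta$. For strictness I will use the exponential-family Pythagorean identity $R_t(e_j)-R_t(\bbeta^\circ)=|\mathcal E_t|^{-1}\sum_e\KL(q^\circ_e,q_e(e_j))$, which follows from the score equations of (b). It is positive unless $q(e_j)=q^\circ$. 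At a boundary optimum, the convexity inequality $R(e_j)\ge R(\bbeta^\circ)+\nabla R(\bbeta^\circ)\trans(e_j-\bbeta^\circ)\ge R(\bbeta^\circ)$ replaces the identity, using the variational inequality.

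The stacking comparison is the step I expect to be the main obstacle, because "curvature-distance" must be defined so that the argument closes. I will define $d_t^2$ through the sparse KL geometry, exploiting that the KL is locally quadratic by Lemma~\ref{lem:sparse}. Let the stacking risk be minimized at the hull point $\bar q$. By Lemma~\ref{lem:sparse}, $R_t(\bar q)-R_t(p)\ge c\,d_t^2$, while $\inf_{\bbeta}R_t(\bbeta)-R_t(p)=\epsilon_t\le\tfrac12c\,d_t^2$. Subtracting gives the gap $\ge\tfrac12c\,d_t^2$. The delicate point is making the Lemma~\ref{lem:sparse} constants uniform over the hull, which is why the log-odds gaps there must be bounded. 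I will secure this with the two-sided sparse clipping of Assumption~\ref{ass:agents}.
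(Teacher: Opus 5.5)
Your route is essentially the paper's. Both use the convex population risk and its score equations, and the strong-convexity rate toward the projection (you cite Theorem~\ref{thm:t1}; the paper re-derives it). Both use the expansion $\kl(\rho_n a,\rho_n b)=\rho_n\{a\log(a/b)+b-a\}+O(\rho_n^2)$ to reach the Poisson functional. For BMA, both use Berk concentration plus the exponential-family Pythagorean identity. For stacking, both use the $\chi^2$-type lower bound over the hull minus the BPS approximation error; uniformity holds because every convex stack of sparse-clipped agents stays in $[c\rho_n,C\rho_n]$.

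One step in (b) does not go through as written. Let $\Delta_t=\inf_{\bbeta}\|p_t-q_t(\bbeta)\|_{2,t}$, attained at $\bbeta^*$. Your term $q^*-q^\circ$ compares the $L^2$-best and KL-best members of the family. It is not an estimation error: Theorem~\ref{thm:t1} controls $\widehat\bbeta_t-\bbeta^\circ_t$, not $\bbeta^*-\bbeta^\circ_t$, which is a population discrepancy that does not shrink with $N_t$. What you need is the sparse-scale equivalence of Bernoulli KL and squared error: $c\rho_n^{-1}(p-q)^2\le\kl(p,q)\le C\rho_n^{-1}(p-q)^2$ for $p,q\in[c\rho_n,C\rho_n]$. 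This follows from Lemma~\ref{lem:sparse} together with $(q-p)^2\asymp p^2(1-p)^2x^2$, where $x$ is the log-odds gap. Since $\bbeta^\circ_t$ minimizes the KL risk, $c\rho_n^{-1}\|p_t-q^\circ_t\|_{2,t}^2\le R_t(\bbeta^\circ_t)\le R_t(\bbeta^*)\le C\rho_n^{-1}\Delta_t^2$, so $\|p_t-q^\circ_t\|_{2,t}\le C'\Delta_t$ directly. Then $q^*$ never appears, and Theorem~\ref{thm:t1} (with its $a_{n,t}+r_{\mathrm{Lap}}$ extension) is needed only for $\widehat q_t-q^\circ_t$. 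This is how the paper closes the ECE bound. It evaluates the bins on an independent calibration fold, which matches your cross-fitting option.

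There are also two smaller points in (c). First, Berk concentration needs a per-dyad separation on the $\rho_n$ scale, $\rho_n^{-1}\{R_t(e_k)-R_t(e_{j_\star})\}\ge\delta_0$, so that the drift of order $N_t$ dominates the $O_p(\sqrt{N_t})$ fluctuation you identified. Second, your strictness argument is only qualitative, and at a boundary projection the first-order variational inequality yields $\ge$, not $>$. The paper instead uses strong convexity of the Bernoulli KL in the natural parameter on the sparse envelope, together with the Gram condition. This gives $\min_jR_t(e_j)-R_t(\bbeta^\circ_t)\ge c_G\rho_n\min_j\|\bbeta^\circ_t-e_j\|^2$, which delivers strictness and keeps the limiting gap in \eqref{eq:gap} bounded away from zero on the $\rho_n$ scale. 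Adding that quadratic term to your convexity inequality also repairs the boundary case.
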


\begin{remark}[What the calibration guarantee gives]
\label{rmk:calib}
Theorem~\ref{thm:t4}(b) gives average calibration, through the intercept score equation, and
orthogonality of the residual to the agent features, through the slope equations; the empirical
calibration error on held-out dyads is bounded by these plus the approximation error of the set of
agents, and after a case-control evaluation the same offset returns population calibration
(Corollary~\ref{cor:t1}). None of these is full per-bin reliability, which requires
$\inf_{\bbeta}\|p_t-q_t(\bbeta)\|_{2,t}$ to be small and is not guaranteed for a fixed set of agents.
\end{remark}

\subsection{Static link prediction: case-control correction and graceful failure}

The $T=1$ specialization is static link prediction. Two network-specific points make it precise:
the held-out evaluation is a case-control sample, and agents that carry no signal fail
gracefully.

\begin{corollary}[$T=1$ oracle, case-control offset, graceful failure]
\label{cor:t1}
For $T=1$ with train/\allowbreak validation/\allowbreak test dyad splits and cross-fitted agents, the test risk
obeys the oracle inequality $R_{\mathrm{test}}(\widehat q_{\mathrm{BPS}})\le
\inf_{\theta\in\Theta}R_{\mathrm{test}}(q_\theta)+O_p(\sqrt{(J{+}1)/m_{\mathrm{val}}}+a_n)$,
and adding an agent cannot raise the oracle risk, strictly lowering it when the new agent
carries residual log-score information beyond the existing span. If sampling depends only on the label, $S\perp X\mid Y$, with positives sampled with
probability $s_1$ and negatives with $s_0$ (an equal-size negative set is the case $s_1\gg
s_0$), then $\logit\Prob(Y{=}1\mid X,S{=}1)=\logit\Prob(Y{=}1\mid X)+\log(s_1/s_0)$, so the
population-calibrated forecast is $\widehat p_{\mathrm{pop}}=\sigma(\logit\widehat
p_{\mathrm{cc}}-\log(s_1/s_0))$; the intercept $\alpha$ absorbs this constant, so the
synthesis is automatically calibrated to the evaluation sample and a single offset recovers
the population scale. Finally, if the agent features carry no ranking signal, $Y\perp z$, then
$\theta^\circ=0$ and the synthesis reduces to an intercept-only base-rate forecast: no score ranks the dyads better than chance while the expected calibration error tends to zero.
\end{corollary}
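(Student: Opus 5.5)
The corollary has three parts, and I would prove them separately.

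\textbf{Oracle inequality and monotonicity.} For $T=1$ the synthesis is the one-snapshot estimator of Theorem~\ref{thm:t1}, fitted on the validation dyads with agents cross-fitted on the training dyads. I will condition on the training fold, so the features $z_e$ are fixed and the validation and test dyads are conditionally independent Bernoulli draws. On this event I decompose
\[
R_{\mathrm{test}}(\widehat q)-\inf_\theta R_{\mathrm{test}}(q_\theta)
\]
into three terms:
\begin{itemize}
\item a curvature term $R_{\mathrm{test}}(q_{\widehat\theta})-R_{\mathrm{test}}(q_{\theta^\circ})$;
\item a target-mismatch term between the validation projection and the test projection, which vanishes when the splits are exchangeable draws from the same dyad population;
\item an agent-error term.
\end{itemize}
For the curvature term I will use a second-order expansion around the projection together with the rate of Theorem~\ref{thm:t1}(b), taking $m_{\mathrm{val}}$ in place of $m_{t,n}$. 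The expansion is only quadratic in $\|\widehat\theta-\theta^\circ\|$, but the per-dyad risk is normalized by $|\mathcal E|$, which loses a factor. The cleaner route is to invoke the Lipschitz property of the log-loss in $\theta$, since the features are bounded by $B$ on compact $\Theta$. This gives the stated $\sqrt{(J+1)/m_{\mathrm{val}}}$ term through a uniform empirical-process bound over the $(J+1)$-dimensional class. Replacing fitted features by their oracle targets costs $O_p(a_n)$, by the same Lipschitz bound applied to $z$.

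For monotonicity, adding agent $J+1$ embeds the old parameter set as $\{\theta_{J+1}=0\}$, so the infimum cannot increase. If the new feature has nonzero covariance with the residual $p-q^\circ$, the directional derivative of $R$ at the old projection along $e_{J+1}$ is nonzero, and a small step strictly lowers $R$. That covariance condition is exactly what ``residual log-score information'' means here.

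\textbf{Case-control offset.} This is Bayes' rule. The posterior odds are
\[
\frac{\Prob(Y=1\mid X,S=1)}{\Prob(Y=0\mid X,S=1)}=\frac{s_1\Prob(Y=1\mid X)}{s_0\Prob(Y=0\mid X)},
\]
using $S\perp X\mid Y$. Because the synthesis has a free intercept $\alpha$, the constant $\log(s_1/s_0)$ is absorbed. Concretely, the projection on the case-control sample equals the population projection with $\alpha$ shifted by $\log(s_1/s_0)$. The intercept score equation of Theorem~\ref{thm:t4}(b) then yields calibration to the evaluation sample, and subtracting the offset recovers the population scale.

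\textbf{Graceful failure.} If $Y\perp z$, then $p_e\equiv\bar p$ is constant in the features. The intercept-only parameter $(\logit\bar p-\log\rho_n,0)$ solves all the score equations of Theorem~\ref{thm:t4}(b): the intercept equation holds exactly, and each slope equation $\sum_e(p_e-\bar p)u^{(j)}_e$ is zero. By strict convexity under Assumption~\ref{ass:design}, this solution is the unique projection, so $\theta^\circ=0$. A constant forecast gives AUC $1/2$. Its single calibration bin has error $|\widehat\alpha$-forecast$-\bar p|=O_p(\rho_n/\sqrt{N})\to0$ by Theorem~\ref{thm:t4}(b), whose approximation term is zero here.

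\textbf{Main obstacle.} I expect the oracle rate to be the hard step, because it requires uniform concentration of the sparse validation log-loss at the stated $\sqrt{1/m_{\mathrm{val}}}$ scale rather than $\sqrt{1/(m_{\mathrm{val}}\rho_n)}$. I would first use bounded differences on $\log q$, which is bounded by $\log(1/\rho_n)$. If this loses a logarithmic factor, I would restrict to the centered-loss class and apply Lemma~\ref{lem:sparse} for the variance.
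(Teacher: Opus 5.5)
Your oracle-inequality and monotonicity parts follow the paper's route. The first is a uniform Lipschitz/contraction bound on the validation log-loss over the $(J{+}1)$-dimensional class, followed by the approximate-minimizer comparison. The second is nesting plus a nonzero directional derivative at the old projection; this also needs the enlarged parameter set to contain a neighborhood of $(\theta^\circ,0)$, which you should state. Your Lipschitz-in-$z$ bound for the $a_n$ term derives what the paper simply assumes, provided $a_n$ is measured under the evaluation law.

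Two side remarks there are off but harmless:
\begin{itemize}
\item The quadratic expansion does not lose a factor. Where Theorem~\ref{thm:t1}'s conditions hold it gives the faster $(J{+}1)/m_{\mathrm{val}}$; the real reason to avoid it is that the corollary assumes no interiority or curvature.
\item There is no sparse obstacle. After contraction, the centered loss class has Rademacher complexity of order $\sqrt{(J{+}1)/m_{\mathrm{val}}}$ times the bounds on $\|\theta\|$ and $\|z\|$, with no $\rho_n$ entering.
\end{itemize}
The genuine problems are in the last two parts.

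\textbf{Case-control.} The identity you give as the mechanism holds only under correct logit-linear specification. That identity is that the case-control projection equals the population projection with $\alpha$ shifted by $\log(s_1/s_0)$. The trouble is that case-control sampling also tilts the covariate law: $X\mid S{=}1$ has density proportional to $s_1\pi(X)+s_0\{1-\pi(X)\}$ relative to the population, where $\pi(X)=\Prob(Y{=}1\mid X)$. A direct computation shows that at the shifted parameter the case-control score equals $s_1s_0\,\E[(\pi-q)\,z/\{s_1q+s_0(1-q)\}]$. Here the expectation is over the population and $q$ is the population-projection forecast. This is a reweighted population score: it vanishes when $\pi=q$, but not in general. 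Under balanced sampling ($s_1=1$, $s_0\asymp\rho_n$) the weight $1/\{q+s_0(1-q)\}$ varies at the order of its mean, so the slopes genuinely move. The corollary is aimed at the misspecified case, so the step fails exactly where it is needed. It is also unnecessary:
\begin{itemize}
\item Calibration to the evaluation sample is just the intercept score equation $\E_{\mathrm{cc}}\{Y-q^\circ_{\mathrm{cc}}(X)\}=0$ of the case-control fit.
\item Population recovery is the Bayes identity applied to a case-control forecast consistent for $\pi_{\mathrm{cc}}$, with the Lipschitz map $\sigma$ transferring consistency. This is how the paper argues.
\end{itemize}

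\textbf{Graceful failure.} You replace $Y\perp z$ by $p_e\equiv\bar p$, which is strictly stronger. The hypothesis only says $\Prob(Y{=}1\mid z)=\bar p$, and the true $p_e$ may vary with structure the agents miss; the near-regular power grid is such a case. Your $\theta^\circ=0$ argument survives, because the slope equations still hold via $\E[(Y-\bar p)u^{(j)}]=0$. The other two claims do not:
\begin{itemize}
\item Under the actual hypothesis, the approximation term $\inf_{\beta}\|p-q_\beta\|_{2}$ of Theorem~\ref{thm:t4}(b) is the spread of $p$ about $\bar p$ and does not vanish. So that bound does not give $\mathrm{ECE}\to0$.
\item ``A constant forecast has AUC $1/2$'' says nothing about the fitted, non-constant $\widehat q$ or other feature-based scores.
\end{itemize}
The paper avoids both by arguing directly:
\begin{itemize}
\item For any $z$-measurable $q$, $R(q)=h(\bar p)+\E\,\KL\{\mathrm{Bern}(\bar p),\mathrm{Bern}(q)\}$, with $h$ the Bernoulli entropy.
\item The oracle inequality and Pinsker then give $\E|\widehat q-\bar p|\to0$.
\item Since $\widehat q$ is $z$-measurable, $\E(Y\mid\widehat q)=\bar p$, hence $\mathrm{ECE}\le\E|\widehat q-\bar p|\to0$.
\item $Y\perp S(z)$ makes the class-conditional laws of any $z$-measurable score $S$ coincide, so its AUC is exactly $1/2$.
\end{itemize}
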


\begin{remark}[Identifiability of the predictive; why stacking can out-rank but mis-calibrate]
\label{rmk:final}
The synthesis acts on edge probabilities, invariant to the SBM permutation and the GRDPG
$O(p,q)$ symmetry, so the predictive and (under Assumption~\ref{ass:design}) the weight
$\bbeta^\circ_t$ are identified even when agent parameters are not. Theorem~\ref{thm:t4}(b) decouples ranking from calibration: the ordering of dyads is invariant to
the intercept, while calibration is not. A convex combination, which carries no free intercept, can
therefore rank well yet remain miscalibrated.
\end{remark}

\subsection{Estimation defaults}
\label{sec:practice}

The procedure below summarizes the estimation: on each snapshot the agents are fit on held-out dyad folds, their forecasts are mapped to centered-logit features, the discounted Laplace filter updates the mechanism weights, and the cross-fold construction returns their intervals. Table~\ref{tab:defaults} lists the defaults used throughout, each paired with the diagnostic that signals when it should be revisited.

\begin{center}
\fbox{\begin{minipage}{0.9\textwidth}
\textbf{Procedure: dynamic BPS on a sequence of networks.}\\[2pt]
\emph{Input:} snapshots $\{A_t\}_{t=1}^T$; agent families; discount $\delta$; logit clip $\epsilon$; dyad folds $K$.\\[1pt]
For each snapshot $t=1,\dots,T$:
\begin{enumerate}[leftmargin=1.5em,itemsep=1pt,topsep=2pt]
\item Split the dyads into $K$ folds; on each held-out fold fit the agents on the complementary dyads and the discounted past, and read off their one-step edge probabilities.
\item Form the centered-logit features $u^{(j)}_{t,e}=\logit\widehat p^{(j)}_{t,e}-\overline{\logit\widehat p^{(j)}_t}$, with the probabilities clipped to the sparse scale $[\rho_n\epsilon,\,\rho_n/\epsilon]\cap[\rho_n\epsilon,\,1-\epsilon]$.
\item Update the weights by the discounted Laplace filter with forgetting factor $\delta$: predict $\widehat\bbeta_{t\mid t-1}$, then correct with snapshot $t$.
\item Report the forecast $\widehat q_{t,e}=\sigma(\widehat\alpha_t+\sum_j\widehat\theta_{t,j}u^{(j)}_{t,e})$, the leading agent $\argmax_j\widehat\theta_{t,j}\,H_{t,jj}^{1/2}$ (signed, as in Theorem~\ref{thm:t3}), the condition number $\kappa(H_t)$, and the cross-fold weight intervals.
\end{enumerate}
\emph{Output:} calibrated one-step forecasts, weight trajectories with intervals, and the per-snapshot leading mechanism with its aliasing diagnostic.
\end{minipage}}
\end{center}

\begin{table}[t]
\centering
\small
\caption{Estimation defaults used throughout, with the diagnostic that signals when each should be
revisited.}
\label{tab:defaults}
\begin{tabular}{@{}p{0.19\textwidth} l p{0.52\textwidth}@{}}
\toprule
Quantity & Default & Revisit when \\
\midrule
Discount $\delta$ & $0.9$ & switches are slower or faster than the $1/(1-\delta)$ window \\
Logit clip $\epsilon$ & $10^{-3}$ & many agent probabilities pile at $0$ or $1$ \\
Dyad folds $K$ & $5$ & the held-out fold is too small to fit the agents \\
Validation dyads & $20\%$ held out & the calibration intercept is unstable across folds \\
Weight interpretability & $\kappa(H_t)$ moderate & $\kappa(H_t)$ large: report the forecast, not the split of credit \\
Localization & margin $\kappa$ comfortable & small margin: name no leader; the forecast stays calibrated \\
\bottomrule
\end{tabular}
\end{table}

\section{Numerical studies}
\label{sec:experiments}

We report results on five real dynamic networks in the main text (an S\&P 500 correlation network,
SocioPatterns high-school and hospital-ward contact networks, and the arXiv HEP-PH and Enron
networks at scale) and a sixth, the Bitcoin-OTC trust network, in the Supplement, together
with controlled switching simulations. The forecast is evaluated by negative log-likelihood (NLL) and two calibration diagnostics: the Kolmogorov--Smirnov distance of
the randomized probability integral transform to uniformity (PIT-KS) and the expected calibration
error (ECE). The bootstrap intervals below are descriptive summaries. The competitors are Bayesian
model averaging, convex stacking, equal-weight pooling, and each single agent. We report, at each snapshot, the leading structural agent, the one with the
largest standardized contribution $|\theta_{t,j}|H_{t,jj}^{1/2}$; this magnitude is the largest residual predictive contribution, while the signed margin of Theorem~\ref{thm:t3} controls localization of a positively aligned mechanism. A positive
weight means an agent carries predictive structure beyond the others, not that the network was
generated by that mechanism, and individual weights are separately interpretable only when the
information matrix is well conditioned, its condition number $\kappa(H_t)$ reported as the aliasing
diagnostic (Section~\ref{sec:e_t4}).

The studies follow the paper's results: a financial correlation network for calibration, regime
tracking, and localization (Section~\ref{sec:sp}); two label-carrying contact networks for weight
validation (Section~\ref{sec:hs}); controlled simulations that test the single-snapshot theory under
known truth (Section~\ref{sec:controlled}, Theorems~\ref{thm:t1}--\ref{thm:t4}); and two networks
in the tens of thousands of nodes for scale and out-of-sample behavior (Section~\ref{sec:scale}).

Two points frame the comparison. First, the synthesis is evaluated as a calibrated density forecaster
that also returns mechanism inference; where a competitor can be recalibrated to match its forecast
scores (Section~\ref{sec:recalib}), the distinguishing output is the calibrated probabilities
delivered jointly with interpretable, uncertainty-quantified mechanism weights in a single pass,
not a higher raw score. Second, the time-varying weights do not lower the average forecast
score on a stable panel, as the $\delta=1$ ablation of Section~\ref{sec:recalib} shows; their role is
switch recovery and mechanism localization, together with the pooling of recent snapshots that
sharpens the weights where a single snapshot is too sparse.

\subsection{Dynamic networks: tracking and localization}

\paragraph{A financial network from S\&P 500 returns.}
\label{sec:sp}
Asset-return correlation networks change their dominant structure across market regimes. In calm
markets, returns cluster by sector, a community structure. In stress, a single market factor
dominates and most pairs move together, a low-rank structure. We construct a binary dynamic network
from these correlations. From daily closing prices of the $n=470$ S\&P 500 constituents with full
history from February 2013 to February 2018, we form log-returns and take non-overlapping quarterly
windows. In each window we compute the cross-correlation matrix and threshold $|\mathrm{corr}|>0.5$
to a binary graph, giving $T=19$ snapshots on the complete-case panel of constituents with full
price histories.

The density of the thresholded graph varies widely, from $0.018$ with $1{,}986$ edges in the 2017
calm market to $0.50$ with $55{,}214$ edges in the 2015--16 stress episode. The per-snapshot
information $n^2\rho_n$ therefore ranges over a factor of twenty-eight within one dataset. The dense stress
snapshots, with density up to $0.50$, sit outside the sparse regime $\rho_n\to0$ of \eqref{eq:sparse}; the
sparse-regime guarantees therefore attach to the sparse calm snapshots, while the dense snapshots are a
robustness check that the regime tracking persists beyond the theory's sparsity range. We take the
regime label from the data, as the share of the leading eigenvalue of the correlation matrix, which
is high in stress and low in calm; this recovers the documented market history from the correlation
threshold alone, with no fitting to the volatility series.

\begin{table}[t]
\centering
\small
\caption{One-step-ahead forecast scores on three networks. Lower NLL/PIT-KS/ECE is better; best in each block in bold.}
\label{tab:forecasts}
\begin{tabular}{lccc}
\toprule
Method & NLL & PIT-KS & ECE \\
\midrule
\multicolumn{4}{l}{\textit{S\&P 500 correlation} ($n=470$, $T=19$)}\\
\textbf{Dynamic BPS} & \textbf{0.653} & \textbf{0.092} & \textbf{0.123} \\
Dynamic SBM (smoothed)   & 0.829 & 0.245 & 0.284 \\
Dynamic GRDPG (smoothed) & 0.860 & 0.174 & 0.225 \\
BMA            & 1.047 & 0.249 & 0.303 \\
DMA (forgetting) & 1.047 & 0.249 & 0.303 \\
Stacking       & 0.940 & 0.222 & 0.284 \\
Equal weight   & 0.947 & 0.215 & 0.281 \\
SBM agent      & 1.053 & 0.251 & 0.305 \\
GRDPG agent    & 1.959 & 0.199 & 0.269 \\
Chung--Lu agent& 1.332 & 0.241 & 0.309 \\
\midrule
\multicolumn{4}{l}{\textit{High-school contact} ($n=327$, $T=41$)}\\
\textbf{Dynamic BPS} & \textbf{0.613} & \textbf{0.107} & \textbf{0.140} \\
BMA          & 2.065 & 0.390 & 0.453 \\
Stacking     & 2.015 & 0.394 & 0.454 \\
Equal weight & 2.110 & 0.397 & 0.456 \\
SBM agent    & 2.066 & 0.391 & 0.453 \\
GRDPG agent  & 3.727 & 0.360 & 0.432 \\
Chung--Lu agent & 3.106 & 0.455 & 0.484 \\
\midrule
\multicolumn{4}{l}{\textit{Hospital ward contact} ($n=75$, $T=44$)}\\
\textbf{Dynamic BPS} & \textbf{0.558} & \textbf{0.119} & \textbf{0.139} \\
BMA          & 1.604 & 0.359 & 0.411 \\
Stacking     & 1.404 & 0.309 & 0.375 \\
Equal weight & 1.419 & 0.292 & 0.356 \\
SBM agent    & 1.596 & 0.352 & 0.410 \\
GRDPG agent  & 3.126 & 0.269 & 0.316 \\
Chung--Lu agent & 2.420 & 0.290 & 0.349 \\
\bottomrule
\end{tabular}
\end{table}

Table~\ref{tab:forecasts} reports the forecast scores. The synthesis is among the best-calibrated and
lowest-log-score methods compared here (NLL $0.653$, PIT-KS $0.092$, ECE $0.123$), against NLL $0.83$ to $1.96$ and
ECE $0.23$ to $0.31$ for the classical combiners and the smoothed generative models. The comparison
includes two purpose-built dynamic competitors, a discount-smoothed dynamic stochastic block model
and a dynamic latent-space forecaster, each fit on the discounted-average adjacency, and a
recency-weighted sequential stacking \citep{he2024sequential} (NLL $0.94$). The convex combiners
cannot leave the agents' hull or carry a calibration intercept, so they rank but do not calibrate.

The dynamic latent-space model orders candidate edges somewhat more sharply but is badly
miscalibrated (NLL $0.860$, ECE $0.225$, against $0.653$ and $0.123$). This is the intended
trade-off: a latent-geometry model orders candidate edges well but is overconfident, whereas the
synthesis gives up a little ranking for calibrated probabilities. Resampling the eighteen one-step
differences, the synthesis improves NLL over the dynamic block model by $0.176$ ($95\%$ CI
$[0.107,0.240]$) and over the dynamic latent-space
model by $0.206$ ($[0.101,0.321]$).

Calibration carries a decision consequence. Under an asymmetric cost, with a missed link costing $r$
times a false alarm and the threshold set from $\tau=1/(1+r)$ with no labeled tuning set, the raw
latent-space forecasts cost $16\%$ more at $r=2$, $34\%$ at $r=5$, and $45\%$ at $r=10$ than the
synthesis, because the sharper ranking is read at the wrong scale; model averaging costs more still.
Recalibrating the competitor closes this gap (Section~\ref{sec:recalib}).

Figure~\ref{fig:sp}
reports the one-step log-score and the filtered synthesis weights over the five-year window. The
upper panel tracks the one-step log-score: the synthesis
stays low across the whole five years while BMA and stacking deteriorate, especially in the
2017 calm, where the synthesis NLL falls below $0.5$ while BMA exceeds $1.4$. The lower
panel shows the filtered weights reorganizing with the regime: in the sparse,
sector-structured 2017 calm the synthesis places most weight on the latent-geometry (GRDPG) agent, whose
weight rises toward $0.44$, while in the dense, factor-driven 2014--16 stress it shifts
toward the block and degree agents. The weights are learned per quarter, exactly as
Theorem~\ref{thm:t1} permits given the large per-snapshot edge counts, and they move as the
market regime moves.

\begin{figure}[t]
\centering
\includegraphics[width=\textwidth]{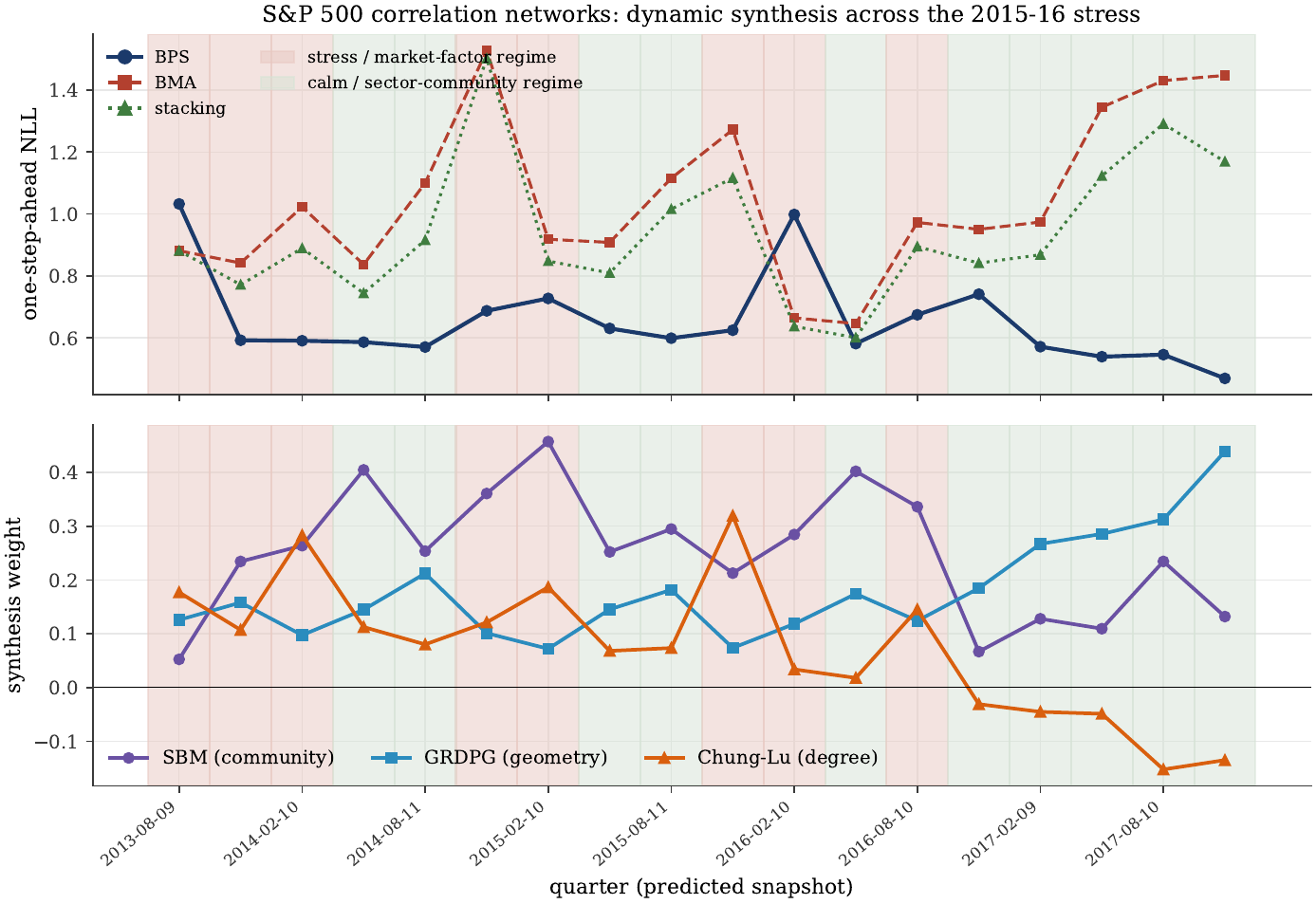}
\caption{S\&P 500 correlation networks, 2013--2018. (Top) One-step-ahead predictive
log-score; the synthesis stays low across the calm-to-crisis-to-calm cycle while BMA and
stacking deteriorate. (Bottom) Filtered synthesis weights reorganizing with the market
regime: latent geometry in the sparse 2017 calm, blocks and degree in the dense 2014--16
stress. Shading marks the data-derived regime (market-mode share above $0.35$).}
\label{fig:sp}
\end{figure}

Figure~\ref{fig:lpdr} shows the cumulative one-step-ahead log predictive score of the synthesis
relative to each combiner, accumulated quarter by quarter
\citep{mcalinn2019dynamic,mcalinn2020multivariate}. The synthesis has a positive cumulative advantage by the end of the sequence, $7.1$ nats over model
averaging and $5.2$ over stacking, though individual quarters can favor a competitor.

\begin{figure}[t]
\centering
\includegraphics[width=0.74\textwidth]{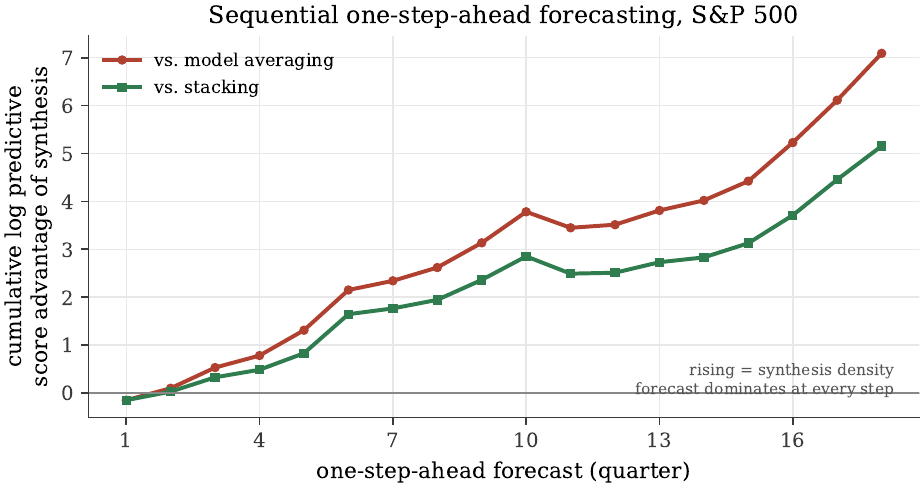}
\caption{Sequential one-step-ahead forecast performance on the S\&P network, in the form used to
compare density forecasts in Bayesian predictive synthesis \citep{mcalinn2020multivariate}: the
cumulative log predictive score of the synthesis relative to model averaging and to stacking,
accumulated over the eighteen quarterly forecasts. A rising curve means the synthesis places higher
predictive density on the realized snapshot at that step; upward segments favor the synthesis and
downward segments the comparator.}
\label{fig:lpdr}
\end{figure}

\label{sec:recalib}
We test whether a competitor can match the synthesis by recalibration. Splitting the forecast rounds
into a calibration fold (earlier snapshots) and an evaluation fold (later snapshots), we fit isotonic
regression on each competitor's calibration-fold predictions and apply it on the evaluation fold.
Recalibration sharply improves the latent-space competitor: its evaluation-fold ECE falls from
$0.244$ to $0.047$, its NLL from $0.832$ to $0.559$, and its $r=2$ decision cost from $0.569$
to $0.401$, while the already-calibrated synthesis improves more modestly (ECE $0.106$ to $0.049$). The
recalibrated competitor then matches the synthesis on every forecast score. The synthesis is
therefore not uniquely the best forecaster; its advantage is that it delivers calibration together
with the mechanism weights, their intervals
(Theorem~\ref{thm:crossfold-orthogonal-unconditional-bps}), and the localization of
Theorem~\ref{thm:t3}, none of which a recalibrated embedding returns.

Refitting the synthesis with the calibration intercept constrained to zero leaves the dense S\&P
scores essentially unchanged (NLL $0.653$ to $0.646$) but materially degrades the forecast on the sparse Bitcoin-OTC network
of the Supplement, so the free calibration layer matters most when the graph is
sparse. Fixing the weights ($\delta=1$) leaves the average forecast scores unchanged, so the
time-varying weights are needed for tracking and localization, not for an average-score gain on a
stable panel.

The dynamics also help estimation, in the sparse regime where the single-snapshot theory is weakest.
In a stable regime the discounted filter pools recent snapshots to an effective information $\asymp
N_t/(1-\delta)$, lowering the variance of the weight estimate below an independent single-snapshot
fit. On a stable three-block model held fixed across snapshots, the filtered community-weight estimate
has about half the standard deviation of the single-snapshot fit (Figure~\ref{fig:pooling}): at
$N=n^2\rho_n\approx250$ the single-snapshot standard deviation is $0.25$ and the pooled one $0.11$, and
both shrink as the snapshots densify. The time-varying machinery therefore restores estimable weights
where one snapshot is too sparse, and adds little where each snapshot is already dense.

\begin{figure}[t]
\centering
\includegraphics[width=0.6\textwidth]{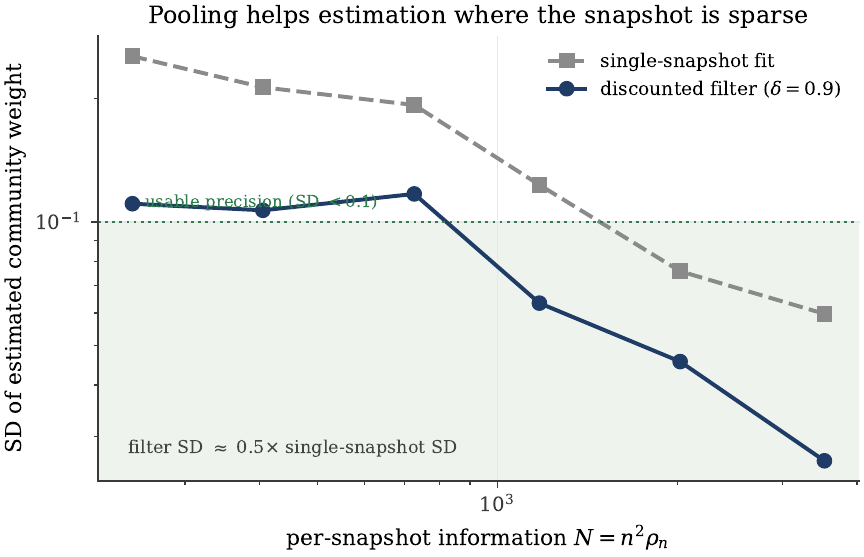}
\caption{Pooling helps estimation where the snapshot is sparse. On a stable three-block model held
fixed across $T=12$ snapshots, the across-replication standard deviation of the estimated community
weight is plotted against the per-snapshot information $N=n^2\rho_n$ for the discounted filter
($\delta=0.9$) and an independent single-snapshot fit. The filter standard deviation is about half
the single-snapshot value throughout; the absolute gap is decisive in the sparse regime (small $N$),
where the single-snapshot estimate exceeds the usable-precision band, and closes as the snapshots
densify.}
\label{fig:pooling}
\end{figure}

\label{sec:localize}
By Theorem~\ref{thm:t3}, the filtered weights localize the active mechanism, which is testable
against an external label. From the same daily returns, but independent of the network construction,
we compute the realized volatility of the equal-weight market index in each quarter, a standard
stress indicator. It agrees with the network's top-eigenvalue share on the regime (correlation
$0.93$), and both peak in the 2015--16 stress episode. The volatility series, computed from the returns, is the external validator; the eigenvalue share, read off the correlation
matrix itself, is an internal consistency check. The synthesis weights track this regime: the
latent-geometry weight is anticorrelated with stress ($-0.78$ against the external volatility, $-0.91$
against the internal eigenvalue share) and is largest in the five calm quarters of late 2016--2017; in stress the weight
reallocates to the block and degree agents, the block weight leading the most stressed quarters and
the degree weight rising most strongly with volatility ($+0.66$). The synthesis recovers the regime
with no knowledge of the volatility series.

Holding graph density fixed at $0.10$ in every quarter, by thresholding each correlation matrix at the
quantile that yields that density, leaves the synthesis advantage and the weight reallocation intact
(NLL $0.622$ against $1.14$ for averaging; community weight $0.40$ in stress against $0.27$ in calm).

\subsection{Two contact networks with known group labels}
\label{sec:hs}

We use two dynamic networks that carry known group labels, in different domains. The first is the SocioPatterns high-school face-to-face contact
network \citep{fournet2014contact}: wearable sensors recorded close-proximity contacts among
$327$ students of nine classes over five school days in December 2013 at twenty-second resolution.
The second is the SocioPatterns hospital-ward network \citep{vanhems2013estimating}: contacts among
$75$ doctors, nurses, administrative staff, and patients in a Lyon ward over four days, carrying the
four role labels. Aggregating each into hourly snapshots on its fixed node set gives $T=41$ and
$T=44$ snapshots in the sparse regime, on which we run the same one-step-ahead protocol and
competitors as above. The class and role labels are held out from every agent and used only for
validation.

On both networks the synthesis is the best-calibrated forecaster by a wide margin
(Table~\ref{tab:forecasts}): PIT-KS $0.107$ and ECE $0.140$ on the high school and
$0.119$ and $0.139$ on the hospital, against $0.29$ to $0.49$ for the convex combiners, which once
again remain badly miscalibrated. Against model averaging the synthesis
improves log-score by $1.45$ on the high school ($95\%$ CI $[1.31,1.57]$) and $1.05$ on the hospital
($[0.90,1.20]$).

The mechanism weights, which no competitor returns with calibrated uncertainty, are validated against the labels and identify a
\emph{different} mechanism in each domain (Figures~\ref{fig:hs} and \ref{fig:hosp}). On
the high school the community (SBM) weight is largest, near $0.42$ with an interval that separates from
the others at the settled snapshots; the community mechanism is finite-dimensional, so this is the
unconditionally valid interval of Theorem~\ref{thm:crossfold-orthogonal-unconditional-bps}, and the high
school is the case in which the leading mechanism carries the unconditional band. The aggregate contact graph is assortative by class (attribute
assortativity $0.65$ on the held-out labels), with the community agent's blocks recovering the nine
classes almost exactly, adjusted Rand index $0.993$ and normalized mutual information $0.994$,
computed without the labels; within the school day the community weight rises and falls
with the within-class share of contacts (correlation $0.49$, $p=0.0015$), tracking the daily
alternation between within-class lessons and between-class mixing. On the hospital the largest
weight is instead the degree (Chung--Lu) mechanism, near $0.50$, whose leading weight is infinite-dimensional and so carries the conditional interval, and the community weight is small.
The ward is not assortative by role: its attribute assortativity on the held-out labels is slightly
negative ($-0.12$), so edges run across roles, not within them and there is no community
structure for the SBM to lead on. What organizes the ward is contact volume, and the role labels
confirm it, the staff being the hubs (mean aggregate degree $39$ for nurses and $36$ for doctors
against $21$ for patients). The synthesis thus reports community structure where it exists and hub
structure where that is what dominates, in each case the reading an external label endorses.

These readings, with the S\&P regime tracking of Section~\ref{sec:sp}, are collected in
Table~\ref{tab:weight-validation}: the same synthesis names a different mechanism in each network, a
held-out signal the agents never see endorses each one, and no competitor returns a comparable mechanism weight with uncertainty.

\begin{table}[t]
\centering
\small
\caption{The synthesis weights identify a different mechanism in each network, each confirmed by a
held-out signal the agents never see. No competitor (Bayesian model averaging, stacking, equal-weight
pooling) returns mechanism weights.}
\label{tab:weight-validation}
\begin{tabular}{@{}l p{0.22\textwidth} p{0.42\textwidth}@{}}
\toprule
Network ($n$) & Leading mechanism & Held-out validation \\
\midrule
High school (327) & community / SBM ($\bar\theta{=}0.42$) & assortative by the $9$ classes ($0.65$); blocks recover the classes (ARI $0.993$, NMI $0.994$); weight tracks within-class contact share ($r{=}0.49$, $p{=}0.0015$) \\[3pt]
Hospital ward (75) & degree / Chung--Lu ($\bar\theta{=}0.50$) & not assortative by the $4$ roles ($-0.12$); staff are the hubs (mean degree: nurses $39$, doctors $36$, vs patients $21$) \\[3pt]
S\&P 500 (470) & geometry / GRDPG, regime-tracking & realized volatility (external): geometry weight $r{=}-0.78$ ($-0.91$ vs eigenvalue share); block and degree rise in stress, degree $r{=}+0.66$ \\
\bottomrule
\end{tabular}
\end{table}

\begin{figure}[t]
\centering
\includegraphics[width=\textwidth]{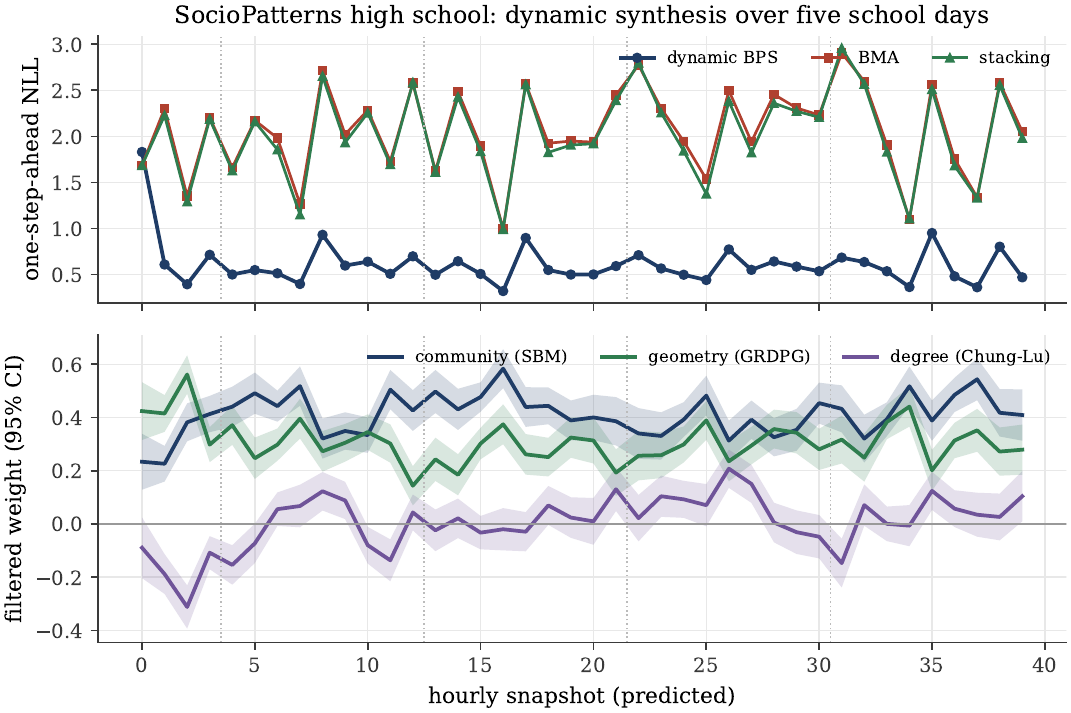}
\caption{SocioPatterns high-school contact network, five school days
(Theorem~\ref{thm:crossfold-orthogonal-unconditional-bps}). Top: one-step-ahead log-score; the synthesis stays low and calibrated
while model averaging and stacking are far higher. Bottom: the mechanism weight trajectories with $95\%$ bands that are
unconditional (Theorem~\ref{thm:crossfold-orthogonal-unconditional-bps}) for the leading
finite-dimensional community weight and conditional on the fitted features
(Theorem~\ref{thm:t1}) for the geometry and degree weights; the community (SBM) mechanism leads, consistent with its near-perfect recovery
of the nine classes (adjusted Rand index $0.993$, normalized mutual information $0.994$). Dotted
lines mark school-day boundaries.}
\label{fig:hs}
\end{figure}

\begin{figure}[t]
\centering
\includegraphics[width=\textwidth]{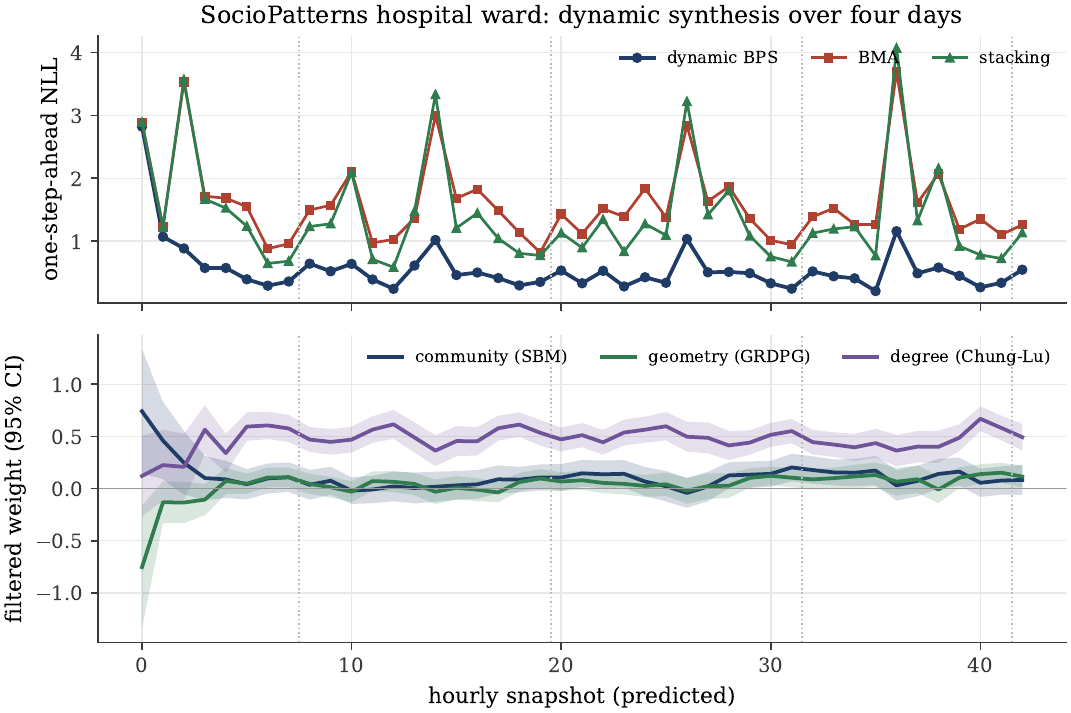}
\caption{SocioPatterns hospital-ward contact network, four days (Theorem~\ref{thm:crossfold-orthogonal-unconditional-bps}). Top:
one-step-ahead log-score; the synthesis stays low and calibrated. Bottom: filtered mechanism weights
with $95\%$ bands conditional on the fitted features (Theorem~\ref{thm:t1}); the unconditional
guarantee of Theorem~\ref{thm:crossfold-orthogonal-unconditional-bps} covers the finite-dimensional
community weight, not the degree weight that is largest here, so the band shown for the leading
mechanism is the conditional interval. The degree (Chung--Lu) mechanism leads, reflecting a ward
organized around high-contact staff (mean aggregate degree $39$ for nurses and $36$ for doctors against $21$
for patients), not assortative communities, which the community weight correctly does not
find. Dotted lines mark day boundaries.}
\label{fig:hosp}
\end{figure}

\subsection{Controlled simulations}
\label{sec:controlled}

\label{sec:e1}
We simulate $T=12$ snapshots on $n=300$ nodes in three regimes of four snapshots each: a
community regime (a three-block SBM), a hub regime (a Chung--Lu power-law degree model),
and a geometry regime (a two-dimensional GRDPG). This is a controlled setting in which the
data-generating mechanism changes abruptly, and it is the natural test of whether the
synthesis can track. Table~\ref{tab:e1} reports averages over the eleven one-step
forecasts.

\begin{table}[t]
\centering
\caption{Regime-switching simulation ($n=300$, $T=12$, three regimes). Averages over the
eleven one-step-ahead forecasts.}
\label{tab:e1}
\begin{tabular}{lccc}
\toprule
Method & NLL & PIT-KS & ECE \\
\midrule
\textbf{Dynamic BPS} & \textbf{0.742} & \textbf{0.074} & \textbf{0.108} \\
BMA            & 1.443 & 0.401 & 0.433 \\
Stacking       & 1.383 & 0.391 & 0.426 \\
Equal weight   & 1.378 & 0.389 & 0.426 \\
SBM agent      & 1.427 & 0.403 & 0.432 \\
GRDPG agent    & 2.338 & 0.365 & 0.417 \\
Chung--Lu agent& 1.430 & 0.400 & 0.431 \\
\bottomrule
\end{tabular}
\end{table}

The synthesis roughly halves the NLL of every competitor (NLL $0.74$
versus $1.38$--$2.34$) and improves calibration by a factor of five (PIT-KS $0.074$ versus
$0.39$--$0.40$). The
per-regime synthesis log-scores are stable ($0.83$ community, $0.70$ hub, $0.71$ geometry),
showing that no single regime is responsible for the gain.

The filtered log-score and weights reflect the regime structure. The one-step log-score spikes at each
regime change (snapshots $t=4$ and $t=8$) and re-adapts within the regime, whereas BMA and
stacking remain high throughout, and the filtered weights reallocate in step: in the community
regime the SBM weight is large ($\approx 0.9$); at the switch to the hub regime the Chung--Lu
weight rises from $0.12$ to $0.75$ while SBM is discounted; and the weights move again at the
geometry switch.

Theorem~\ref{thm:t3} sharpens this into a localization claim: when a mechanism is active with
a positive margin, the largest filtered weight should name it. Scoring the argmax of the
filtered structural weights against the active regime across the eleven forecasts, the
synthesis localizes correctly in $6$ of $11$ snapshots, and the pattern is exactly what the
margin condition predicts. In the community regime the SBM weight is largest at all three
forecasts ($3/3$); in the hub regime Chung--Lu is largest at three of four, the single miss
being the first snapshot after the switch, which is the recovery delay $h$ of the theorem
made visible. In the geometry regime the GRDPG weight is not the largest in any of the four geometry-regime forecasts. The
latent-geometry signal in this simulation is weakly separated from degree heterogeneity, so the
margin $\kappa$ is small and the localization condition of Theorem~\ref{thm:t3} fails. This is consistent with the theorem, which guarantees localization only above the margin; and the design-conditioning
diagnostic of Section~\ref{sec:e_t4} is what tells a practitioner in advance which regimes
are separable. Calibration, by contrast, does not depend on the margin and remains strong
throughout (PIT-KS $0.074$).

Theorem~\ref{thm:t3} guarantees localization only above the margin, so the geometry failure above
should be a small-margin artifact, not a structural limit. We confirm this with the same
three-regime design at a large, well-separated per-dyad contrast ($\kappa=2.2$), with the three
signals made mutually distinct: assortative blocks for community, latent positions on the unit
sphere for geometry (unit norm, so the geometry signal carries no degree information and the earlier
aliasing is removed), and an expected-degree product for degree. Over thirty replications, at the
settled snapshot of each regime the largest synthesis weight lands on the truly active mechanism in
every case, a localization hit rate of $1.00$ for all three mechanisms, geometry included
(Figure~\ref{fig:wellsep}); the only misses fall in the one or two snapshots immediately after a
switch, the recovery delay $h$ of the theorem, which lowers the all-snapshot rate to $0.58$ and
$0.60$ for the two regimes that follow a switch. Localization therefore succeeds for every mechanism
once its margin clears the threshold, exactly as Theorem~\ref{thm:t3} predicts, and the weak-margin
geometry failure is the predicted small-$\kappa$ regime, not a limitation of the synthesis.

\begin{figure}[t]
\centering
\includegraphics[width=0.62\textwidth]{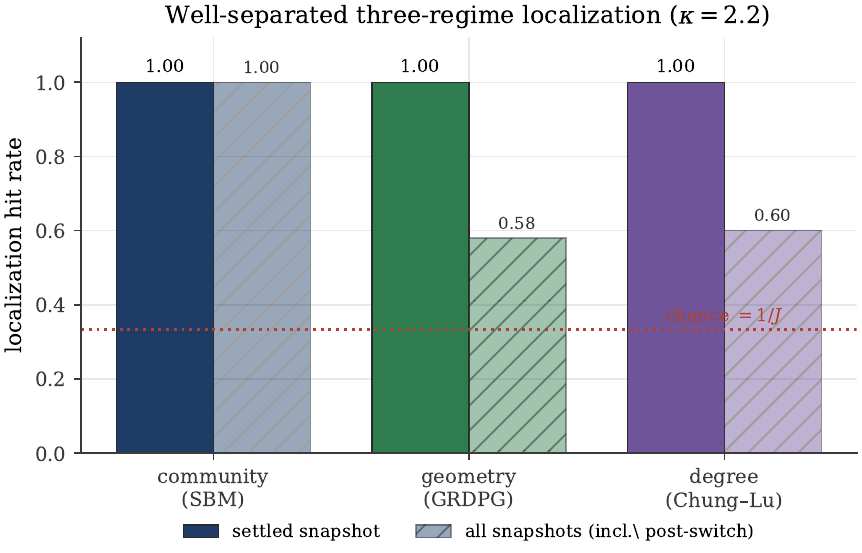}
\caption{Well-separated three-regime localization ($\kappa=2.2$, $n=180$, thirty replications). At
the settled snapshot of each regime the largest synthesis weight lands on the active mechanism in
every replication (hit rate $1.00$ for community, geometry, and degree, solid bars); the all-snapshot
rate (hatched) is lower only because of the one-to-two-snapshot recovery delay after each switch. The
weak-margin geometry failure in the preceding three-regime simulation is thus the small-margin regime of
Theorem~\ref{thm:t3}, not a structural limitation of the synthesis.}
\label{fig:wellsep}
\end{figure}

\label{sec:inertia}
Remark~\ref{rmk:inertia} makes the gap over averaging rules quantitative: after a regime of length $L$, vanilla Bayesian
model averaging has accumulated $\Theta(L)$ log-odds of evidence for the old mechanism and
needs an $\Omega(L)$-long window to reverse, whereas the discounted synthesis forgets in
$O(1/(1-\delta))$ snapshots independent of $L$. Dynamic model averaging \citep{raftery2010online} introduces a forgetting factor to reduce this
inertia, so we include it with forgetting matched to the synthesis discount, alongside vanilla
averaging. A network switches once from a community regime (an SBM) to a hub regime (a
Chung--Lu degree model) after $L$ snapshots, and we record, for each method, the recovery
delay, the number of post-switch snapshots until its largest combination weight points to the
new mechanism. Varying $L$ from $2$ to $16$ and averaging over eight replications,
Figure~\ref{fig:theorypanel} (center) reports the recovery delay. The synthesis recovers in a constant single
snapshot at every $L$ (delay slope $0.00$). Vanilla averaging takes longer the longer the old
regime lasted, its delay rising from near zero at $L=2$ to nineteen snapshots at $L=16$ (slope
$1.39$), one extra snapshot of lag per snapshot of history. Dynamic model averaging falls
between the two: its forgetting factor bounds the inertia, so the delay plateaus near six
snapshots, not growing without limit (slope $0.43$), but it still does not match the
synthesis, because it remains an averaging rule that must accumulate $O(1/(1-\delta))$ snapshots
of new-regime evidence to overcome the evidence it has retained, whereas the synthesis carries
an unconstrained logit state with an intercept and relocates at once.

\label{sec:regret}
Theorem~\ref{thm:t3} bounds the filter's cumulative one-step-ahead regret against the per-snapshot
oracle by \eqref{eq:regret}, whose only series-growing term is the comparator movement, itself
controlled by the number of switches. At a fixed horizon the cumulative regret rises monotonically
with the number of switches, from $0.68$ at none to $1.62$ at six (slope $0.16$ per switch, tracking
a movement budget that climbs from $0.5$ to $13$), while holding switches at zero and quadrupling the
horizon leaves it essentially flat ($0.69$ at $T=10$, $0.74$ at $T=44$): the filter charges per
switch, not per time point, exactly as \eqref{eq:regret} predicts.

\paragraph{The localization threshold.}
\label{sec:localize-threshold}
By Theorem~\ref{thm:t3}, the filter localizes the active mechanism only above a contrast threshold:
the localization delay is of order $\log J/(N_t\kappa^2)$, so a settled regime is identifiable once
$\kappa$ exceeds $\sqrt{\log J/N_t}$. We confirm this on a synthetic sequence that alternates a
community and a degree regime at a controlled contrast $\kappa$, with the geometry agent present as a
distractor. The settled localization hit rate is near the chance level $1/J$ for small $\kappa$ and
reaches one for $\kappa\ge0.75$. The crossing lies a constant factor above the information floor
$\sqrt{\log J/(N_t\rho(1-\rho))}\approx0.11$, the one-snapshot agent-estimation term $a_{n,t}$ of
\eqref{eq:regret}. The per-snapshot average saturates lower, near $0.63$, because each switch costs a
snapshot or two of relocalization.

\paragraph{The dyadic rate.}
\label{sec:e_t1}
Theorem~\ref{thm:t1} predicts that the synthesis weights are estimable from a single snapshot with
standard error $(n^2\rho_n)^{-1/2}$. Subsampling induced subgraphs of the densest S\&P snapshot
(November 2015, $55{,}214$ edges) and refitting the one-snapshot synthesis on each, the empirical
standard error of the fitted weights falls along the dyadic rate (Figure~\ref{fig:theorypanel}, left):
a fitted log-log slope of $-0.535$ against the theoretical $-1/2$, the weight standard deviation
shrinking from $0.18$ to $0.018$ as the edge information grows sixty-six-fold. The same decay appears
on the full snapshots, where the latent-geometry weight is pinned to standard error $0.03$ on the
stressed November 2015 graph but carries errors five to twenty times larger on the sparse November
2017 graph ($1{,}986$ edges). The rate is a property of one graph, not of the series.

\begin{figure}[t]
\centering
\includegraphics[width=\textwidth]{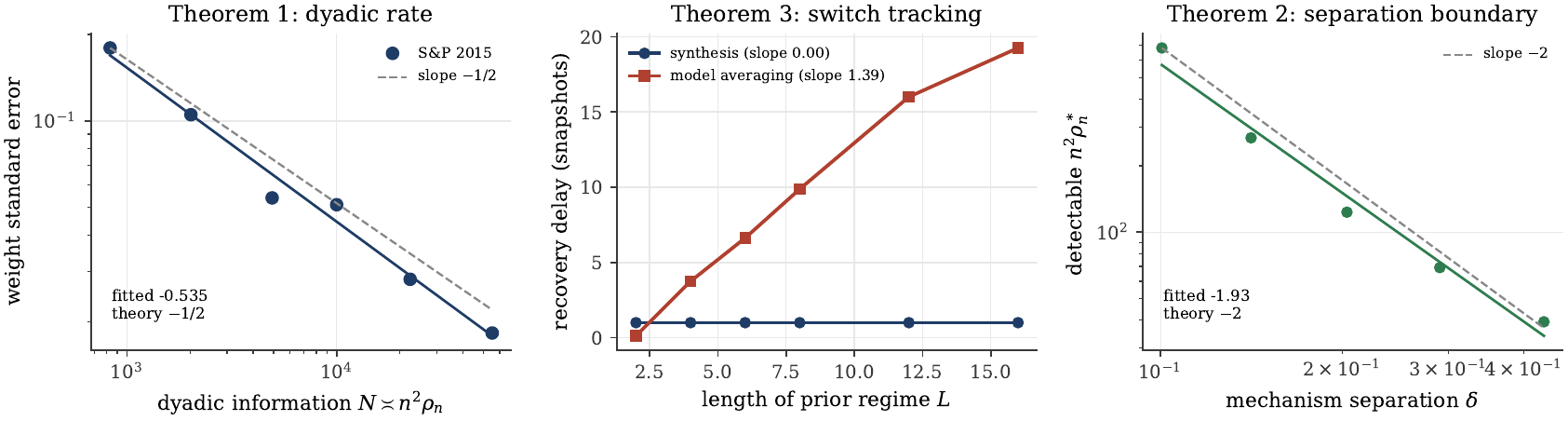}
\caption{The single-snapshot theory confirmed on the same data. (Left) Theorem~\ref{thm:t1}: the
weight standard error falls along the dyadic rate, fitted slope $-0.535$ against the predicted
$-1/2$ (S\&P 2015 stress snapshot, log-log against subsampled dyadic information
$N\asymp n^2\rho_n$). (Center) Theorem~\ref{thm:t3}: the recovery delay after a switch is flat in
the prior-regime length $L$ for the synthesis (slope $0.00$) while model averaging grows linearly
(slope $1.39$). (Right) Theorem~\ref{thm:t2}: the detectable dyadic information $n^2\rho_n^\ast$
scales as $\delta^{-2}$ in the separation $\delta$ (fitted $-1.93$), the boundary below which two
mechanisms cannot be distinguished.}
\label{fig:theorypanel}
\end{figure}

\paragraph{Coverage and normality of the weights.}
\label{sec:e_t1b}
The rate is one half of Theorem~\ref{thm:t1}; the other half is the conditional normal limit
\eqref{eq:t1clt} with the sandwich variance, which is what licenses confidence intervals for
the weights. We test it in a controlled design where the population weight is known. Two
fixed agent surfaces, a three-block community logit and a degree logit, are combined at a
known weight $\bbeta^\circ=(\alpha,\theta_{\mathrm{comm}},\theta_{\mathrm{deg}})$ with a
$\log\rho_n$ offset, and we sample many graphs from the resulting edge law, refit the dyadic
logistic regression on each, and form the inverse-information Wald intervals. Over $400$
replications at $n=300$ ($N_t\approx3{,}900$), the empirical coverage of the nominal $95\%$
intervals is $96.3\%$ for the intercept, $95.8\%$ for the community weight, and $95.8\%$ for
the degree weight, and the standardized estimates $(\widehat\beta-\beta^\circ)/\mathrm{se}$
are indistinguishable from a standard normal (Kolmogorov--Smirnov statistics $0.049$, $0.022$,
$0.029$). The standard error of the community weight falls along $N_t^{-1/2}$ with a fitted
log-log slope of $-0.500$, the dyadic rate exactly. The intercept here is centered at $\log\rho_n+\alpha$, a direct illustration of
the sparsity offset: without it the intercept silently absorbs the density. With the agent
surfaces known, the conditional limit of Theorem~\ref{thm:t1} holds with valid coverage at the
predicted rate.

This coverage is conditional on the agents, and Theorem~\ref{thm:t1} is explicit that the
unconditional limit requires $\sqrt{N_t}\,a_{n,t}\to0$, a condition the SBM agent rate violates
in the denser part of the sparsity regime. We measure the cost directly. Refitting the
community and degree agents on each sampled graph and forming the same naive inverse-information intervals, the coverage of the community weight
falls below nominal and the shortfall grows with the edge information: from $82\%$ at
$N_t\approx2{,}700$ to $71\%$ at $N_t\approx43{,}000$, while the standardized bias rises from
$0.5$ to $1.1$ standard errors over the same range. The fixed-agent coverage stays at nominal
throughout, so the gap is entirely the agent-estimation error entering the weight limit at
first order. The naive single-snapshot intervals are thus valid when the agents are known or
estimated at a negligible rate and undercover when a flexible agent is refit on the same dense
graph; the cross-fold orthogonal estimator of Theorem~\ref{thm:crossfold-orthogonal-unconditional-bps} closes this gap, with an
unconditional limit valid under an $o(N_t^{-1/4})$ agent rate.

\paragraph{The cross-fold orthogonal correction.}
\label{sec:e_t1c}
Theorem~\ref{thm:crossfold-orthogonal-unconditional-bps} asserts that a cross-fold Neyman-orthogonal score restores an
unconditional normal limit, and valid intervals, under a mild agent rate; we confirm this on a
design where the orthogonal correction is exactly constructed. A known three-block community
surface carries the target weight, while the degree mechanism is the refit nuisance, estimated
on each graph by a regularized Fourier sieve in a node-pair covariate, and the blocks carry
distinct degree levels, so the target and the nuisance are correlated and a mis-estimated
nuisance contaminates the target. The naive plug-in, which treats the fitted sieve as known and
inverts the second-stage information, undercovers the community weight: the empirical coverage
of the nominal $95\%$ interval lies between $82\%$ and $94\%$, with a standardized bias as large
as one standard error where the nuisance bias is strongest. The cross-fold orthogonal estimator,
which partials the target against the nuisance basis out of fold and forms the orthogonal
sandwich, holds nominal coverage of $96\%$ to $98\%$ with a standardized bias below $0.15$ across
the whole range of edge information, from $N_t\approx2{,}700$ to $N_t\approx43{,}200$. The
orthogonal coverage neither drifts nor degrades as the graph grows, which is the unconditional
limit of Theorem~\ref{thm:crossfold-orthogonal-unconditional-bps} made visible: the debiasing yields valid confidence
statements for an individual mechanism weight at the price of one cross-fold construction.

\paragraph{The separation rate.}
\label{sec:e_t2}
Theorem~\ref{thm:t2} predicts a separation rate $\Delta_n\asymp(n^2\rho_n)^{-1/2}$ in graphon
log-odds, the boundary between distinguishable and indistinguishable mechanisms. Sweeping the log-odds
contrast $\delta$ and locating the boundary $n^2\rho_n^\ast$ at which power crosses $0.9$, the
boundary falls from $683$ to $39$ as $\delta$ grows from $0.10$ to $0.43$ (Figure~\ref{fig:theorypanel},
right), a fitted log-log slope of $-1.93$ against the predicted $-2$. Every real snapshot, the S\&P
graphs ($1{,}986$ to $55{,}214$ edges) and the co-occurrence graphs ($700$ to $1{,}000$), lies well
inside the detectable region, the boundary sitting at vanishing contrast within the sparsity regime. The matching impossibility below the rate is
Theorem~\ref{thm:t2} itself, by the sparse-logit lemma and the two-point method.

\paragraph{Dominance over model averaging under misspecification.}
\label{sec:e_t3}
Under misspecification a coherent Bayesian model-averaging posterior concentrates on the single
Kullback--Leibler-closest agent as edge information accumulates, its weight vector tending to
$(1,0,0)$: this is Berk's collapse, the correct behavior of a posterior but the wrong default for
combination, since it discards the other agents instead of reweighting them. The synthesis instead
retains and reweights all agents, and Theorem~\ref{thm:t4} makes the resulting log-score gap
\eqref{eq:gap} strictly positive whenever the truth is not a single agent. In this misspecified
setting the gap is positive on every study, ranging from $0.39$ on the S\&P network to $3.33$ on
the power-grid mesh; Figure~\ref{fig:t4} in the Supplement plots the gap and the entropy collapse.

\paragraph{Mechanism aliasing.}
\label{sec:e_t4}
By Theorem~\ref{thm:t1}, two mechanisms are indistinguishable from a snapshot exactly when the
dyad-feature information matrix $H_t$ is singular. Its condition number $\kappa(H_t)$ is an observable
diagnostic of how close two mechanisms are to aliasing. The controls behave as predicted: a degree-corrected block
model, whose community and degree agents are distinct, is far better conditioned
($\kappa\approx340$) than the near-aliased control, whereas a structureless Erd\H{o}s--R\'enyi graph, where both agents
collapse to a constant and alias, is roughly six times worse. The real networks span four
orders of magnitude, from the dense S\&P stress snapshot ($\kappa\approx6$), where distinct
community, geometry, and degree signals make the weights sharply separable and underwrite the
precise estimates of Section~\ref{sec:e_t1}, to the extremely sparse ca-GrQc
($\kappa\approx7.8\times10^4$), which approaches aliasing, so that individual weights there are
only weakly identified even where the pooled forecast is accurate.

\subsection{Large dynamic networks}
\label{sec:scale}

The networks so far have a few hundred to a few thousand active nodes. To probe scale and
out-of-sample behavior we apply the same synthesis model, with the data construction and evaluation adapted for scale (symmetrization, balanced dyad sampling, and recent-history pruning, described below), to two
large public timestamped networks \citep{rossi2015network} on their entire node sets: the arXiv
HEP-PH citation network \citep{leskovec2007graph} ($n=28{,}093$, $T=85$ monthly snapshots), the
Enron e-mail network \citep{klimt2004enron} ($n=86{,}978$, $T=46$). The HEP-PH citation and
Enron e-mail logs are natively directed; because the model and theory of
Section~\ref{sec:model} are defined for undirected graphs, we symmetrize each to fit the undirected
observation model, placing an undirected edge between two nodes joined by at least one citation,
message, or interaction within a snapshot and discarding the directional signal. The full dyad set is too large to score at these node counts. Each forecast is therefore evaluated on
a balanced dyad sample of all edges and an equal number of randomly sampled non-edges. The sparse offset maps the sample probabilities to the population scale as in
Corollary~\ref{cor:t1}, and all methods are scored on this common balanced sample. The three
mechanism agents are fit on the discounted-average adjacency through $t-1$, so they are the dynamic block,
latent-geometry, and degree predictors used as competitors in the S\&P study
(Section~\ref{sec:sp}) and match the history-based agent definition of Section~\ref{sec:model}.
At this scale the discounted-average adjacency is pruned to its recent-history window so that the
largest graph fits in memory.

Two findings carry over to this scale (Table~\ref{tab:scale}, Figure~\ref{fig:scale}). First, the
synthesis is the best-calibrated method on both networks, with PIT deviation $0.031$ to $0.061$
and calibration error $0.041$ to $0.077$, against $0.41$ to $0.50$ for every classical combiner and
single agent; on the log-score it is lowest among the compared methods on both networks, by
$2.5$ to $3.5$ nats per dyad. Second, the weights remain tightly estimated at this scale: they favor degree and community
on the citation network, with the geometry weight near zero, and are positive across all three mechanisms on the e-mail network.
The two networks fall in different regimes: on the citation network the synthesis is calibrated and predictive, whereas the e-mail
network sits in the near-base-rate regime of Corollary~\ref{cor:t1}, where the edge probabilities are nearly uniform and the synthesis delivers calibration without discrimination.

\begin{table}[t]
\centering
\caption{Two large public timestamped networks, run with the same one-step-ahead pipeline on their
entire node sets; the three agents are fit on the discounted-average adjacency (the dynamic
predictors). Averages over the one-step-ahead
forecasts. Boldface marks the best in each column. The synthesis is the best-calibrated method on both networks and attains the
lowest log-score among the classical combiners and single agents compared here.}
\label{tab:scale}
\small
\begin{tabular}{lccc}
\toprule
Method & NLL & PIT-KS & ECE \\
\midrule
\multicolumn{4}{l}{\textit{arXiv HEP-PH} ($n=28{,}093$, $T=85$)}\\
Dynamic BPS & \textbf{0.413} & \textbf{0.031} & \textbf{0.041} \\
BMA & 3.032 & 0.457 & 0.486 \\
Stacking & 2.937 & 0.448 & 0.484 \\
Equal weight & 3.081 & 0.449 & 0.487 \\
SBM agent & 3.030 & 0.457 & 0.486 \\
GRDPG agent & 3.855 & 0.424 & 0.479 \\
Chung--Lu agent & 3.945 & 0.470 & 0.495 \\
\midrule
\multicolumn{4}{l}{\textit{Enron e-mail} ($n=86{,}978$, $T=46$)}\\
Dynamic BPS & \textbf{0.597} & \textbf{0.061} & \textbf{0.077} \\
BMA & 4.508 & 0.451 & 0.488 \\
Stacking & 4.125 & 0.447 & 0.488 \\
Equal weight & 4.163 & 0.457 & 0.491 \\
SBM agent & 4.672 & 0.495 & 0.499 \\
GRDPG agent & 4.510 & 0.431 & 0.479 \\
Chung--Lu agent & 4.615 & 0.474 & 0.496 \\
\bottomrule
\end{tabular}
\end{table}

\begin{figure}[t]
\centering
\includegraphics[width=\textwidth]{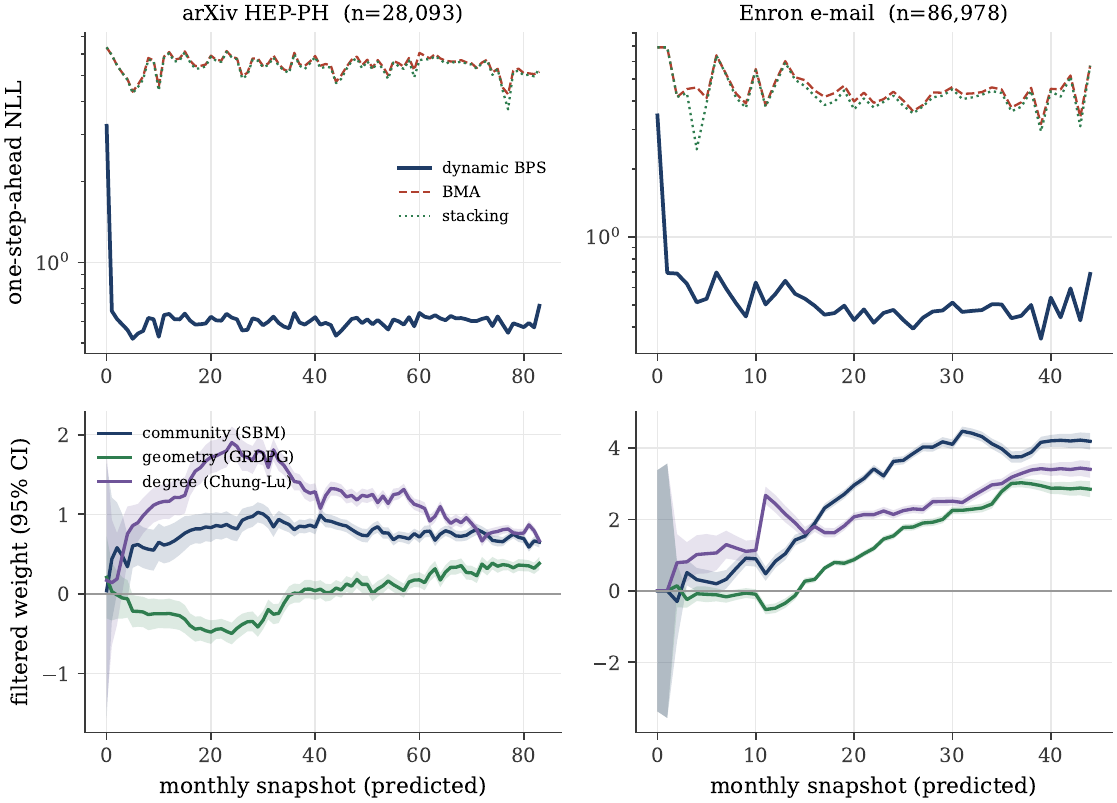}
\caption{Two large dynamic networks (arXiv HEP-PH, Enron), agents fit on the
discounted-average adjacency. Top: one-step-ahead log-score on a logarithmic scale; the synthesis stays near $1$ while
every classical combiner sits at $3$ to $5$. Bottom: filtered mechanism weights with $95\%$
confidence bands, estimable with tight intervals at these node counts; the degree and community weights lead on
the citation network and all three mechanisms carry positive weight on the e-mail network.}
\label{fig:scale}
\end{figure}

We also evaluate the one-step-ahead point forecasts by mean squared forecast error (MSFE$_{1:T}$, the Brier score for a binary edge on the balanced evaluation set of all edges and an equal-sized random sample of non-edges). The synthesis attains $0.112$ and $0.165$ on the citation and e-mail networks; the classical combiners lie near $0.47$ to $0.50$ on both, worse by $185\%$ to $345\%$, so the calibration advantage is also a point-forecast advantage.

\FloatBarrier
\section{Discussion and limitations}
\label{sec:discussion}

We developed dynamic Bayesian predictive synthesis for dynamic networks. Each generative mechanism,
such as community structure, latent geometry, or degree heterogeneity, is treated as an agent that
forecasts the next snapshot's edges, and a synthesis layer combines these forecasts with weights that
form the state of a dynamic linear model. The weights are the object of inference: the method
jointly estimates which mechanism is most informative at each time and produces a calibrated forecast,
specializing to calibrated link prediction at a single snapshot. Because a network on $n$
nodes has of order $n^2$ node pairs, one snapshot estimates the weights precisely even when the graph
is sparse, and faster than the mechanisms change. Conditional on the fitted agents the intervals hold
at the dyadic rate; for finite-dimensional agents a cross-fold orthogonal construction removes the
refitting bias and makes them unconditional, while for high-dimensional latent-position and degree
agents they remain conditional. Two mechanisms separate only above a sharp contrast threshold; a
change in the active mechanism is tracked at a cost paid only at the change, at a minimax-optimal
rate; and when no agent is correct the weight converges to the projection of the active mechanism onto
the set of agents while the forecast stays calibrated, where model averaging collapses onto a single
mechanism.

The empirical studies support these results. On real dynamic networks the estimated weights track
regime changes identified independently of the model; the synthesis remains calibrated where model
averaging collapses and pooling miscalibrates; and its advantage is largest at the regime changes
the theory identifies. The contribution is a single model that produces calibrated forecasts
without a separate recalibration step, together with interpretable time-varying mechanism weights
that carry conditional intervals, unconditional for finite-dimensional mechanisms, and identify mechanism changes. A forecaster that is not
mechanism-based can be recalibrated to match the probabilities, but does not provide the weights or
their uncertainty.

The results rest on two modeling choices: a finite, fixed set of agents, and an undirected binary
observation model with conditionally independent dyads. Enlarging the set of agents leaves the
filter unchanged but slows the estimation rate to $\{(n^2\rho_n)/J_t\}^{-1/2}$ and raises the
separation threshold. The conditionally independent dyad assumption holds only approximately for the
main-text networks, so the weight inference is validated on matched simulations and reported with a
dependence-robust variance.

We propose three extensions. First, weighted or count-valued edges replace the Bernoulli model with
a Gaussian or Poisson one. Directed and signed networks require asymmetric observation models and are
left to future work: the citation and e-mail networks studied here are natively directed and are
symmetrized to fit the undirected model, so a native-directed analysis, which must also model the
reciprocity-induced dependence between the two orientations of a dyad, is a genuine extension and not
a relabeling. The projection and tracking arguments extend to any observation model with a smooth
log-likelihood, but the directed case is not immediate. Second, the
Laplace filter can be replaced by particle filtering when a network snapshot is too sparse for the
Gaussian approximation. Third, the dynamic block-model and latent-space forecasters used here as
benchmarks \citep{xu2014dynamic,matias2017statistical,sewell2015latent,durante2014nonparametric} can
be folded into the set of agents.

\section*{Acknowledgments}
The authors thank colleagues for helpful discussions. Code to reproduce all results is
provided with this manuscript.

\section*{Disclosure Statement}
The authors report there are no competing interests to declare.

\section*{Data Availability Statement}
All networks analyzed in this article are publicly available benchmark datasets. The main text uses a
financial correlation network built from public daily S\&P 500 prices (Section~\ref{sec:sp}); the
SocioPatterns high-school and hospital-ward face-to-face contact networks
\citep{fournet2014contact,vanhems2013estimating}; and two large public timestamped networks
\citep{rossi2015network}, the arXiv HEP-PH citation network \citep{leskovec2007graph} and the Enron
e-mail network \citep{klimt2004enron}. The Supplement additionally analyzes the Bitcoin-OTC trust network
\citep{kumar2016edge,leskovec2016snap} and standard static link-prediction benchmarks, including the
ca-GrQc and ca-CondMat collaboration networks \citep{leskovec2007graph}, the political-blogs network
\citep{adamic2005political}, and the Western-states power-grid network \citep{watts1998collective}.
Code reproducing all simulations, tables, and figures, together with the scripts that construct each
snapshot set from its public source, is openly available at
\url{https://osf.io/fer69/overview?view_only=6f79f15628084adcb253f98b127d4649}.

\bibliographystyle{plainnat}
\bibliography{references}

\clearpage
\setcounter{section}{0}
\setcounter{subsection}{0}
\setcounter{equation}{0}
\setcounter{theorem}{0}
\setcounter{lemma}{0}
\setcounter{remark}{0}
\setcounter{assumption}{0}
\setcounter{definition}{0}
\setcounter{figure}{0}
\setcounter{table}{0}
\renewcommand{\thesection}{S\arabic{section}}
\renewcommand{\thesubsection}{S\arabic{section}.\arabic{subsection}}
\renewcommand{\theequation}{S\arabic{equation}}
\renewcommand{\thetheorem}{S\arabic{theorem}}
\renewcommand{\thelemma}{S\arabic{lemma}}
\renewcommand{\theremark}{S\arabic{remark}}
\renewcommand{\thefigure}{S\arabic{figure}}
\renewcommand{\thetable}{S\arabic{table}}
\part*{Supplementary Material}
\addcontentsline{toc}{part}{Supplementary Material}
\renewcommand{\thepart}{}

\singlespacing

This supplement contains the complete statements and proofs of the results of
Section~4 of the main text and of the two further inferential and tracking results. Section~\ref{supp:t1} proves the single-snapshot identification,
aliasing, and dyadic central limit theorem (Theorem~1). Section~\ref{supp:t2} proves the
sparse Bernoulli-logit perturbation lemma and the separation rate (Theorem~2).
Section~\ref{supp:t3} proves the point-prequential tracking, switch-recovery, and
localization theorem (Theorem~3). Section~\ref{supp:t4} proves the projection, calibration,
model-averaging collapse, and stacking-gap theorem (Theorem~4). Section~\ref{supp:t5} proves
the static cross-fitted synthesis corollary (Corollary~5). Section~\ref{supp:s6} proves the cross-fold orthogonal unconditional limit (Theorem~1c), and Section~\ref{supp:s7} proves the local minimax lower bound for mechanism tracking (Theorem~3b). Numbering of equations and
constants is local to each section; the constant $C$ may change from line to line.

\paragraph{Notation.}
Notation matches the main text. The population log-odds projection of the network onto the
set of mechanism agents, written $\bbeta^\circ_t$ in the main text, is denoted $\bar\bbeta_{t,n}$
here; its oracle, error-free-agent counterpart is $\bbeta^{\mathrm{or}}_{t,n}$. The synthesis
dimension is $d=J+1$; coefficient vectors are written in non-bold type; and a subscript
$n$ or $t,n$ makes the dependence on the snapshot size explicit. The information scale
$N_{t,n}$, the normalized information matrix $H_{t,n}$, the dyad feature $z_{t,e}=
(1,u^{(1)}_{t,e},\dots,u^{(J)}_{t,e})\trans$, the sparse offset $\log\rho_n$, and the four
mechanism agents (block, geometry, degree, local) are exactly the objects of Section~4.

\section{Single-snapshot synthesis: identification, aliasing, and dyadic limit}
\label{supp:t1}

\begin{theorem}[Single-snapshot synthesis: identification, aliasing, and dyadic limit]
\label{thm:t1supp}
Fix \(t\) and let \(n\to\infty\).  Let \(\mathcal E_{t,n}\) be the dyads used
in the time-\(t\) synthesis update and write
\[
        m_{t,n}=|\mathcal E_{t,n}|.
\]
For \(e\in\mathcal E_{t,n}\), put
\[
        Y_{t,e}=A_{t,e},
        \qquad
        z_{t,e}
        =
        (1,u^{(1)}_{t,e},\ldots,u^{(J)}_{t,e})^\top
        \in\mathbb R^d,
        \qquad
        d=J+1 .
\]
For the rate and central limit theorem below, \(d\) is fixed.  Let
\(\ell_{\rho,n}=\log\rho_n\) and define the sparse-offset synthesis probabilities
\[
        \eta_{t,e}(\bbeta)
        =
        \ell_{\rho,n}+\bbeta^\top z_{t,e},
        \qquad
        q_{t,e}(\bbeta)
        =
        \sigma\{\eta_{t,e}(\bbeta)\},
        \qquad
        \sigma(x)=\frac{e^x}{1+e^x}.
\]
Let \(\mathcal F_{t,n}\) be the sigma-field generated by the history, the
cross-fitted agents, the sampled dyad set \(\mathcal E_{t,n}\), the offset
\(\ell_{\rho,n}\), and the true conditional edge-probability array
\(p_{t,e}=\mathbb P(Y_{t,e}=1\mid \mathcal F_{t,n})\).
Assume that the dyad set is conditionally non-informative for the edge
outcomes: conditional on \(\mathcal F_{t,n}\),
\(Y_{t,e}\sim {\rm Bernoulli}(p_{t,e})\), \(e\in\mathcal E_{t,n}\), independently over \(e\).
Outcome-dependent case-control sampling is excluded
from this theorem unless the corresponding retrospective logit correction is
included in the likelihood.  Assume the following basic conditions.
\begin{enumerate}
\item[(B1)] The parameter space \(\Theta\subset\mathbb R^d\) is compact,
convex, and full-dimensional.
\item[(B2)] The fitted-agent design is uniformly bounded:
\(\max_{e\in\mathcal E_{t,n}}\|z_{t,e}\|\le B<\infty\).
\item[(B3)] The sparse scale is preserved uniformly on \(\Theta\): there exist
constants \(0<c_q<C_q<\infty\) such that, for all sufficiently large \(n\),
\(c_q\rho_n\le q_{t,e}(\bbeta)\le C_q\rho_n\le \tfrac12\)
for all \( e\in\mathcal E_{t,n},\ \bbeta\in\Theta\).
The true probabilities are on the same sparse scale,
\(0\le p_{t,e}\le C_p\rho_n\), and \(m_{t,n}\rho_n\to\infty\).
\end{enumerate}
Define the conditional fitted-agent population risk
\[
        R_{t,n}(\bbeta)
        =
        \sum_{e\in\mathcal E_{t,n}}
        \left[
          \log\{1+\exp(\eta_{t,e}(\bbeta))\}
          -
          p_{t,e}\eta_{t,e}(\bbeta)
        \right],
\]
and the pseudo-true set
\(\mathcal B_{t,n}=\arg\min_{\bbeta\in\Theta}R_{t,n}(\bbeta)\).
For any \(\bbeta\in\Theta\), let
\(w_{t,e}(\bbeta)=q_{t,e}(\bbeta)\{1-q_{t,e}(\bbeta)\}\),
\(N_{t,n}(\bbeta)=\sum_{e}w_{t,e}(\bbeta)\), and
\(H_{t,n}(\bbeta)=N_{t,n}(\bbeta)^{-1}\sum_{e}w_{t,e}(\bbeta)z_{t,e}z_{t,e}^\top\).
\begin{enumerate}
\item[(a)] Let
\(\mathcal N_{t,n}=\{a\in\mathbb R^d: a^\top z_{t,e}=0 \ \forall e\in\mathcal E_{t,n}\}\).
Because \(w_{t,e}(\bbeta)>0\),
\({\rm Null}\{H_{t,n}(\bbeta)\}=\mathcal N_{t,n}\) for every \(\bbeta\in\Theta\).
Hence the conditional synthesis law
\(\bbeta\mapsto\{q_{t,e}(\bbeta):e\in\mathcal E_{t,n}\}\)
is identified on \(\Theta\) if and only if \(\mathcal N_{t,n}=\{0\}\),
equivalently if and only if \(H_{t,n}(\bbeta)\) is full rank for one, and hence
for every, \(\bbeta\in\Theta\).  If \(\mathcal N_{t,n}\neq\{0\}\), then \(\bbeta\)
is identified only modulo \(\mathcal N_{t,n}\), and a linear functional
\(a^\top\bbeta\) is identifiable if and only if
\(a\in\mathcal N_{t,n}^\perp={\rm Range}\{H_{t,n}(\bbeta)\}\).
Thus exact affine dependence among \(1,u^{(1)}_{t,\cdot},\ldots,u^{(J)}_{t,\cdot}\)
aliases the corresponding separate agent weights.
\end{enumerate}
For the estimation and CLT statements, add:
\begin{enumerate}
\item[(C1)] The pseudo-true set has a uniformly interior representative
\(\bar\bbeta_{t,n}\in\mathcal B_{t,n}\) with
\({\rm dist}(\bar\bbeta_{t,n},\partial\Theta)\ge c_\Theta>0\); hence
\(\Psi_{t,n}(\bar\bbeta_{t,n})=0\), where
\(\Psi_{t,n}(\bbeta)=\sum_{e}\{p_{t,e}-q_{t,e}(\bbeta)\}z_{t,e}\).
\item[(C2)] The fixed-design Gram matrix
\(G_{t,n}=m_{t,n}^{-1}\sum_e z_{t,e}z_{t,e}^\top\) satisfies
\(\lambda_{\min}(G_{t,n})\ge g>0\).  Write \(N_{t,n}=N_{t,n}(\bar\bbeta_{t,n})\),
\(H_{t,n}=H_{t,n}(\bar\bbeta_{t,n})\).
\item[(C3)] With
\(V_{t,n}=N_{t,n}^{-1}\sum_e p_{t,e}(1-p_{t,e})z_{t,e}z_{t,e}^\top\),
assume \(H_{t,n}\to H_t\) and \(V_{t,n}\to V_t\) with \(H_t\) nonsingular.
\item[(C4)] Let \(\hat\bbeta_{t,n}\) minimize \(\ell_{t,n}(\bbeta)+P_{t,n}(\bbeta)\),
where \(\ell_{t,n}(\bbeta)=\sum_e[\log\{1+\exp(\eta_{t,e}(\bbeta))\}-Y_{t,e}\eta_{t,e}(\bbeta)]\)
and \(P_{t,n}\) is an \(\mathcal F_{t,n}\)-measurable, twice continuously
differentiable negative log-prior or penalty.  For the MLE, \(P_{t,n}\equiv0\); for a
weak MAP estimator, \(\|\nabla P_{t,n}(\bar\bbeta_{t,n})\|=o(\sqrt{N_{t,n}})\) and
\(\sup_{\bbeta\in\Theta}\|\nabla^2P_{t,n}(\bbeta)\|_{\rm op}=o(N_{t,n})\).
\end{enumerate}
Then, conditionally on \(\mathcal F_{t,n}\):
\begin{enumerate}
\item[(b)]
\(\|\hat\bbeta_{t,n}-\bar\bbeta_{t,n}\|=O_{\mathbb P}(\sqrt{(J+1)/N_{t,n}})\).
Under (B3), \(N_{t,n}\asymp m_{t,n}\rho_n\); hence for a full undirected snapshot with
\(m_{t,n}\asymp n^2\), \(\|\hat\bbeta_{t,n}-\bar\bbeta_{t,n}\|=O_{\mathbb P}(\sqrt{(J+1)/(n^2\rho_n)})\).
\item[(c)] The estimator has the conditional asymptotic linear representation
\[
        \sqrt{N_{t,n}}(\hat\bbeta_{t,n}-\bar\bbeta_{t,n})
        =
        H_{t,n}^{-1}N_{t,n}^{-1/2}
        \sum_{e\in\mathcal E_{t,n}}(Y_{t,e}-p_{t,e})z_{t,e}+o_{\mathbb P}(1),
\]
and consequently
\(\sqrt{N_{t,n}}(\hat\bbeta_{t,n}-\bar\bbeta_{t,n})\Rightarrow N(0,H_t^{-1}V_tH_t^{-1})\).
If the Bernoulli-logit model is correctly specified,
\(p_{t,e}=q_{t,e}(\bar\bbeta_{t,n})\) for every \(e\), then \(V_{t,n}=H_{t,n}\), so the
limit reduces to \(N(0,H_t^{-1})\).
\end{enumerate}
If the arrays \(\mathcal F_{t,n},\{z_{t,e}\},\{p_{t,e}\},\mathcal E_{t,n}\) are random, the
convergence is quenched: for every deterministic sequence satisfying the assumptions the
conditional law converges as stated, and if the assumptions hold with probability tending to
one and \(H_{t,n},V_{t,n}\) converge in probability, then the conditional bounded-Lipschitz
distance to the stated Gaussian limit converges to zero in probability.
Finally, let \(\bbeta^{\mathrm{or}}_{t,n}\) denote an oracle pseudo-true target based on population,
error-free agent surfaces, and suppose
\(N_{t,n}^{-1}\sup_{\bbeta\in\Theta}\|\Psi^{\rm fit}_{t,n}(\bbeta)-\Psi^{\rm or}_{t,n}(\bbeta)\|
=O_{\mathbb P}(a_{n,t})\)
and \(-N_{t,n}^{-1}\dot\Psi^{\rm or}_{t,n}(\bbeta)\succeq\lambda_{\rm or}I_d\) on the segment
between \(\bar\bbeta_{t,n}\) and \(\bbeta^{\mathrm{or}}_{t,n}\).
Then \(\|\bar\bbeta_{t,n}-\bbeta^{\mathrm{or}}_{t,n}\|=O_{\mathbb P}(a_{n,t})\). If
\(\|\tilde\bbeta_{t,n}-\hat\bbeta_{t,n}\|=O_{\mathbb P}(r_{\rm Lap})\), then
\[
        \|\tilde\bbeta_{t,n}-\bbeta^{\mathrm{or}}_{t,n}\|
        =
        O_{\mathbb P}\big(\sqrt{(J+1)/N_{t,n}}+a_{n,t}+r_{\rm Lap}\big).
\]
An oracle-centered CLT with the same covariance holds under
\(\sqrt{N_{t,n}}\{a_{n,t}+r_{\rm Lap}\}\to0\); without it the valid normal approximation is
centered at \(\bar\bbeta_{t,n}\), not \(\bbeta^{\mathrm{or}}_{t,n}\).
\end{theorem}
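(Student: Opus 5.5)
Part (a) is linear algebra and I would prove it first. For any $\bbeta$ and $a\in\R^d$,
\[
a\trans H_{t,n}(\bbeta)a=N_{t,n}(\bbeta)^{-1}\sum_e w_{t,e}(\bbeta)(a\trans z_{t,e})^2 .
\]
By (B3), $w_{t,e}(\bbeta)\ge c_q\rho_n/2>0$. Since $H_{t,n}(\bbeta)$ is positive semidefinite, $H_{t,n}(\bbeta)a=0$ if and only if $a\trans H_{t,n}(\bbeta)a=0$, and this holds if and only if $a\trans z_{t,e}=0$ for every $e$. So the null space is $\mathcal N_{t,n}$ for every $\bbeta$. Symmetry then gives $\mathrm{Range}=\mathcal N_{t,n}^\perp$.

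Identification follows from strict monotonicity of $\sigma$. We have $q_{t,e}(\bbeta)=q_{t,e}(\bbeta')$ for all $e$ if and only if $(\bbeta-\bbeta')\trans z_{t,e}=0$ for all $e$, that is, $\bbeta-\bbeta'\in\mathcal N_{t,n}$. So the law determines $\bbeta$ exactly modulo $\mathcal N_{t,n}$. A functional $a\trans\bbeta$ is constant on every coset $\bbeta+\mathcal N_{t,n}$ if and only if $a\perp\mathcal N_{t,n}$. Full-dimensionality of $\Theta$ (B1) gives the "only if" direction, because cosets meet $\Theta$ in open pieces of nonzero direction. An affine relation $c_0+\sum_j c_ju^{(j)}=0$ places $c$ in $\mathcal N_{t,n}$, and this is the aliasing statement.

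For (b) I would use the standard convex M-estimation argument for the logistic loss, conditional on $\F_{t,n}$.

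1. Rescale: set $\bbeta=\bar\bbeta_{t,n}+h/\sqrt{N_{t,n}}$.
2. Score. The score at the target is
\[
\nabla\ell_{t,n}(\bar\bbeta_{t,n})=-\sum_e(Y_{t,e}-q_{t,e})z_{t,e}=-\sum_e(Y_{t,e}-p_{t,e})z_{t,e},
\]
using $\Psi_{t,n}(\bar\bbeta_{t,n})=0$ from (C1). It has mean zero and covariance $N_{t,n}V_{t,n}$. Since $p_{t,e}\le C_p\rho_n$ and $N_{t,n}\asymp m_{t,n}\rho_n$, this covariance is $O(N_{t,n})$, so $N_{t,n}^{-1/2}\nabla\ell=O_P(\sqrt d)$.
3. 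Curvature. The Hessian is $\sum_e w_{t,e}(\bbeta)z_{t,e}z_{t,e}\trans$. By (B3) and (C2) it is bounded below uniformly on $\Theta$ by $(c_q\rho_n/2)\,m_{t,n}g\,I\succeq cN_{t,n}I$. This uniform strong convexity on the convex set $\Theta$, together with the penalty condition (C4), yields
\[
\|\hat\bbeta-\bar\bbeta\|\le C\,\|\nabla(\ell+P)(\bar\bbeta)\|/N_{t,n}=O_P(\sqrt{d/N_{t,n}}).
\]
Here I compare values of the objective along the segment, so no interiority of $\hat\bbeta$ is needed.
4. The relation $N_{t,n}\asymp m_{t,n}\rho_n$ is immediate from (B3).

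For (c) I would Taylor-expand the first-order condition. With probability tending to one, $\hat\bbeta$ is interior, since it is $o_P(1)$-close to $\bar\bbeta$ and $\bar\bbeta$ lies at distance at least $c_\Theta$ from the boundary. Hence
\[
0=\nabla\ell(\hat\bbeta)+\nabla P(\hat\bbeta)=\nabla\ell(\bar\bbeta)+\textstyle\int_0^1\nabla^2\ell\,(\hat\bbeta-\bar\bbeta)+o_P(\sqrt{N_{t,n}}).
\]
Remainder control is the step I expect to be the main obstacle. I need $N_{t,n}^{-1}\sum_e\{w_{t,e}(\bbeta)-w_{t,e}(\bar\bbeta)\}z_{t,e}z_{t,e}\trans\to0$ on a shrinking ball. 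I would use the sparse Lipschitz bound
\[
|w_{t,e}(\bbeta)-w_{t,e}(\bar\bbeta)|\le C\rho_n B\|\bbeta-\bar\bbeta\|,
\]
which holds because $q'=q(1-q)\le C_q\rho_n$. This gives a relative error $O(\|\bbeta-\bar\bbeta\|)=o_P(1)$, and inverting yields the linear representation. The same argument controls the penalty Hessian through (C4).

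Normality of $N_{t,n}^{-1/2}\sum_e(Y_{t,e}-p_{t,e})z_{t,e}$ then follows from the Lindeberg–Feller theorem applied via the Cramér–Wold device. Each summand is bounded by $B/\sqrt{N_{t,n}}\to0$ because $m_{t,n}\rho_n\to\infty$, and the covariance is $V_{t,n}\to V_t$ by (C3). Slutsky's lemma with $H_{t,n}\to H_t$ finishes the argument. Under correct specification, $p=q$ gives $V_{t,n}=H_{t,n}$ directly.

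For random arrays, the quenched statement follows by applying the deterministic result along subsequences and using the characterization of convergence in probability through subsequences. Finally, for the oracle target:
\begin{itemize}
\item Since $\Psi^{\rm fit}(\bar\bbeta)=0$, we have $\|\Psi^{\rm or}(\bar\bbeta)\|\le N_{t,n}O_P(a_{n,t})$.
\item Since $\Psi^{\rm or}(\bbeta^{\rm or})=0$, the mean value theorem along the segment, with the stated curvature $\lambda_{\rm or}$, gives $\|\bar\bbeta-\bbeta^{\rm or}\|\le O_P(a_{n,t})/\lambda_{\rm or}$.
\item The triangle inequality with the (b) bound and $r_{\rm Lap}$ gives the combined rate.
\item The oracle-centered CLT follows by Slutsky's lemma when $\sqrt{N_{t,n}}(a_{n,t}+r_{\rm Lap})\to0$.
\end{itemize}
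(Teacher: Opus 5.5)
Your proposal is correct and follows the paper's proof essentially step for step. For (a) you use the quadratic-form and strict-monotonicity argument. For (b) you use uniform strong convexity of the penalized logistic loss at curvature of order $N_{t,n}$, with the score at $\bar\bbeta_{t,n}$ reduced to $\sum_e(Y_{t,e}-p_{t,e})z_{t,e}$ via (C1). For (c) you expand the first-order condition, bound the Hessian-variation remainder through $|q'|\le C_q\rho_n$ (the paper phrases this as a third-derivative bound), and finish with Lindeberg--Feller, Cram\'er--Wold and Slutsky; the oracle target is handled by the same score-perturbation and mean-value argument. The differences are cosmetic: you get the rate from the basic strong-convexity inequality on the convex set $\Theta$ rather than the paper's ball argument, so interiority is first needed only in (c), and you spell out the quenched statement through a subsequence argument where the paper simply applies the deterministic result realization by realization.
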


\begin{proof}
All probability statements in the main part of the proof are conditional on
\(\mathcal F_{t,n}\).  Thus the dyad set, the fitted-agent features, the
sparse offset, and the true probability array are fixed.  We suppress \(t\)
from the notation when no confusion is possible.
Conditional on \(\mathcal F_{t,n}\), the variables \(\{Y_e:e\in\mathcal E_n\}\)
are independent Bernoulli with success probabilities \(p_e\).  Since
the dyad sample is conditionally non-informative, the ordinary Bernoulli
likelihood is the correct conditional likelihood.
We first prove identification.  Because \(\sigma\) is strictly increasing,
\(q_e(\bbeta)=q_e(\bbeta')\) for every \(e\in\mathcal E_n\) if and only if
\((\bbeta-\bbeta')^\top z_e=0\) for every \(e\).
For any \(\bbeta\in\Theta\) and \(a\in\mathbb R^d\),
\[
        a^\top H_n(\bbeta)a
        =
        N_n(\bbeta)^{-1}\sum_{e\in\mathcal E_n}w_e(\bbeta)(a^\top z_e)^2 .
\]
Since \(w_e(\bbeta)>0\), this equals zero if and only if \(a^\top z_e=0\) for every \(e\).
Therefore \({\rm Null}\{H_n(\bbeta)\}=\mathcal N_n\), and the map
\(\bbeta\mapsto\{q_e(\bbeta)\}\) is injective on \(\Theta\) iff \(\mathcal N_n=\{0\}\),
i.e. iff \(H_n(\bbeta)\) is full rank.  If \(a\in\mathcal N_n\), then
\(q_e(\bbeta+sa)=q_e(\bbeta)\) for every feasible \(s\) and \(e\), so the likelihood and risk
are constant along feasible null-space directions, whence \(\bbeta\) is identified only modulo
\(\mathcal N_n\).  A linear functional \(b^\top\bbeta\) is constant on
\(\bbeta+\mathcal N_n\) iff \(b\perp\mathcal N_n\), i.e. \(b\in\mathcal N_n^\perp=
{\rm Range}\{H_n(\bbeta)\}\).  An affine dependence among the agent dyad-logit surfaces is a
nonzero \(c\) with \(c^\top z_e=0\) for every \(e\), i.e. \(c\in\mathcal N_n\); the
corresponding separate weights are aliased.  This proves (a).
We now prove the rate and the CLT.  By (B3),
\(q_e(\bbeta)\{1-q_e(\bbeta)\}\asymp\rho_n\) uniformly over \(e,\bbeta\), so
\(N_n=\sum_e \bar q_e(1-\bar q_e)\asymp m_n\rho_n\) with \(\bar q_e=q_e(\bar\bbeta_n)\); thus
\(N_n\to\infty\), and if \(m_n\asymp n^2\) then \(N_n\asymp n^2\rho_n\).
Let \(U_n(\bbeta)=\sum_e\{Y_e-q_e(\bbeta)\}z_e\) be the empirical score, so
\(\nabla\ell_n(\bbeta)=-U_n(\bbeta)\).  The conditional population score is
\(\Psi_n(\bbeta)=\sum_e\{p_e-q_e(\bbeta)\}z_e\), and \(\Psi_n(\bar\bbeta_n)=0\), so
\(U_n(\bar\bbeta_n)=\sum_e(Y_e-p_e)z_e\) with conditional covariance
\(\sum_e p_e(1-p_e)z_ez_e^\top=N_nV_n\).  Because \(p_e\le C_p\rho_n\), \(\|z_e\|\le B\), and
\(N_n\asymp m_n\rho_n\), the \(V_n\) are uniformly bounded, so for fixed \(d\),
\(\|U_n(\bar\bbeta_n)\|=O_{\mathbb P}(\sqrt{N_n})\).
The score derivative is \(\dot U_n(\bbeta)=-\sum_e q_e(\bbeta)\{1-q_e(\bbeta)\}z_ez_e^\top\), so
the information is \(I_n(\bbeta)=\sum_e q_e(\bbeta)\{1-q_e(\bbeta)\}z_ez_e^\top\).  By (C2),
\(G_n\succeq gI_d\), and by (B3) there is \(c>0\) with
\(q_e(\bbeta)\{1-q_e(\bbeta)\}\ge c\rho_n\), so
\(I_n(\bbeta)\succeq c\rho_n m_n G_n\succeq cg\,m_n\rho_n I_d\succeq\lambda N_nI_d\)
uniformly over \(\bbeta\in\Theta\), and \(I_n(\bar\bbeta_n)=N_nH_n\).
Let \(Q_n(\bbeta)=\ell_n(\bbeta)+P_n(\bbeta)\).  By (C4),
\(\nabla^2Q_n(\bbeta)=I_n(\bbeta)+\nabla^2P_n(\bbeta)\succeq\tfrac\lambda2 N_nI_d\) uniformly,
with probability tending to one, so \(Q_n\) is strongly convex with curvature of order \(N_n\).
Writing \(h=\bbeta-\bar\bbeta_n\), Taylor's theorem gives, for some \(\bbeta_h\) on the segment,
\[
        Q_n(\bar\bbeta_n+h)-Q_n(\bar\bbeta_n)
        \ge
        -\|h\|\{\|U_n(\bar\bbeta_n)\|+\|\nabla P_n(\bar\bbeta_n)\|\}
        +\tfrac\lambda4 N_n\|h\|^2 .
\]
Taking \(\|h\|=M\sqrt{d/N_n}\) and \(M\) large makes the right side positive on the boundary
of the ball with probability tending to one; since \(\bar\bbeta_n\) is interior and \(Q_n\) is
strongly convex, the minimizer lies inside, so
\(\|\hat\bbeta_n-\bar\bbeta_n\|=O_{\mathbb P}(\sqrt{d/N_n})\), which is (b) because \(d=J+1\).
As \(\|\hat\bbeta_n-\bar\bbeta_n\|=o_{\mathbb P}(1)\) and \(\bar\bbeta_n\) is interior,
\(\hat\bbeta_n\) is interior with probability tending to one, so
\(U_n(\hat\bbeta_n)=\nabla P_n(\hat\bbeta_n)\).  Expanding \(U_n\) about \(\bar\bbeta_n\),
\[
        U_n(\hat\bbeta_n)
        =
        U_n(\bar\bbeta_n)-N_nH_n(\hat\bbeta_n-\bar\bbeta_n)+R_n,
\]
where, since the logistic third derivative is bounded by a constant times
\(\sum_e q_e(\bbeta)\{1-q_e(\bbeta)\}\|z_e\|^3=O(N_n)\),
\(\|R_n\|=O_{\mathbb P}(N_n\|\hat\bbeta_n-\bar\bbeta_n\|^2)=O_{\mathbb P}(1)=o_{\mathbb P}(\sqrt{N_n})\).
Also \(\|\nabla P_n(\hat\bbeta_n)\|\le\|\nabla P_n(\bar\bbeta_n)\|+
\sup_{\bbeta}\|\nabla^2P_n(\bbeta)\|_{\rm op}\|\hat\bbeta_n-\bar\bbeta_n\|=o_{\mathbb P}(\sqrt{N_n})\).
Combining,
\[
N_nH_n(\hat\bbeta_n-\bar\bbeta_n)=U_n(\bar\bbeta_n)+o_{\mathbb P}(\sqrt{N_n}),
\]
so
\[
\sqrt{N_n}(\hat\bbeta_n-\bar\bbeta_n)=H_n^{-1}N_n^{-1/2}U_n(\bar\bbeta_n)+o_{\mathbb P}(1),
\]
which is the asserted linear representation since \(U_n(\bar\bbeta_n)=\sum_e(Y_e-p_e)z_e\).
For the normal limit, fix \(a\in\mathbb R^d\) and set
\(X_{e,n}=N_n^{-1/2}a^\top z_e(Y_e-p_e)\).  These are conditionally independent, mean zero,
with \(|X_{e,n}|\le\|a\|B/\sqrt{N_n}\to0\), so Lindeberg's condition holds, and their variance
sums to \(a^\top V_na\).  If \(V_n\to V_t\), Lindeberg--Feller and the Cram\'er--Wold device give
\(N_n^{-1/2}\sum_e(Y_e-p_e)z_e\Rightarrow N(0,V_t)\); since \(H_n\to H_t\) nonsingular,
Slutsky yields \(\sqrt{N_n}(\hat\bbeta_n-\bar\bbeta_n)\Rightarrow N(0,H_t^{-1}V_tH_t^{-1})\).
Under correct specification \(p_e=\bar q_e\), so \(V_n=H_n\) and the covariance is \(H_t^{-1}\).
When the arrays are random, the argument applies to every deterministic realization satisfying
the assumptions, giving the quenched statement.
For the oracle target, \(\Psi^{\rm fit}_n(\bar\bbeta_n)=0\) and
\(\Psi^{\rm or}_n(\bbeta^{\mathrm{or}}_{t,n})=0\), so
\(\|\Psi^{\rm or}_n(\bar\bbeta_n)\|=O_{\mathbb P}(N_na_{n,t})\); the mean-value theorem and the
oracle nonsingularity bound give \(\|\bar\bbeta_n-\bbeta^{\mathrm{or}}_{t,n}\|=O_{\mathbb P}(a_{n,t})\).
The triangle inequality with \(\|\tilde\bbeta_n-\hat\bbeta_n\|=O_{\mathbb P}(r_{\rm Lap})\) and
the rate in (b) gives the displayed three-term bound, and under
\(\sqrt{N_n}\{a_{n,t}+r_{\rm Lap}\}\to0\) the oracle-centered CLT has the same covariance.
\end{proof}

\section{Sparse Bernoulli-logit perturbations and the separation rate}
\label{supp:t2}

The lemma below is the engine for the separation rate of Theorem~2. It quantifies, on the
sparse scale, how the Kullback--Leibler divergence, the Hellinger affinity, the $\chi^2$
divergence, and the log-likelihood-ratio variance between two Bernoulli laws scale with the
dyadic log-odds gap. The separation rate then follows by chi-square tensorization across
independent dyads together with the two-point method.

\begin{lemma}[Sparse Bernoulli logit perturbations]
\label{lem:sparse-bernoulli-logit}
Let \(p,q\in(0,1)\), and write
\(x=\operatorname{logit}(q)-\operatorname{logit}(p)\).
Assume that, for fixed constants \(0<c_0<C_0<\infty\) and \(\eta_0<\infty\),
\[
c_0\rho \le p,q\le C_0\rho\le \frac12,
\qquad
|x|\le \eta_0 .
\tag{L0}
\]
Then there exist constants \(0<c<C<\infty\), depending only on \(c_0,C_0,\eta_0\), such that
\[
c\,p(1-p)x^2
\le
D\{\operatorname{Bern}(q)\|\operatorname{Bern}(p)\}
\le
C\,p(1-p)x^2,
\tag{L1}
\]
\[
c\,p(1-p)x^2
\le
-\log\!\left[\sqrt{pq}+\sqrt{(1-p)(1-q)}\right]
\le
C\,p(1-p)x^2,
\tag{L2}
\]
\[
\chi^2\{\operatorname{Bern}(q),\operatorname{Bern}(p)\}
=
\frac{(q-p)^2}{p(1-p)}
\le
C\,p(1-p)x^2,
\tag{L3}
\]
and the same bounds hold with \(p\) and \(q\) interchanged.  Moreover, if
\(Z=Y\log\tfrac qp+(1-Y)\log\tfrac{1-q}{1-p}\) with \(Y\sim\operatorname{Bern}(q)\), then
\[
\operatorname{Var}(Z)\le C\,p(1-p)x^2 .
\tag{L4}
\]
\end{lemma}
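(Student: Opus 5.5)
The plan is to reduce every quantity to a smooth function of the log-odds gap \(x\) and Taylor-expand in \(x\) around \(0\). The sparse scale (L0) makes all the relevant curvatures comparable to \(p(1-p)\asymp p\asymp\rho\).

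\emph{Key parametrization.} Write \(q=\sigma(\logit p+x)\). By the mean value theorem,
\[
q-p=\sigma'(\xi)x=\tilde p(1-\tilde p)x
\]
for some \(\tilde p\) between \(p\) and \(q\). Under (L0), \(\tilde p\in[c_0\rho,C_0\rho]\) and \(1-\tilde p\in[\tfrac12,1]\). Hence \(|q-p|\asymp p(1-p)|x|\), with constants depending only on \(c_0,C_0\). This immediately gives (L3):
\[
\frac{(q-p)^2}{p(1-p)}\le C\frac{p^2(1-p)^2x^2}{p(1-p)}=C\,p(1-p)x^2 .
\]
The same bound holds with \(p,q\) swapped, since \(q(1-q)\asymp p(1-p)\) because \(q/p\in[c_0/C_0,C_0/c_0]\).

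\emph{KL, (L1).} Consider the map \(x\mapsto D\{\Bern(\sigma(\logit p+x))\|\Bern(p)\}\). It is the Bregman divergence of the log-partition function \(A(\eta)=\log(1+e^\eta)\) in natural parameters:
\[
D\{\Bern(q)\|\Bern(p)\}=A(\eta_p)-A(\eta_q)-A'(\eta_q)(\eta_p-\eta_q).
\]
Taylor's theorem with integral remainder therefore gives
\[
D=x^2\int_0^1 (1-s)\,\sigma'(\eta_q-s x)\,ds .
\]
Every point \(\sigma(\eta_q-sx)\) lies between \(p\) and \(q\), so \(\sigma'(\cdot)\in[c\,p(1-p),C\,p(1-p)]\). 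This yields both sides of (L1), and the interchange is symmetric. For (L2), use \(-\log(1-h)\asymp h\) for \(h\in[0,\tfrac12]\), where
\[
h=1-\sqrt{pq}-\sqrt{(1-p)(1-q)}=\tfrac12\bigl\{(\sqrt p-\sqrt q)^2+(\sqrt{1-p}-\sqrt{1-q})^2\bigr\}.
\]
Here \((\sqrt p-\sqrt q)^2=(q-p)^2/(\sqrt p+\sqrt q)^2\asymp (q-p)^2/p\asymp p(1-p)^2x^2\), and the second term is \(O((q-p)^2)\), which is of lower order. So \(h\asymp p(1-p)x^2\), which is bounded since \(\rho\le 1/2\) and \(|x|\le\eta_0\); this gives (L2).

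\emph{Variance, (L4).} Note that \(Z-\E Z=(Y-q)\bigl[\log(q/p)-\log\{(1-q)/(1-p)\}\bigr]=(Y-q)\,x\), because the bracket is exactly \(\logit q-\logit p\). Hence
\[
\mathrm{Var}(Z)=q(1-q)x^2\le C\,p(1-p)x^2 .
\]

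The only delicate step is keeping constants uniform as \(\rho\to0\). The point is that every intermediate probability lies in \([c_0\rho,C_0\rho]\), so \(\sigma'\) there is comparable to \(p(1-p)\) with ratios controlled by \(C_0/c_0\). The bound \(|x|\le\eta_0\) ensures bounded ratios and \(h\) bounded away from \(1\), and I expect no real obstacle beyond this bookkeeping. In the Hellinger lower bound, I will be careful to use \(\sqrt p+\sqrt q\le 2\sqrt{C_0\rho}\).
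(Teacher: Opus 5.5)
Your proof is correct. Its skeleton matches the paper's: reduce everything to $|q-p|\asymp p(1-p)|x|$, then use the Hellinger decomposition $2(1-A)=(\sqrt p-\sqrt q)^2+(\sqrt{1-p}-\sqrt{1-q})^2$ and the $\chi^2$ identity. Two steps, however, take a different and somewhat cleaner route.

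\textbf{The KL bound (L1).} The paper expands $q\mapsto D\{\Bern(q)\|\Bern(p)\}$ in the mean parameter, whose second derivative is $1/\{r(1-r)\}$. This gives $D\asymp(q-p)^2/\{p(1-p)\}$. The paper then converts to $x$ through the closed form $q-p=p(1-p)(e^x-1)/(1-p+pe^x)$ and the boundedness of $(e^x-1)/x$ on $[-\eta_0,\eta_0]$. You instead write the KL as the Bregman divergence of $\log(1+e^\eta)$ and expand in the natural parameter. This yields $D=x^2\int_0^1(1-s)\sigma'(\eta_q-sx)\,ds$ directly, with no conversion step.

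\textbf{The variance bound (L4).} The paper bounds $\E Z^2$ using $\log(q/p)=O(x)$ and $\log\{(1-q)/(1-p)\}=O(px)$. Your observation that $Z=\log\{(1-q)/(1-p)\}+xY$ gives $\operatorname{Var}(Z)=q(1-q)x^2$ exactly. This is sharper, and it makes the interchanged version immediate; that version is needed under $P_n$ in the achievability part of Theorem~\ref{thm:t2supp}.

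A side benefit of your mean-value and integral-remainder devices is that they never use $\eta_0$. On the window (L0) one automatically has $|x|\le\log(2C_0/c_0)$, so your constants depend on $c_0,C_0$ alone.

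\textbf{A small correction in (L2).} Your stated reason for $h\le\tfrac12$ does not work. That $h\asymp p(1-p)x^2$ is bounded only gives $h\le C\rho\,\eta_0^2$ for some constant $C$, which need not be at most $\tfrac12$. The right reason is that $p,q\le\tfrac12$ forces $\sqrt{(1-p)(1-q)}\ge\tfrac12$. Hence $A\ge\tfrac12$ and $h=1-A\le\tfrac12$, which is exactly the paper's remark that $A$ is bounded away from zero.
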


\begin{proof}
Since \(q=\sigma\{\operatorname{logit}(p)+x\}=\frac{pe^x}{1-p+pe^x}\),
\[
q-p=\frac{p(1-p)(e^x-1)}{1-p+pe^x}.
\]
Under (L0), \(p\asymp q\asymp \rho\) and \(1-p\asymp 1-q\asymp1\); and because \(x\in[-\eta_0,\eta_0]\),
the map \(x\mapsto(e^x-1)/x\) (extended to \(1\) at \(0\)) is bounded above and below by positive
constants. Hence
\[
(q-p)^2\asymp p^2(1-p)^2x^2 .
\tag{L5}
\]
For Bernoulli laws,
\(D\{\operatorname{Bern}(q)\|\operatorname{Bern}(p)\}=q\log\tfrac qp+(1-q)\log\tfrac{1-q}{1-p}\).
As a function of \(q\) it vanishes to first order at \(q=p\) and has second derivative
\(r^{-1}(1-r)^{-1}\) at an intermediate \(r\) between \(p\) and \(q\); since \(p\asymp q\asymp\rho\),
Taylor's theorem gives
\(D\{\operatorname{Bern}(q)\|\operatorname{Bern}(p)\}\asymp(q-p)^2/\{p(1-p)\}\asymp p(1-p)x^2\),
which is (L1). With the Hellinger affinity \(A(p,q)=\sqrt{pq}+\sqrt{(1-p)(1-q)}\),
\(2\{1-A(p,q)\}=(\sqrt p-\sqrt q)^2+(\sqrt{1-p}-\sqrt{1-q})^2\), so
\(1-A(p,q)\asymp(q-p)^2/\{p(1-p)\}\asymp p(1-p)x^2\); since \(p,q\le\tfrac12\), \(A\) is bounded away
from zero and \(-\log A\asymp 1-A\), which is (L2). The identity
\(\chi^2\{\operatorname{Bern}(q),\operatorname{Bern}(p)\}=(q-p)^2/\{p(1-p)\}\) with (L5) is (L3); the
interchanged bounds follow since \(p\asymp q\). Finally \(\log(q/p)=x-\log\{1+p(e^x-1)\}=O(x)\) and
\(\log\{(1-q)/(1-p)\}=-\log\{1+p(e^x-1)\}=O(px)\), so
\(\mathbb E Z^2\le C\{qx^2+(1-q)p^2x^2\}\le C\,p(1-p)x^2\), giving (L4).
\end{proof}

We now state the separation rate. Fix a snapshot and a dyad set \(\mathcal E_n\) of size
\(m_n\asymp n^2\). Let \(p_e\) and \(q_e\) be two sparse edge-probability arrays on the scale
\(\rho_n\), with per-dyad log-odds gap \(x_e=\operatorname{logit}q_e-\operatorname{logit}p_e\),
and define the aggregate Fisher signal
\[
        S_n
        =
        \sum_{e\in\mathcal E_n}p_e(1-p_e)x_e^2 .
\]
Under a uniform gap \(|x_e|\asymp\Delta_n\) and the sparse scale \(p_e(1-p_e)\asymp\rho_n\),
\(S_n\asymp m_n\rho_n\Delta_n^2\asymp n^2\rho_n\Delta_n^2\).

\begin{theorem}[Mechanism separation rate]
\label{thm:t2supp}
Assume the sparse scale \(c_0\rho_n\le p_e,q_e\le C_0\rho_n\le\tfrac12\) and the bounded gap
\(\sup_e|x_e|\le\eta_0\) of Lemma~\ref{lem:sparse-bernoulli-logit}, with the dyads conditionally
independent. Write \(P_n=\bigotimes_{e}\operatorname{Bern}(p_e)\) and
\(Q_n=\bigotimes_{e}\operatorname{Bern}(q_e)\).
\begin{enumerate}
\item[(a)] {\rm(Achievability.)} If \(S_n\to\infty\), there is a test \(\phi_n\) of \(P_n\) versus
\(Q_n\) with \(\mathbb E_{P_n}\phi_n+\mathbb E_{Q_n}(1-\phi_n)\to0\). In particular, the two
mechanism configurations are asymptotically distinguishable from a single snapshot whenever
\(n^2\rho_n\Delta_n^2\to\infty\).
\item[(b)] {\rm(Impossibility.)} If \(S_n=O(1)\), then
\(\liminf_n\inf_{\phi}\{\mathbb E_{P_n}\phi+\mathbb E_{Q_n}(1-\phi)\}>0\), where the infimum is over
all tests. The two configurations are mutually contiguous and cannot be separated with power
tending to one. In particular this holds whenever \(n^2\rho_n\Delta_n^2=O(1)\).
\end{enumerate}
Hence the critical separation rate is \(\Delta_n\asymp(n^2\rho_n)^{-1/2}\): detection is possible
above it and impossible below it.
\end{theorem}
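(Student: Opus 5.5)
The plan is to reduce both parts to per-dyad divergence bounds from Lemma~\ref{lem:sparse-bernoulli-logit}, tensorized over the independent dyads.

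\textbf{Part (a): achievability.} I will use the likelihood-ratio test
\[
\phi_n=\1\{L_n>0\},\qquad L_n=\log(dQ_n/dP_n)=\sum_e Z_e ,
\]
with $Z_e=Y_e\log(q_e/p_e)+(1-Y_e)\log\{(1-q_e)/(1-p_e)\}$.
\begin{itemize}
\item Under $Q_n$, the mean is $\E_{Q_n}L_n=\sum_e D\{\Bern(q_e)\|\Bern(p_e)\}$. By (L1) this is $\ge cS_n$.
\item By (L4) and independence, $\mathrm{Var}_{Q_n}L_n\le CS_n$.
\item Chebyshev then gives $Q_n(L_n\le0)\le CS_n/(cS_n)^2\to0$.
\item Under $P_n$ the argument is symmetric: $\E_{P_n}L_n=-\sum_e D\{\Bern(p_e)\|\Bern(q_e)\}\le -cS_n$, using the interchanged version of (L1). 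The variance is bounded by $CS_n$ by the same computation as (L4) with the roles of $p$ and $q$ swapped, which the lemma allows since $p\asymp q$.
\end{itemize}
So both error probabilities vanish. A shortcut is to bound the total error via Hellinger: $1-\mathrm{TV}\le\prod_e A(p_e,q_e)$, and by (L2) $\prod_e A(p_e,q_e)=\exp\{-\sum_e(-\log A_e)\}\le e^{-cS_n}\to0$. This shortcut is cleaner, and I would present it as the main route, keeping the Chebyshev argument as a remark.

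\textbf{Part (b): impossibility.} The minimax total error equals $1-\mathrm{TV}(P_n,Q_n)$, so it suffices to bound $\mathrm{TV}$ away from one.
\begin{itemize}
\item Using the lower half of (L2), i.e.\ the upper bound $-\log A_e\le C p_e(1-p_e)x_e^2$, the affinity satisfies $\prod_e A_e\ge e^{-CS_n}$. This is bounded below when $S_n=O(1)$.
\item By Le Cam's inequality, $1-\mathrm{TV}\ge\tfrac12(\prod_e A_e)^2\ge\tfrac12e^{-2CS_n}$, which is a positive constant.
\end{itemize}
For contiguity, I will tensorize chi-square: $1+\chi^2(Q_n,P_n)=\prod_e(1+\chi^2_e)\le\exp(\sum_e\chi^2_e)\le e^{CS_n}$ by (L3). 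This gives uniformly bounded second moments of $dQ_n/dP_n$ under $P_n$, hence uniform integrability and $Q_n\lhd P_n$; the reverse direction follows by interchanging $p$ and $q$. The final rate statement is then the identity $S_n\asymp n^2\rho_n\Delta_n^2$ under the uniform gap and $p_e(1-p_e)\asymp\rho_n$.

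\textbf{Main obstacle.} The only delicate step is making sure the lemma's constants are uniform over dyads. All the constants depend only on $(c_0,C_0,\eta_0)$, which are assumed uniform, so the sums are controlled termwise.
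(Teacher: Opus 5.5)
Your proposal is correct. It routes both halves through the Hellinger affinity, whereas the paper uses Chebyshev for (a) and $\chi^2$ tensorization for (b).

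For (a), the paper thresholds the likelihood-ratio statistic at the midpoint $\tfrac12\{\E_{Q_n}\Lambda_n+\E_{P_n}\Lambda_n\}$. It then combines (L1) (mean gap $\ge 2cS_n$), (L4) (variance $\le CS_n$) and Chebyshev, which gives total error $O(1/S_n)$. That is essentially your ``remark'' version, and your threshold $0$ works equally well. Your main route, $1-\mathrm{TV}(P_n,Q_n)\le\prod_e A(p_e,q_e)\le e^{-cS_n}$ via the left inequality of (L2), needs no variance computation. It also shows that the threshold-$0$ likelihood-ratio test has exponentially small total error.

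For (b), the paper bounds $\chi^2(Q_n\|P_n)\le e^{CS_n}-1$ via (L3) and then invokes $\mathrm{TV}\le\tfrac12\sqrt{\chi^2}$. That inequality keeps $\mathrm{TV}$ below one only when $\chi^2<4$, i.e.\ for $S_n$ below a small constant. For a general bounded $S_n$ an extra step is needed, for instance Bretagnolle--Huber combined with $\KL\le\log(1+\chi^2)$. Your Le Cam bound, $1-\mathrm{TV}\ge\tfrac12 A(P_n,Q_n)^2\ge\tfrac12 e^{-2CS_n}$, gives an explicit positive constant for every bounded $S_n$ and avoids the issue. A small labeling point: this uses the \emph{right-hand} inequality of (L2), the upper bound on $-\log A_e$, not its ``lower half''; the content you state is the right one.

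Your second-moment argument, $Q_n(G_n)\le\{(1+\chi^2(Q_n\|P_n))\,P_n(G_n)\}^{1/2}$ with $1+\chi^2\le e^{CS_n}$, together with its interchanged form, actually proves the mutual contiguity the statement asserts. The paper's proof only bounds the total variation.

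In short, the paper's route gives an explicit moment-based analysis of a concrete test. Yours uses one quantity that tensorizes exactly, handles both directions, and yields sharper and fully explicit constants.
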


\begin{proof}
The argument is the two-point method, with the per-dyad divergence bounds supplied by
Lemma~\ref{lem:sparse-bernoulli-logit} and aggregated across independent dyads.

\emph{Impossibility (b).} By the tensorization identity for the \(\chi^2\) divergence of product
measures,
\[
        1+\chi^2(Q_n\,\|\,P_n)
        =
        \prod_{e\in\mathcal E_n}
        \left\{1+\chi^2\big(\operatorname{Bern}(q_e)\,\|\,\operatorname{Bern}(p_e)\big)\right\}
        \le
        \exp\!\left\{\sum_{e\in\mathcal E_n}\chi^2_e\right\},
\]
where \(\chi^2_e=\chi^2(\operatorname{Bern}(q_e)\,\|\,\operatorname{Bern}(p_e))\) and we used
\(1+u\le e^u\). By Lemma~\ref{lem:sparse-bernoulli-logit}(L3),
\(\chi^2_e\le C\,p_e(1-p_e)x_e^2\), so
\(\sum_e\chi^2_e\le C\,S_n=O(1)\). Therefore \(\chi^2(Q_n\,\|\,P_n)\le e^{C\,O(1)}-1=O(1)\). The
total variation distance obeys
\(\mathrm{TV}(P_n,Q_n)\le\tfrac12\sqrt{\chi^2(Q_n\,\|\,P_n)}\), so \(\mathrm{TV}(P_n,Q_n)\) is bounded
away from one. Since for any test
\(\mathbb E_{P_n}\phi+\mathbb E_{Q_n}(1-\phi)\ge1-\mathrm{TV}(P_n,Q_n)\), with equality for the
likelihood-ratio test, the total testing error is bounded away from zero, proving (b).

\emph{Achievability (a).} Consider the log-likelihood-ratio statistic
\[
        \Lambda_n
        =
        \sum_{e\in\mathcal E_n}
        \left\{A_e\log\frac{q_e}{p_e}+(1-A_e)\log\frac{1-q_e}{1-p_e}\right\},
\]
and the test \(\phi_n=\1\{\Lambda_n>\tau_n\}\) with
\(\tau_n=\tfrac12\{\mathbb E_{Q_n}\Lambda_n+\mathbb E_{P_n}\Lambda_n\}\). Writing
\(D_n^Q=D(Q_n\|P_n)=\mathbb E_{Q_n}\Lambda_n\) and \(D_n^P=D(P_n\|Q_n)=-\mathbb E_{P_n}\Lambda_n\),
Lemma~\ref{lem:sparse-bernoulli-logit}(L1) and its interchanged form give
\[
        D_n^Q=\sum_e D\{\operatorname{Bern}(q_e)\|\operatorname{Bern}(p_e)\}\ge c\,S_n,
        \qquad
        D_n^P=\sum_e D\{\operatorname{Bern}(p_e)\|\operatorname{Bern}(q_e)\}\ge c\,S_n,
\]
so \(\mathbb E_{Q_n}\Lambda_n-\mathbb E_{P_n}\Lambda_n=D_n^Q+D_n^P\ge 2c\,S_n\to\infty\). By
Lemma~\ref{lem:sparse-bernoulli-logit}(L4) applied under \(Q_n\), and its interchanged form under
\(P_n\), the per-dyad summands have variance at most \(C\,p_e(1-p_e)x_e^2\), so by independence
\[
        \operatorname{Var}_{Q_n}(\Lambda_n)\le C\,S_n,
        \qquad
        \operatorname{Var}_{P_n}(\Lambda_n)\le C\,S_n .
\]
Under \(Q_n\), \(\mathbb E_{Q_n}\Lambda_n=D_n^Q\ge\tau_n+c\,S_n\) (since
\(\tau_n=\tfrac12(D_n^Q-D_n^P)\le\tfrac12 D_n^Q\le D_n^Q-c\,S_n\) for \(S_n\) large), and Chebyshev's
inequality gives
\[
        \mathbb E_{Q_n}(1-\phi_n)
        =
        \Prob_{Q_n}(\Lambda_n\le\tau_n)
        \le
        \frac{\operatorname{Var}_{Q_n}(\Lambda_n)}{(D_n^Q-\tau_n)^2}
        \le
        \frac{C\,S_n}{(c\,S_n)^2}
        =
        \frac{C}{c^2 S_n}\to0 .
\]
Symmetrically, under \(P_n\), \(\mathbb E_{P_n}\Lambda_n=-D_n^P\le\tau_n-c\,S_n\), so
\(\mathbb E_{P_n}\phi_n=\Prob_{P_n}(\Lambda_n>\tau_n)\le C/(c^2S_n)\to0\). Adding the two error
probabilities proves (a). The scaling statements follow from
\(S_n\asymp n^2\rho_n\Delta_n^2\) under a uniform gap.
\end{proof}

\begin{remark}
The lemma supplies the sparse Bernoulli-logit perturbation bound used in the separation proof; the separation rate of Theorem~\ref{thm:t2supp} is
obtained from it by the standard chi-square tensorization and two-point (Le~Cam) reduction. The
optimal-test ceiling means the achievability test in (a) can be taken to be the likelihood-ratio
test, and the impossibility in (b) applies to every test, so the rate \((n^2\rho_n)^{-1/2}\) is
two-sided.
\end{remark}

\section{Point-prequential tracking, switch recovery, and localization}
\label{supp:t3}

\paragraph{Loss, pseudo-truth, and information scale.}
Let \(d=J+1\).  At time \(t\), let \(\mathcal E_t\) be the held-out dyads used by the synthesis
update.  For \(e\in\mathcal E_t\), define
\[
  \eta_{t,e}(\bbeta)
  =
  o_{t,e}+z_{t,e}^{\top}\bbeta,
  \qquad
  q_{t,e}(\bbeta)
  =
  \sigma\{\eta_{t,e}(\bbeta)\},
  \qquad
  \sigma(x)=\frac{e^x}{1+e^x},
\]
where \(z_{t,e}\in\R^d\).  Define the Bernoulli-logistic negative log-likelihood
\[
  \ell_t(\bbeta)
  =
  \sum_{e\in\mathcal E_t}
  \left[
    \log\{1+\exp(\eta_{t,e}(\bbeta))\}
    -
    A_{t,e}\eta_{t,e}(\bbeta)
  \right],
\]
and its conditional mean
\(L_t(\bbeta):=\E\{\ell_t(\bbeta)\mid\F_{t-1}\}\).
The snapshot pseudo-true synthesis state is
\(\bbeta_t^\circ\in\argmin_{\bbeta\in\mathcal B}L_t(\bbeta)\),
and the unnormalized excess expected log-loss is
\(\mathcal R_t(\bbeta):=L_t(\bbeta)-L_t(\bbeta_t^\circ)\).
In the per-dyad normalization of the main text and Section~\ref{supp:t4}, with
\[
R_t(\bbeta)=m_t^{-1}\sum_{e\in\mathcal E_t}\KL\{\mathrm{Bern}(p^*_{t,e}),\mathrm{Bern}(q_{t,e}(\bbeta))\}
\]
and \(m_t=|\mathcal E_t|\), the excess risk is
\(R_t(\bbeta)-R_t(\bbeta_t^\circ)=m_t^{-1}\mathcal R_t(\bbeta)\); the tracking bounds below are stated in
the normalization-free excess log-loss \(\mathcal R_t\).
Define the snapshot information scale
\[
  N_t
  :=
  \sum_{e\in\mathcal E_t}
  q_{t,e}(\bbeta_t^\circ)\{1-q_{t,e}(\bbeta_t^\circ)\}.
\]

\begin{assumption}[Regularity for point-state dynamic synthesis]
\label{ass:t3supp}
The following conditions hold uniformly for \(t\le T\).
\begin{enumerate}
\item[(A1)] \textbf{Predictability and cross-fitting.}
There is a filtration \((\F_t)\) such that \(\mathcal E_t\), \(o_{t,e}\), \(z_{t,e}\), and
\(p^*_{t,e}\) are \(\F_{t-1}\)-measurable.  Conditionally on \(\F_{t-1}\), the held-out dyads are
independent and \(A_{t,e}\sim\mathrm{Bernoulli}(p^*_{t,e})\), \(e\in\mathcal E_t\).
\item[(A2)] \textbf{Bounded sparse-safe design and first-order pseudo-truth.}
There is a constant \(K<\infty\) such that \(\sup_{t,e}\norm{z_{t,e}}_\infty\le K\).
The parameter set \(\mathcal B\subset\R^d\) is compact and convex, all filtered states are projected
onto \(\mathcal B\), and \(\nabla L_t(\bbeta_t^\circ)=0\) (automatic if \(\bbeta_t^\circ\) is interior).
\item[(A3)] \textbf{Curvature and variance on the information scale.}
There exist \(0<m<M<\infty\) such that, for every \(\bbeta\in\mathcal B\),
\(mN_t I_d\preceq\nabla^2L_t(\bbeta)\preceq MN_t I_d\), and the conditional score variance satisfies
\(\sum_{e\in\mathcal E_t}p^*_{t,e}(1-p^*_{t,e})z_{t,e}z_{t,e}^{\top}\preceq MN_t I_d\).
\item[(A4)] \textbf{Implementation perturbation.}
Let the ideal discounted logistic filter be
\(\widetilde\bbeta_{t\mid t}\in\argmin_{\bbeta\in\mathcal B}\sum_{s\le t}\delta^{t-s}\ell_s(\bbeta)\),
\(0<\delta<1\). The implemented cross-fitted Laplace filter satisfies, on an event
\(\mathcal A_T\), \(\norm{\widehat\bbeta_{t\mid t}-\widetilde\bbeta_{t\mid t}}_2\le\varepsilon_t\),
\(t=1,\ldots,T\); for the actual algorithm this event is proved separately, e.g. with
\(\varepsilon_t\le a_{n,t}+r_{{\rm Lap},t}\).
\end{enumerate}
\end{assumption}

\begin{theorem}[Point-prequential tracking, switch recovery, and localization]
\label{thm:t3supp}
Let
\(N_t^\delta:=\sum_{s\le t}\delta^{t-s}N_s\),
\(\lambda_T(\alpha):=\log(2dT/\alpha)\), \(\alpha\in(0,1)\), and define
\[
  \eta_t(\alpha):=\sqrt{\frac{d\lambda_T(\alpha)}{N_t^\delta}}+\frac{d\lambda_T(\alpha)}{N_t^\delta},
  \qquad
  B_t^\delta:=\frac{\sum_{s\le t}\delta^{t-s}N_s\norm{\bbeta_s^\circ-\bbeta_t^\circ}_2}{N_t^\delta}.
\]
Then there exists \(C<\infty\), depending only on \(m,M,K\) and \(\mathcal B\), and an event
\(\mathcal E_T(\alpha)\) with \(\Pbb\{\mathcal E_T(\alpha)\}\ge 1-\alpha\), such that, on
\(\mathcal E_T(\alpha)\cap\mathcal A_T\), simultaneously for all \(t\le T\),
\[
  \norm{\widehat\bbeta_{t\mid t}-\bbeta_t^\circ}_2
  \le
  C\left\{\eta_t(\alpha)+B_t^\delta+\varepsilon_t\right\}.
  \tag{T3.1}
\]
Consequently the unconditional probability of {\rm(T3.1)} is at least \(1-\alpha-\Pbb(\mathcal A_T^c)\).
Let \(\theta_t\in\R^J\) be the structural-weight subvector of \(\bbeta_t\), and suppose the active
mechanism \(r_t\) satisfies the margin condition
\(\theta^\circ_{t,r_t}-\max_{j\ne r_t}\theta^\circ_{t,j}\ge\kappa>0\) {\rm(T3.2)}. If
\(C\{\eta_t(\alpha)+B_t^\delta+\varepsilon_t\}<\kappa/2\) {\rm(T3.3)}, then on the same event
\[
  \argmax_{1\le j\le J}\widehat\theta_{t,j}=r_t .
  \tag{T3.4}
\]
For the point one-step-ahead forecast \(\widehat\bbeta_{t\mid t-1}:=\widehat\bbeta_{t-1\mid t-1}\),
\(t\ge2\),
\[
  \sum_{t=2}^{T}\mathcal R_t(\widehat\bbeta_{t\mid t-1})
  \le
  C\sum_{t=2}^{T}N_t
  \left[\left\{\eta_{t-1}(\alpha)+B_{t-1}^\delta+\varepsilon_{t-1}\right\}^2
  +\norm{\bbeta_t^\circ-\bbeta_{t-1}^\circ}_2^2\right].
  \tag{T3.5}
\]
If \(c_NN\le N_t\le C_NN\), \(\Delta_t:=\norm{\bbeta_t^\circ-\bbeta_{t-1}^\circ}_2\),
\(V_2:=\sum_{t=2}^T\Delta_t^2\), and \(N\ge C d\lambda_T(\alpha)\), then
\[
  \sum_{t=2}^{T}\mathcal R_t(\widehat\bbeta_{t\mid t-1})
  \le
  C\left[d\lambda_T(\alpha)\{1+\log T+T(1-\delta)\}
  +\frac{NV_2}{(1-\delta)^2}+N\sum_{t=1}^{T-1}\varepsilon_t^2\right].
  \tag{T3.6}
\]
If the comparator path is piecewise stable with \(S\) switches of size at most \(D\), then
\(V_2\le SD^2\) and the switch contribution is at most \(CNSD^2/(1-\delta)^2\); if in addition the
switch times satisfy \(\tau_{\ell+1}-\tau_\ell\ge L_0/(1-\delta)\) {\rm(T3.7)}, the sharper
switch-tail contribution is \(C(L_0)NSD^2/(1-\delta)\) {\rm(T3.8)}. Finally, for an isolated
two-regime switch at \(\nu\) with \(\norm{\bbeta^+-\bbeta^-}_2\le D\), post-switch active mechanism
\(r^+\) and margin \(\theta^+_{r^+}-\max_{j\ne r^+}\theta^+_j\ge\kappa\), \(0<\kappa\le1\)
{\rm(T3.9)}, under \(N/(1-\delta)\ge C d\lambda_T(\alpha)/\kappa^2\) and \(\sup_t\varepsilon_t\le c\kappa\)
{\rm(T3.10)}, with
\[
  H_\alpha:=\left\lceil C\left[\frac{d\lambda_T(\alpha)}{N\kappa^2}\vee
  \frac{\log(1+CD/\kappa)}{|\log\delta|}\right]\right\rceil
  \tag{T3.11}
\]
and no second switch on \(\nu,\ldots,\nu+H_\alpha\le T\), the recovery delay \(h_{\rm rec}\) (first
time the filtered state localizes \(r^+\)) satisfies, with probability at least
\(1-\alpha-\Pbb(\mathcal A_T^c)\),
\[
  h_{\rm rec}\le H_\alpha .
  \tag{T3.12}
\]
\end{theorem}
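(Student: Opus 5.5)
The argument runs through the ideal discounted filter $\widetilde\bbeta_{t\mid t}$ of (A4) and treats the implemented filter as a perturbation of it. Write $\Lambda_t(\bbeta)=\sum_{s\le t}\delta^{t-s}\ell_s(\bbeta)$ and its compensator $\bar\Lambda_t(\bbeta)=\sum_{s\le t}\delta^{t-s}L_s(\bbeta)$.

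\textbf{Step 1: concentration of the discounted score.} By (A1) the score increments $\nabla\ell_s(\bbeta)-\nabla L_s(\bbeta)=-\sum_{e}(A_{s,e}-p^*_{s,e})z_{s,e}$ form a martingale difference sequence in $s$. Each summand is bounded by $K$ coordinatewise, and the conditional variance is at most $MN_s$ by (A3). Fix $t$. Apply Freedman's inequality coordinatewise to $M_t:=\sum_{s\le t}\delta^{t-s}\{\nabla\ell_s(\bbeta^\circ_t)-\nabla L_s(\bbeta^\circ_t)\}$, whose predictable variation is at most $MN_t^\delta$. Take a union bound over the $d$ coordinates, the $T$ times and both signs, so that $\lambda_T=\log(2dT/\alpha)$. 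On an event of probability at least $1-\alpha$ this gives $\|M_t\|_2\le C\{\sqrt{d\lambda_TN_t^\delta}+d\lambda_T\}$ simultaneously in $t$.

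A subtlety is that $\bbeta^\circ_t$ is $\F_{t-1}$-measurable while the earlier increments are not adapted to it. Because the increment $(A_{s,e}-p^*_{s,e})z_{s,e}$ does not depend on $\bbeta$ (logistic loss is linear in $A$), $M_t$ is the same vector for every centering point. The difficulty therefore disappears.

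\textbf{Step 2: deterministic strong-convexity argument.} By (A3), $\bar\Lambda_t$ is $mN_t^\delta$-strongly convex, and the empirical Hessian equals the population Hessian, since $\nabla^2\ell_s$ does not depend on $A$. Hence $\Lambda_t$ is also $mN_t^\delta$-strongly convex. Its gradient at $\bbeta^\circ_t$ is $\nabla\Lambda_t(\bbeta^\circ_t)=M_t+\sum_s\delta^{t-s}\nabla L_s(\bbeta^\circ_t)$. Using $\nabla L_s(\bbeta^\circ_s)=0$ from (A2) and the upper curvature bound $MN_s$, each drift term is at most $MN_s\|\bbeta^\circ_s-\bbeta^\circ_t\|$. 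Summing gives a drift gradient of at most $MN_t^\delta B_t^\delta$.

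The projected minimizer over the convex set $\mathcal B$ then satisfies $\|\widetilde\bbeta_{t\mid t}-\bbeta^\circ_t\|\le\|\nabla\Lambda_t(\bbeta^\circ_t)\|/(mN_t^\delta)$, via the variational inequality. This gives $C(\eta_t+B^\delta_t)$, and the triangle inequality with (A4) yields (T3.1). Localization (T3.4) is then immediate: an $\ell_\infty$ error below $\kappa/2$ cannot reverse a margin of size $\kappa$.

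\textbf{Step 3: regret.} By (A3) and $\nabla L_t(\bbeta^\circ_t)=0$, $\mathcal R_t(\bbeta)\le\tfrac M2N_t\|\bbeta-\bbeta^\circ_t\|^2$. Splitting $\widehat\bbeta_{t-1\mid t-1}-\bbeta^\circ_t$ as $(\widehat\bbeta_{t-1\mid t-1}-\bbeta^\circ_{t-1})+(\bbeta^\circ_{t-1}-\bbeta^\circ_t)$ and applying (T3.1) gives (T3.5).

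For (T3.6) under $N_t\asymp N$, we have $N_t^\delta\asymp N(1-\delta^t)/(1-\delta)$, so $N\eta_{t-1}^2\lesssim d\lambda_T(1-\delta)/(1-\delta^{t-1})$, using $N\ge Cd\lambda_T$ to absorb the linear term. Summing, the early-time terms $1/(1-\delta^{t})\lesssim 1+1/((1-\delta)t)$ produce $1+\log T$, and the stationary part gives $T(1-\delta)$.

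For the drift term, write $B^\delta_{t}\le(1-\delta)\sum_{k\ge0}\delta^k\sum_{i<k}\Delta_{t-i}$ up to constants. Cauchy--Schwarz with weight $\sum_k k\delta^k\asymp(1-\delta)^{-2}$, followed by summation over $t$, bounds $N\sum_t(B^\delta_t)^2$ by $CNV_2/(1-\delta)^2$. Under the switch spacing (T3.7), the geometric tail after each switch is essentially exhausted before the next one. Only one factor of $(1-\delta)^{-1}$ survives, which gives (T3.8).

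\textbf{Step 4: recovery delay (main obstacle).} At time $\nu+h$ the drift term is $B^\delta\le D\,\delta^{h}N^\delta_{\nu-1}/N^\delta_{\nu+h}\lesssim D\delta^{h}$. This is below $c\kappa$ once $h\ge\log(CD/\kappa)/|\log\delta|$. The noise term requires $N^\delta_{\nu+h}\gtrsim d\lambda_T/\kappa^2$. Condition (T3.10) guarantees this eventually, and for small $h$ it holds once $h\gtrsim d\lambda_T/(N\kappa^2)$, since $N^\delta_{\nu+h}\ge hN$ up to constants. Together with $\varepsilon_t\le c\kappa$, (T3.3) then holds at $H_\alpha$ on the Step 1 event, which gives (T3.12).

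I expect the delicate parts to be this two-regime bookkeeping of $B^\delta$ and the Step 3 Cauchy--Schwarz constants. The main risk is that the bound obtained is $(1-\delta)^{-2}$ where one hopes to sharpen it.
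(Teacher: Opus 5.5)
Your proposal is correct and follows essentially the same route as the paper's proof. The shared steps are: Freedman's inequality with a union bound applied to the $\bbeta$-free discounted score (you rightly note that it does not depend on the centering point); strong convexity of the discounted logistic loss with drift gradient at most $MN_t^\delta B_t^\delta$; the quadratic bound $\mathcal R_t(\bbeta)\le CN_t\|\bbeta-\bbeta_t^\circ\|_2^2$ for the prequential sum; and the $CD\delta^h$ versus $cN\{(h+1)\wedge(1-\delta)^{-1}\}$ bookkeeping for the isolated switch.

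The differences are cosmetic. The paper splits your Step~2 into a bias step through the population discounted minimizer $\bar\bbeta_t$ and a separate fluctuation step, whereas you handle both at once at $\bbeta_t^\circ$ via the variational inequality. The paper also bounds the drift pointwise by $B_t^\delta\le C\sum_{u\le t}\delta^{t-u}\Delta_u$ and applies Young's convolution inequality, where you use Cauchy--Schwarz. Exchanging the sums in your own expression gives exactly that pointwise bound. Expanding its square over the switch times gives diagonal terms at most $D^2/(1-\delta^2)$ and cross terms damped by $\delta^{|\tau_\ell-\tau_{\ell'}|}\le e^{-L_0|\ell-\ell'|}$, which yields (T3.8) directly. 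So the extra $(1-\delta)^{-1}$ you worry about is an artifact of the Cauchy--Schwarz step, not of your approach.
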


\begin{proof}
Define
\[
\Phi_t(\bbeta):=\sum_{s\le t}\delta^{t-s}L_s(\bbeta),
\qquad
\widehat\Phi_t(\bbeta):=\sum_{s\le t}\delta^{t-s}\ell_s(\bbeta),
\]
with \(\bar\bbeta_t\in\argmin_{\bbeta\in\mathcal B}\Phi_t(\bbeta)\). By A3,
\(\nabla^2\Phi_t(\bbeta)=\sum_{s\le t}\delta^{t-s}\nabla^2L_s(\bbeta)\succeq mN_t^\delta I_d\), so
\(\Phi_t\) is \(mN_t^\delta\)-strongly convex and \(\bar\bbeta_t\) is unique.
\emph{Bias.} Since \(\nabla L_s(\bbeta_s^\circ)=0\),
\[
\nabla\Phi_t(\bbeta_t^\circ)=\sum_{s\le t}\delta^{t-s}\{\nabla L_s(\bbeta_t^\circ)-\nabla L_s(\bbeta_s^\circ)\},
\]
and the mean-value theorem with the A3 upper bound gives
\(\|\nabla L_s(\bbeta_t^\circ)-\nabla L_s(\bbeta_s^\circ)\|_2\le MN_s\|\bbeta_t^\circ-\bbeta_s^\circ\|_2\),
so \(\|\nabla\Phi_t(\bbeta_t^\circ)\|_2\le M\sum_{s\le t}\delta^{t-s}N_s\|\bbeta_s^\circ-\bbeta_t^\circ\|_2\).
By optimality of \(\bar\bbeta_t\) and strong convexity at \(\bbeta_t^\circ\),
\(\tfrac{mN_t^\delta}{2}\|\bar\bbeta_t-\bbeta_t^\circ\|_2^2\le\|\nabla\Phi_t(\bbeta_t^\circ)\|_2\|\bar\bbeta_t-\bbeta_t^\circ\|_2\),
hence
\[
  \norm{\bar\bbeta_t-\bbeta_t^\circ}_2\le C B_t^\delta .
  \tag{3}
\]
\emph{Fluctuation.} For the Bernoulli-logistic loss,
\(\ell_s(\bbeta)-L_s(\bbeta)=\sum_{e\in\mathcal E_s}(p^*_{s,e}-A_{s,e})\{o_{s,e}+z_{s,e}^\top\bbeta\}\),
so \(\widehat\Phi_t(\bbeta)-\Phi_t(\bbeta)=C_t+G_t^\top\bbeta\) with
\(G_t=\sum_{s\le t}\delta^{t-s}\sum_{e\in\mathcal E_s}(p^*_{s,e}-A_{s,e})z_{s,e}\). For fixed \(t,k\) the
summands \(\delta^{t-s}(p^*_{s,e}-A_{s,e})z_{s,e,k}\) are conditionally mean-zero martingale
differences bounded by \(K\), with predictable quadratic variation at most
\(M\sum_{s\le t}\delta^{2(t-s)}N_s\le M N_t^\delta\). Freedman's Bernstein inequality and a union
bound over the \(pT\) coordinate-time pairs give an event \(\mathcal E_T(\alpha)\),
\(\Pbb\{\mathcal E_T(\alpha)\}\ge1-\alpha\), on which, for all \(t\),
\(\|G_t\|_2\le C[\sqrt{dN_t^\delta\lambda_T(\alpha)}+d\lambda_T(\alpha)]\). With
\(\widetilde\bbeta_{t\mid t}\in\argmin\widehat\Phi_t\), optimality of \(\widetilde\bbeta_{t\mid t}\) and
\(\bar\bbeta_t\), and strong convexity, give
\(\|\widetilde\bbeta_{t\mid t}-\bar\bbeta_t\|_2\le C\|G_t\|_2/N_t^\delta\le C\eta_t(\alpha)\). On
\(\mathcal A_T\), \(\|\widehat\bbeta_{t\mid t}-\widetilde\bbeta_{t\mid t}\|_2\le\varepsilon_t\).
Combining with (3) proves (T3.1), and the probability bound follows from
\(\Pbb\{\mathcal E_T(\alpha)\cap\mathcal A_T\}\ge1-\alpha-\Pbb(\mathcal A_T^c)\).
\emph{Localization.} With \(\Delta_t^\theta:=\|\widehat\theta_t-\theta_t^\circ\|_\infty\le
\|\widehat\bbeta_{t\mid t}-\bbeta_t^\circ\|_2\), for \(j\ne r_t\),
\(\widehat\theta_{t,r_t}-\widehat\theta_{t,j}\ge\kappa-2\Delta_t^\theta\); under (T3.3),
\(\Delta_t^\theta<\kappa/2\), giving (T3.4).
\emph{Prequential bound.} Since \(\nabla L_t(\bbeta_t^\circ)=0\), Taylor and A3 give
\(\mathcal R_t(\bbeta)\le CN_t\|\bbeta-\bbeta_t^\circ\|_2^2\) (11). For \(t\ge2\),
\(\widehat\bbeta_{t\mid t-1}=\widehat\bbeta_{t-1\mid t-1}\), so
\(\|\widehat\bbeta_{t\mid t-1}-\bbeta_t^\circ\|_2^2\le2\|\widehat\bbeta_{t-1\mid t-1}-\bbeta_{t-1}^\circ\|_2^2
+2\|\bbeta_t^\circ-\bbeta_{t-1}^\circ\|_2^2\); substituting (T3.1) at \(t-1\) into (11) gives (T3.5).
\emph{Piecewise-stable consequences.} If \(c_NN\le N_t\le C_NN\) then
\(N_t^\delta\asymp N\{t\wedge(1-\delta)^{-1}\}\), so
\(\sum_{t=2}^T N_t/N_{t-1}^\delta\le C\{1+\log T+T(1-\delta)\}\). If \(N\ge Cd\lambda_T(\alpha)\) then
\(\eta_t^2(\alpha)\le Cd\lambda_T(\alpha)/N_t^\delta\), giving
\(\sum_{t=2}^T N_t\eta_{t-1}^2(\alpha)\le Cd\lambda_T(\alpha)\{1+\log T+T(1-\delta)\}\) (14). For the
bias, \(\|\bbeta_s^\circ-\bbeta_t^\circ\|_2\le\sum_{u=s+1}^t\Delta_u\) and the comparability of \(N_t\)
give \(B_t^\delta\le C\sum_{u=2}^t\delta^{t-u}\Delta_u\) (15): the denominator
\(\sum_{r\le t}\delta^{t-r}N_r\asymp N(1-\delta^t)/(1-\delta)\), while a fixed jump \(\Delta_u\)
contributes at most a constant times \(N\Delta_u\delta^{t-u}(1-\delta^t)/(1-\delta)\) to the
numerator. Young's convolution inequality gives
\(\sum_{t=1}^T(B_t^\delta)^2\le C(\sum_{h\ge0}\delta^h)^2\sum_u\Delta_u^2=CV_2/(1-\delta)^2\) (16), so
\(\sum_{t=2}^T N_t(B_{t-1}^\delta)^2\le CNV_2/(1-\delta)^2\) (17). Using \((x+y+z)^2\le3(x^2+y^2+z^2)\)
in (T3.5) with (14) and (17) proves (T3.6); the direct jump term
\(\sum_t N_t\Delta_t^2\le CNV_2\) is absorbed. With \(S\) switches, \(V_2\le SD^2\). Under the spacing
condition (T3.7): a single size-\(D\) switch contributes at most \(CD^2/(1-\delta^2)\le CD^2/(1-\delta)\)
to \(\sum_t(B_t^\delta)^2\), and two switches \(m\) apart contribute at most
\(CD^2\delta^m/(1-\delta^2)\); since \(|\log\delta|\ge1-\delta\), \(m\ge\ell L_0/(1-\delta)\) gives
\(\delta^m\le e^{-cL_0\ell}\), and summing over pairs gives
\(\sum_t(B_t^\delta)^2\le C(L_0)SD^2/(1-\delta)\), so multiplying by \(N\) gives (T3.8).
\emph{Isolated switch.} At \(t=\nu+h\) only pre-switch observations have differing pseudo-truth, so
\(B_{\nu+h}^\delta\le CD\,\{\sum_{s<\nu}\delta^{\nu+h-s}N_s\}/\{\sum_{s\le\nu+h}\delta^{\nu+h-s}N_s\}\)
(18). With \(N_s\asymp N\) this ratio is, up to constants, \(\delta^{h+1}(1-\delta^{\nu-1})/(1-\delta^{\nu+h})\),
which is at most \(2\delta^h\) in both cases \(\delta^h\ge\tfrac12\) and \(\delta^h<\tfrac12\); hence
\(B_{\nu+h}^\delta\le CD\delta^h\) (19). The post-switch observations alone give
\(N_{\nu+h}^\delta\ge cN\{(h+1)\wedge(1-\delta)^{-1}\}\) (20), so
\(\eta_{\nu+h}(\alpha)\le C[\sqrt{d\lambda_T(\alpha)/(N\{(h+1)\wedge(1-\delta)^{-1}\})}
+d\lambda_T(\alpha)/(N\{(h+1)\wedge(1-\delta)^{-1}\})]\) (21). The condition
\(N/(1-\delta)\ge Cd\lambda_T(\alpha)/\kappa^2\) makes the stochastic term \(\le c\kappa\) once
\(h+1\ge Cd\lambda_T(\alpha)/(N\kappa^2)\), and (19) makes \(B_{\nu+h}^\delta\le c\kappa\) once
\(h\ge C\log(1+CD/\kappa)/|\log\delta|\). With \(\sup_t\varepsilon_t\le c\kappa\), at \(h=H_\alpha\)
the bound \(C\{\eta_{\nu+h}(\alpha)+B_{\nu+h}^\delta+\varepsilon_{\nu+h}\}<\kappa/2\) holds, and since
no second switch occurs on the recovery window, (T3.4) at \(t=\nu+H_\alpha\) gives
\(h_{\rm rec}\le H_\alpha\), proving (T3.12). If the prequential sum also scores \(t=1\), (11)
contributes the initial-state term \(CN_1\|\widehat\bbeta_{1\mid0}-\bbeta_1^\circ\|_2^2\).
\end{proof}

\section{Projection, calibration, model-averaging collapse, and stacking gap}
\label{supp:t4}

Throughout this section,
\[
  \sigma(x)=\frac{1}{1+e^{-x}},
  \qquad
  \kl(p,q)=p\log\frac pq+(1-p)\log\frac{1-p}{1-q},
  \qquad
  \Delta_J=\Big\{w\in[0,1]^J:\textstyle\sum_{j=1}^Jw_j=1\Big\}.
\]

\begin{theorem}[Projection, calibration, BMA collapse, and stacking gap]
\label{thm:t4supp}
Fix one snapshot \(t\) and suppress \(t\) from notation. Let \(E_n\) be the evaluated dyads,
\(m_n=|E_n|\), and \(\mathcal F_n\) the sigma-field of the cross-fitted agent predictions.
Conditional on \(\mathcal F_n\), assume \(A_e\) independent, \(A_e\sim\Bern(p_e)\),
\(e\in E_n\). For \(j=1,\ldots,J\) let \(P_e^{(j)}\in(0,1)\) and
\(u_e^{(j)}=\logit P_e^{(j)}-\log\rho_n\), \(z_e=(1,u_e^{(1)},\ldots,u_e^{(J)})^\top\), \(d=J+1\).
For \(\bbeta=(\alpha,\theta_1,\ldots,\theta_J)^\top\in B\subset\mathbb R^d\), set
\(\eta_{\bbeta,e}=\log\rho_n+\bbeta^\top z_e\), \(q_{\bbeta,e}=\sigma(\eta_{\bbeta,e})\). For any
probability vector \(r\), \(R_n(r)=m_n^{-1}\sum_e\kl(p_e,r_e)\) and \(R_n(\bbeta)=R_n(q_{\bbeta})\).
Let \(\bbeta_n^\circ\in\arg\min_{\bbeta\in B}R_n(\bbeta)\), \(q_e^\circ=q_{\bbeta_n^\circ,e}\),
\(N_n=\sum_e q_e^\circ(1-q_e^\circ)\), and \(\|v\|_{2,n}=(m_n^{-1}\sum_e v_e^2)^{1/2}\). Assume:
\[
\begin{array}{ll}
{\rm(A1)}& J\text{ fixed},\ \rho_n\to0,\ N_n\asymp m_n\rho_n\to\infty .\\[1mm]
{\rm(A2)}& c\rho_n\le p_e,\ P_e^{(j)},\ q_{\bbeta,e}\le C\rho_n\ \ \forall e,j,\bbeta\in B,\ \ 0<c<C<\infty.\\[1mm]
{\rm(A3)}& B\text{ compact, convex, }\operatorname{dist}(\bbeta_n^\circ,\partial B)\ge c_B>0 .\\[1mm]
{\rm(A4)}& \gamma_j=(0,\ldots,1,\ldots,0)^\top\in B\text{ and }q_{\gamma_j,e}=P_e^{(j)}\ (j=1,\ldots,J).\\[1mm]
{\rm(A5)}& \sup_e\|z_e\|\le M,\ \ \lambda_{\min}(m_n^{-1}\sum_e z_ez_e^\top)\ge\lambda_z>0 .
\end{array}
\]
Let \(\widehat R_n(\bbeta)=m_n^{-1}\sum_e\{-A_e\log q_{\bbeta,e}-(1-A_e)\log(1-q_{\bbeta,e})\}\) and
\(\widehat\bbeta_n\in\arg\min_{\bbeta\in B}\widehat R_n(\bbeta)\). Then, conditionally on
\(\mathcal F_n\):

\emph{(i) Projection, dyadic rate, conditional CLT.} \(\bbeta_n^\circ\) is unique and
\(\|\widehat\bbeta_n-\bbeta_n^\circ\|=O_p(N_n^{-1/2})\). If
\(H_n=N_n^{-1}\sum_e q_e^\circ(1-q_e^\circ)z_ez_e^\top\to H\) and
\(V_n=N_n^{-1}\sum_e p_e(1-p_e)z_ez_e^\top\to V\) with \(H\) nonsingular, then
\(\sqrt{N_n}(\widehat\bbeta_n-\bbeta_n^\circ)\Rightarrow N(0,H^{-1}VH^{-1})\), and under correct
specification \(p_e=q_e^\circ\) one has \(V=H\). For an approximate loss \(\widetilde R_n\) and
probabilities \(\widetilde q_{\bbeta,e}\) that, with probability tending to one, are strongly convex
with curvature at least \(c_0\rho_n\),
\(\sup_{\bbeta}\|\nabla\widetilde R_n(\bbeta)-\nabla\widehat R_n(\bbeta)\|=O_p\{\rho_n(a_n+r_n)\}\), and
\(\sup_{\bbeta}\|\widetilde q_{\bbeta}-q_{\bbeta}\|_{2,n}=O_p(\rho_n a_n)\) with \(a_n,r_n=o(1)\),
\(\widetilde\bbeta_n\in\arg\min\widetilde R_n\) satisfies
\(\|\widetilde\bbeta_n-\bbeta_n^\circ\|=O_p(N_n^{-1/2}+a_n+r_n)\).

\emph{(ii) Sparse graphon projection.} If dyads have locations \(x_e\in\Omega\), empirical averages
converge uniformly to \(\mu\)-integration over the relevant class, and uniformly in \(e\)
\(\rho_n^{-1}p_e\to W^\circ(x_e)\), \(\rho_n^{-1}P_e^{(j)}\to W^{(j)}(x_e)\) bounded away from
\(0,\infty\), then uniformly over \(\bbeta\in B\),
\(\rho_n^{-1}q_{\bbeta,e}\to W_{\bbeta}(x_e):=e^\alpha\prod_j\{W^{(j)}(x_e)\}^{\theta_j}\), and
\[
  \rho_n^{-1}R_n(\bbeta)=\int_\Omega\{W_{\bbeta}-W^\circ\log W_{\bbeta}\}\,d\mu+C(W^\circ)+o(1)
\]
uniformly, with \(C(W^\circ)=\int_\Omega\{W^\circ\log W^\circ-W^\circ\}\,d\mu\) free of \(\bbeta\).
Hence every limit point of \(\bbeta_n^\circ\) minimizes
\(\mathcal R(\bbeta)=\int_\Omega\{W_{\bbeta}-W^\circ\log W_{\bbeta}\}\,d\mu\); if the minimizer
\(\bbeta^\dagger\) is unique, \(\bbeta_n^\circ\to\bbeta^\dagger\). The sparse limit is a Bernoulli/KL
(Poisson-type) projection, not an \(L^2\)-projection.

\emph{(iii) Calibration equations and finite-bin reliability.} The projection satisfies
\(\sum_e(p_e-q_e^\circ)=0\) and \(\sum_e(p_e-q_e^\circ)u_e^{(j)}=0\), \(j=1,\ldots,J\).
With \(\Delta_n=\inf_{\bbeta\in B}\|p-q_{\bbeta}\|_{2,n}\), \(\widehat q_e=q_{\widehat\bbeta_n,e}\)
evaluated on an independent calibration fold \(A_e^{\rm cal}\sim\Bern(p_e)\), prediction-measurable
bins \(B_1,\ldots,B_K\), \(I_k=\{e:\widehat q_e\in B_k\}\), and
\(\ECE_K(\widehat q)=\sum_k\frac{|I_k|}{m_n}|\,|I_k|^{-1}\sum_{e\in I_k}(A_e^{\rm cal}-\widehat q_e)|\),
if \(K=o(N_n)\) then
\(\ECE_K(\widehat q)=O_p[\Delta_n+\rho_nN_n^{-1/2}+\rho_n\sqrt{K/N_n}]\); for the approximate
estimator the bound gains an additive \(\rho_n(a_n+r_n)\).

\emph{(iv) BMA collapse and the BPS log-score gap.} Let BMA place priors \(\pi_j>0\) on the agents
\(P^{(j)}=q_{\gamma_j}\), with posterior \(\Pi_n(j\mid A)\propto\pi_j\prod_e
(P_e^{(j)})^{A_e}(1-P_e^{(j)})^{1-A_e}\) and predictive \(q_e^{\rm BMA}=\sum_j\Pi_n(j\mid A)P_e^{(j)}\).
If there is a unique \(j_\star\) and \(\delta>0\) with
\(\rho_n^{-1}\{R_n(\gamma_j)-R_n(\gamma_{j_\star})\}\ge\delta\) for \(j\ne j_\star\), then
\(\Pi_n(j_\star\mid A)\to1\) in probability and
\(R_n(q^{\rm BMA})-R_n(q_{\widehat\bbeta_n})=G_n+o_p(\rho_n)\), where
\(G_n=\min_j R_n(\gamma_j)-R_n(\bbeta_n^\circ)=\min_j m_n^{-1}\sum_e\kl(q_e^\circ,P_e^{(j)})\ge0\).
Under (A2)--(A5), \(G_n\ge c_G\rho_n\min_j\|\bbeta_n^\circ-\gamma_j\|^2\); hence BPS strictly improves
on BMA whenever \(\liminf_n\min_j\|\bbeta_n^\circ-\gamma_j\|>0\).

\emph{(v) Convex stacking: the gap.} Let
\(\mathcal C_n=\{r_w:r_{w,e}=\sum_j w_jP_e^{(j)},\ w\in\Delta_J\}\),
\(d_{\mathcal C,n}^2=\inf_{r\in\mathcal C_n}m_n^{-1}\sum_e(r_e-p_e)^2/\{p_e(1-p_e)\}\),
and \(\varepsilon_{\rm BPS,n}=\inf_{\bbeta\in B}R_n(\bbeta)=R_n(\bbeta_n^\circ)\). Then there is
\(c_{\rm kl}>0\), depending only on (A2), with
\(\inf_{r\in\mathcal C_n}R_n(r)-\inf_{\bbeta\in B}R_n(\bbeta)\ge c_{\rm kl}d_{\mathcal C,n}^2-\varepsilon_{\rm BPS,n}\).
Consequently, if \(\varepsilon_{\rm BPS,n}\le\tfrac12 c_{\rm kl}d_{\mathcal C,n}^2\), the gap is at least
\(\tfrac12 c_{\rm kl}d_{\mathcal C,n}^2\); and if \(p=q_{\bar\bbeta}\) for some \(\bar\bbeta\in B\)
(so \(\varepsilon_{\rm BPS,n}=0\)), the gap is at least \(c_{\rm kl}d_{\mathcal C,n}^2\). Without the
approximation condition, no universal dominance of BPS over convex stacking follows.
\end{theorem}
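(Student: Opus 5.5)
Part (i) will be handled by reducing it to Theorem~\ref{thm:t1supp}. Under (A2) we have $q_{\bbeta,e}(1-q_{\bbeta,e})\asymp\rho_n$ uniformly, and (A5) gives a lower bound on the Gram matrix. Together these show that $m_nR_n(\bbeta)$ (equal to $R_{t,n}$ up to a $\bbeta$-free constant) is strongly convex with curvature $\gtrsim N_n$. Hence $\bbeta_n^\circ$ is unique and interior by (A3), and conditions (B1)--(B3) and (C1)--(C4) hold with $P_n\equiv0$. The rate and the sandwich CLT then follow directly from Theorem~\ref{thm:t1supp}(b,c). For the approximate estimator $\widetilde\bbeta_n$ I will use a perturbed-$Z$-estimator argument. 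The bound $\|\nabla\widetilde R_n(\widehat\bbeta_n)\|=O_p\{\rho_n(a_n+r_n)\}$, combined with curvature $c_0\rho_n$, gives $\|\widetilde\bbeta_n-\widehat\bbeta_n\|=O_p(a_n+r_n)$, and the triangle inequality finishes the argument.

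For (ii), I will expand $\kl(p_e,q_e)$ on the sparse scale. Write $p=\rho w$ and $q=\rho v$ with $w,v$ bounded away from $0$ and $\infty$. Then
$$\kl=\rho\{w\log(w/v)-w+v\}+O(\rho^2),$$
using $(1-p)\log\{(1-p)/(1-q)\}=q-p+O(\rho^2)$. Since $\logit P^{(j)}-\log\rho=\log W^{(j)}+o(1)$ uniformly, the offset gives $\rho^{-1}q_{\bbeta,e}\to W_{\bbeta}(x_e)$ uniformly on compact $B$. Uniform convergence of the empirical averages then yields the stated limit of $\rho_n^{-1}R_n$. The conclusion about limit points of the argmin is the standard argmin-continuity argument for a uniformly convergent sequence of convex criteria.

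Part (iii) is just the first-order condition $\nabla R_n(\bbeta_n^\circ)=0$ at an interior point, with each coordinate written out. For the ECE bound I will work within each bin. I decompose $A^{\rm cal}_e-\widehat q_e=(A_e^{\rm cal}-p_e)+(p_e-q^\circ_e)+(q^\circ_e-\widehat q_e)$.
\begin{itemize}
\item The noise term summed over the bins is bounded by $\sum_k m_n^{-1}\sqrt{|I_k|\rho_n}\le\sqrt{K\rho_n/m_n}\asymp\rho_n\sqrt{K/N_n}$, using Cauchy--Schwarz and independence from the calibration fold.
\item The middle term is bounded by $\|p-q^\circ\|_{1,n}\le\|p-q^\circ\|_{2,n}$. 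To relate this to $\Delta_n$, I use the fact that under (A2) KL and the squared $\chi^2$-type norm are comparable. This gives $\|p-q^\circ\|_{2,n}\lesssim\Delta_n$ by minimality of $\bbeta_n^\circ$.
\item The last term is bounded by $\rho_n\|\widehat\bbeta_n-\bbeta_n^\circ\|$ times a constant, which is $O_p(\rho_nN_n^{-1/2})$ by (i).
\end{itemize}

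For (iv), BMA collapse is a posterior-odds computation. The log-odds of $j$ against $j_\star$ equals $\log(\pi_j/\pi_{j_\star})-m_n\{\widehat R_n(\gamma_j)-\widehat R_n(\gamma_{j_\star})\}$. Its mean is at most $-\delta m_n\rho_n\to-\infty$, and its standard deviation is $O(\sqrt{m_n\rho_n})$ by Lemma~\ref{lem:sparse-bernoulli-logit}(L4). The posterior weights on $j\neq j_\star$ therefore vanish, exponentially in $N_n$. Consequently $R_n(q^{\rm BMA})=R_n(\gamma_{j_\star})+o_p(\rho_n)$, by continuity of $R_n$ on the sparse scale. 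Meanwhile $R_n(q_{\widehat\bbeta_n})-R_n(\bbeta_n^\circ)=O_p(\rho_n/N_n)$ by the quadratic expansion and (i).

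The Pythagorean identity $R_n(\gamma)-R_n(\bbeta_n^\circ)=m_n^{-1}\sum_e\kl(q^\circ_e,q_{\gamma,e})$ holds for every $\gamma\in B$. It follows from the exponential-family structure: the difference of the two sides is linear in $\bbeta$ with coefficient $\sum_e(p_e-q^\circ_e)z_e$, which vanishes by (iii). The lower bound $c_G\rho_n\|\bbeta_n^\circ-\gamma_j\|^2$ then comes from Lemma~\ref{lem:sparse-bernoulli-logit}(L1) applied per dyad with $x_e=(\gamma_j-\bbeta_n^\circ)^\top z_e$, which is bounded by (A5) and compactness, together with the Gram lower bound.

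For (v), I use per-dyad comparability of KL with the $\chi^2$-type quantity on the sparse scale. This gives $\kl(p_e,r_e)\ge c_{\rm kl}(r_e-p_e)^2/\{p_e(1-p_e)\}$ for any $r_e$ in the convex hull, since the hull also stays in $[c\rho_n,C\rho_n]$. Averaging and taking the infimum gives $\inf_{\mathcal C_n}R_n\ge c_{\rm kl}d^2_{\mathcal C,n}$, and subtracting $\varepsilon_{\rm BPS,n}$ gives the claimed gap. I expect the main obstacle to be the ECE term in (iii), specifically controlling the middle term by $\Delta_n$ rather than by $\sqrt{\varepsilon_{\rm BPS}}$. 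I will handle it through the $\chi^2$/KL comparability on the sparse scale, accepting a constant-factor loss.
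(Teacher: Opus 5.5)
Your proposal is correct and follows the paper's proof essentially step for step: strong convexity from (A2) and (A5), the score equation yielding both the calibration identities and the Pythagorean identity, KL/sparse-squared-error comparability for the ECE and stacking bounds, and a posterior log-odds mean/variance computation for the BMA collapse. The only real differences are these: in (i) you import the rate and CLT from Theorem~\ref{thm:t1supp} rather than rederiving them, which is legitimate once you note that the rate argument there does not use the convergence condition (C3), assumed in (i) only for the CLT; and in (iv) the log-odds variance is simply $\sum_e p_e(1-p_e)x_e^2=O(m_n\rho_n)$ with $x_e=\logit P^{(j)}_e-\logit P^{(j_\star)}_e$, since the data follow $p_e$ rather than either agent law, so Lemma~\ref{lem:sparse-bernoulli-logit}(L4) as stated is not quite the right citation.
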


\begin{proof}
All probability statements are conditional on \(\mathcal F_n\); the only randomness is the
independent array \(A_e\sim\Bern(p_e)\). Let \(b(\eta)=\log(1+e^\eta)\), so up to a
\(\bbeta\)-free additive term \(R_n(\bbeta)=m_n^{-1}\sum_e\{b(\eta_{\bbeta,e})-p_e\eta_{\bbeta,e}\}\),
\(\nabla R_n(\bbeta)=m_n^{-1}\sum_e(q_{\bbeta,e}-p_e)z_e\), and
\(\nabla^2R_n(\bbeta)=m_n^{-1}\sum_e q_{\bbeta,e}(1-q_{\bbeta,e})z_ez_e^\top\). By (A2),
\(q_{\bbeta,e}(1-q_{\bbeta,e})\ge c_1\rho_n\), so with (A5)
\(\lambda_{\min}\{\nabla^2R_n(\bbeta)\}\ge c_1\lambda_z\rho_n\): \(R_n\) is strictly convex and
\(\bbeta_n^\circ\) is the unique, interior minimizer, with \(\nabla R_n(\bbeta_n^\circ)=0\), i.e.
\(\sum_e(q_e^\circ-p_e)z_e=0\).
\emph{(i).} At \(\bbeta_n^\circ\),
\(\nabla\widehat R_n(\bbeta_n^\circ)=m_n^{-1}\sum_e(q_e^\circ-A_e)z_e=m_n^{-1}\sum_e(p_e-A_e)z_e\)
by the score equation; the summands are independent, mean zero, bounded, and by (A2)
\(p_e(1-p_e)\asymp\rho_n\), so
\(\mathbb E\|\nabla\widehat R_n(\bbeta_n^\circ)\|^2\le C\rho_n/m_n=C\rho_n^2/N_n\), giving
\(\|\nabla\widehat R_n(\bbeta_n^\circ)\|=O_p(\rho_nN_n^{-1/2})\). The empirical Hessian has the same
curvature lower bound \(c_2\rho_n\), so strong convexity gives
\(\tfrac{c_2\rho_n}{2}\|\widehat\bbeta_n-\bbeta_n^\circ\|^2\le\|\nabla\widehat R_n(\bbeta_n^\circ)\|
\|\widehat\bbeta_n-\bbeta_n^\circ\|\), hence \(\|\widehat\bbeta_n-\bbeta_n^\circ\|=O_p(N_n^{-1/2})\), and
\(\widehat\bbeta_n\) is interior with probability tending to one. Expanding
\(\nabla\widehat R_n(\widehat\bbeta_n)=0\) about \(\bbeta_n^\circ\),
\((m_n/N_n)\nabla^2\widehat R_n(\bar\bbeta_n)=H_n+o_p(1)\), so
\(\sqrt{N_n}(\widehat\bbeta_n-\bbeta_n^\circ)=H_n^{-1}N_n^{-1/2}\sum_e(A_e-p_e)z_e+o_p(1)\). The
summands are bounded, Lindeberg holds, the conditional variance is \(V_n\), and if \(H_n\to H\),
\(V_n\to V\), \(H\) nonsingular, then
\(\sqrt{N_n}(\widehat\bbeta_n-\bbeta_n^\circ)\Rightarrow N(0,H^{-1}VH^{-1})\); under correct
specification \(V_n=H_n\). For the approximate loss,
\(\|\nabla\widetilde R_n(\bbeta_n^\circ)\|=O_p\{\rho_nN_n^{-1/2}+\rho_n(a_n+r_n)\}\) and strong
convexity give \(\|\widetilde\bbeta_n-\bbeta_n^\circ\|=O_p(N_n^{-1/2}+a_n+r_n)\).
\emph{(ii).} Uniformly in \(e\), \(P_e^{(j)}=\rho_nW^{(j)}(x_e)+o(\rho_n)\) gives
\(u_e^{(j)}=\log W^{(j)}(x_e)+o(1)\), so
\(\eta_{\bbeta,e}=\log\rho_n+\alpha+\sum_j\theta_j\log W^{(j)}(x_e)+o(1)\) uniformly on compact \(B\),
and \(q_{\bbeta,e}=\rho_n e^\alpha\prod_j\{W^{(j)}(x_e)\}^{\theta_j}\{1+o(1)\}\), i.e.
\(\rho_n^{-1}q_{\bbeta,e}\to W_{\bbeta}(x_e)\). For \(a,b\) bounded away from \(0,\infty\),
\(\kl(\rho_na,\rho_nb)=\rho_n\{a\log(a/b)+b-a\}+O(\rho_n^2)\); applying this with \(a=W^\circ\),
\(b=W_{\bbeta}\), averaging, and using uniform empirical-graphon convergence gives the displayed limit,
whose \(\bbeta\)-free term is \(C(W^\circ)\). Uniform convergence on compact \(B\) gives the argmin
conclusion, and uniqueness gives \(\bbeta_n^\circ\to\bbeta^\dagger\).
\emph{(iii).} The calibration equations are the intercept and slope coordinates of
\(\sum_e(q_e^\circ-p_e)z_e=0\). By (A2), Bernoulli KL is equivalent to sparse squared error on
\([c\rho_n,C\rho_n]\): \(c_3\rho_n^{-1}(p-q)^2\le\kl(p,q)\le C_3\rho_n^{-1}(p-q)^2\). Since
\(\bbeta_n^\circ\) minimizes \(R_n\), \(\|p-q^\circ\|_{2,n}\le C\Delta_n\); and
\(|q_{\widehat\bbeta_n,e}-q_e^\circ|\le C\rho_n\|\widehat\bbeta_n-\bbeta_n^\circ\|\) gives
\(\|\widehat q-q^\circ\|_{2,n}=O_p(\rho_nN_n^{-1/2})\), so
\(\|p-\widehat q\|_{2,n}=O_p(\Delta_n+\rho_nN_n^{-1/2})\). Conditional on the fitted
probabilities the bins are fixed for the independent calibration array, so
\(\ECE_K(\widehat q)\le\|p-\widehat q\|_{2,n}+m_n^{-1}\sum_k|\sum_{e\in I_k}(A_e^{\rm cal}-p_e)|\);
Jensen and Cauchy--Schwarz bound the conditional expectation of the second term by
\(C\sqrt{K\rho_n/m_n}=C\rho_n\sqrt{K/N_n}\), and Markov converts this to the stated \(O_p\) bound. The
approximate estimator adds \(\rho_n(a_n+r_n)\).
\emph{(iv).} For \(j\ne j_\star\),
\(\log\frac{\Pi_n(j\mid A)}{\Pi_n(j_\star\mid A)}=\log\frac{\pi_j}{\pi_{j_\star}}+\sum_e A_e\log
\frac{P_e^{(j)}}{P_e^{(j_\star)}}+\sum_e(1-A_e)\log\frac{1-P_e^{(j)}}{1-P_e^{(j_\star)}}\), whose
conditional mean is \(-m_n\{R_n(\gamma_j)-R_n(\gamma_{j_\star})\}\) and conditional variance is
\(O(m_n\rho_n)=O(N_n)\) (edge log-ratios bounded, non-edge log-ratios \(O(\rho_n)\)). The separation
assumption gives \(m_n\{R_n(\gamma_j)-R_n(\gamma_{j_\star})\}\ge\delta m_n\rho_n\asymp\delta N_n\to\infty\),
so the log-ratio is \(-c_\delta N_n+O_p(N_n^{1/2})\to-\infty\) and \(\Pi_n(j_\star\mid A)\to1\). Then
\(\sup_e|q_e^{\rm BMA}-P_e^{(j_\star)}|=o_p(\rho_n)\), and since \(q\mapsto\kl(p_e,q)\) has
uniformly bounded derivative on the sparse envelope,
\(R_n(q^{\rm BMA})=\min_j R_n(\gamma_j)+o_p(\rho_n)\); also
\(R_n(q_{\widehat\bbeta_n})=R_n(\bbeta_n^\circ)+o_p(\rho_n)\). For any \(\gamma\in B\), the score
equation kills the cross term and
\(R_n(\gamma)-R_n(\bbeta_n^\circ)=m_n^{-1}\sum_e\kl(q_e^\circ,q_{\gamma,e})\), giving the formula for
\(G_n\). Bernoulli KL is strongly convex in the natural parameter on the envelope, so
\(\kl(q_{\bbeta_n^\circ,e},q_{\gamma_j,e})\ge c\rho_n\{(\bbeta_n^\circ-\gamma_j)^\top z_e\}^2\), and (A5)
gives \(G_n\ge c_G\rho_n\min_j\|\bbeta_n^\circ-\gamma_j\|^2\).
\emph{(v).} For \(p,q\in[c\rho_n,C\rho_n]\), Taylor of \(q\mapsto\kl(p,q)\) with comparability gives
\(c_{\rm kl}>0\) with \(\kl(p,q)\ge c_{\rm kl}(p-q)^2/\{p(1-p)\}\). Every convex stack lies in
\([c\rho_n,C\rho_n]\), so \(\inf_{r\in\mathcal C_n}R_n(r)\ge c_{\rm kl}d_{\mathcal C,n}^2\), and
subtracting \(\inf_{\bbeta\in B}R_n(\bbeta)=\varepsilon_{\rm BPS,n}\) gives the stated lower bound. The
remaining conclusions are immediate.
\end{proof}

\begin{figure}[t]
\centering
\includegraphics[width=\textwidth]{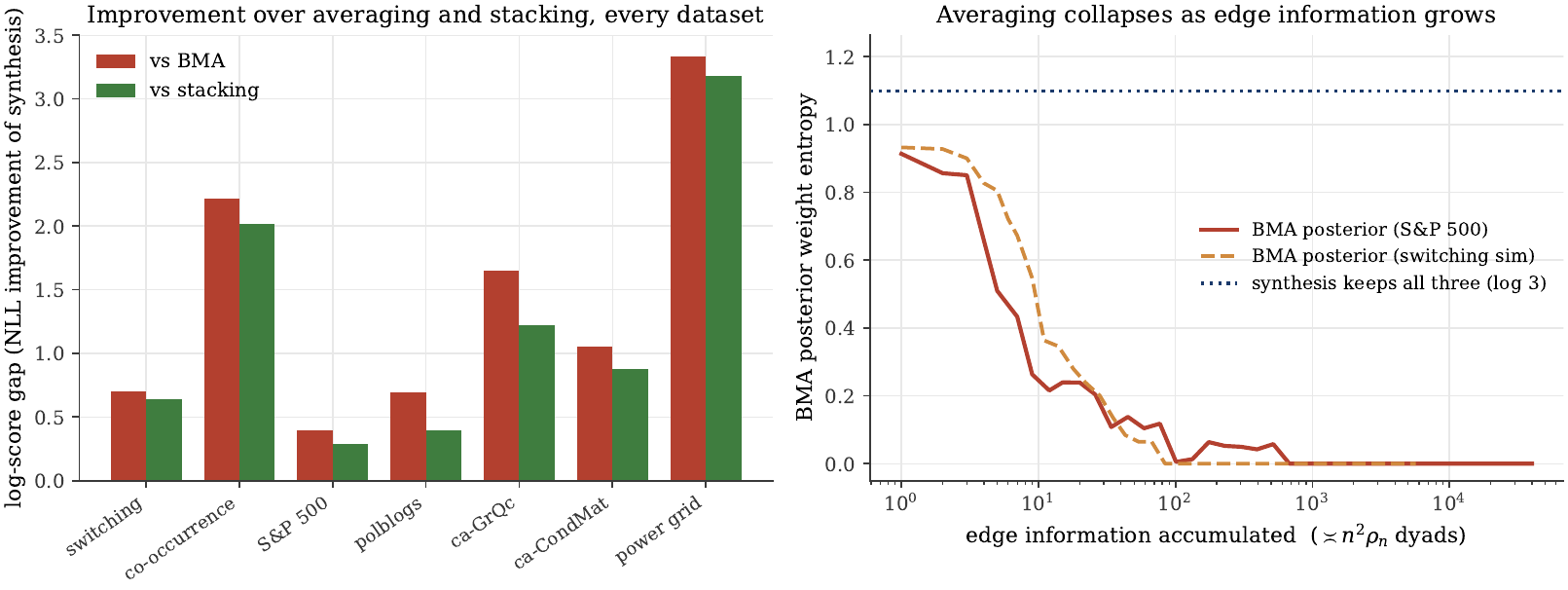}
\caption{Theorem~\ref{thm:t4}: dominance over model averaging under misspecification. (Left)
Log-score gap of the synthesis over Bayesian model averaging and stacking across seven theory-check
and benchmark studies (the switching and co-occurrence simulations, the S\&P 500 network, polblogs,
ca-GrQc, ca-CondMat, and the power-grid mesh), positive on each. (Right) The averaging posterior collapses as edge information accumulates.
Within a single snapshot, the entropy of the Bayesian model-averaging weights over the three
mechanism agents falls from $\log 3$ toward zero as the number of scored dyads grows to
$\asymp n^2\rho_n$, on both the switching simulation and the S\&P network, as the posterior
concentrates on the single Kullback--Leibler-closest agent. The synthesis keeps all three agents
active throughout; because its coefficients are predictive log-odds weights, not a posterior
distribution, the contrast is qualitative, and the reference line marks the maximum (uniform) entropy $\log 3$.}
\label{fig:t4}
\end{figure}

\section{Static cross-fitted affine-logit synthesis}
\label{supp:t5}

\begin{corollary}[Static cross-fitted affine-logit synthesis]
\label{cor:t5supp}
Let \(T=1\). Let \(\mathcal F_{\rm tr}\) be the sigma-field of the training data and fitted
cross-fitted agents. Conditional on \(\mathcal F_{\rm tr}\), let \(P_n\) be the evaluation law of a
validation/test dyad \((X,Y)\), \(Y\in\{0,1\}\), with \(\pi_n(x)=\mathbb P_n(Y=1\mid X=x)\),
\(\ell(y,q)=-y\log q-(1-y)\log(1-q)\), and \(R_n(q)=\mathbb E_n\{\ell(Y,q(X))\}\). Let
\(\widehat u_n(x)=\{\widehat u_{n1}(x),\ldots,\widehat u_{nJ}(x)\}^\top\),
\(\widetilde u_n(x)=(1,\widehat u_n(x)^\top)^\top\in\mathbb R^d\), \(d=J+1\), and for a known offset
\(b_n\) and \(\vartheta=(\alpha,\theta^\top)^\top\in\Theta_n\), set
\(\widehat q_\vartheta(x)=\sigma\{b_n+\vartheta^\top\widetilde u_n(x)\}\), \(\sigma(t)=1/(1+e^{-t})\).
Assume: (1) conditional on \(\mathcal F_{\rm tr}\), the validation dyads
\((X_i^{\rm val},Y_i^{\rm val})_{i=1}^{m_{\rm val}}\) are i.i.d.\ \(P_n\), with labels not used to build
\(\widehat u_n\); (2) on an event \(\mathcal E_n\) with \(\mathbb P(\mathcal E_n)\to1\), \(\Theta_n\) is
nonempty compact with \(\sup_{\vartheta\in\Theta_n}\|\vartheta\|_2\le M_n\),
\(\sup_x\|\widetilde u_n(x)\|_2\le K_n\sqrt d\),
\(\sup_{\vartheta\in\Theta_n,x}|b_n+\vartheta^\top\widetilde u_n(x)|\le B_n\), \(M_n,K_n,B_n\ge1\);
(3) the validation estimator is an approximate minimizer,
\(\widehat R_{\rm val}(\widehat q_{\widehat\vartheta})\le\inf_{\vartheta\in\Theta_n}\widehat R_{\rm val}
(\widehat q_\vartheta)+\varepsilon_{\rm opt,n}\), with \(\varepsilon_{\rm opt,n}=O_p(e_n)\).
Then, on \(\mathcal E_n\), for every \(0<\delta<1\), with conditional probability at least
\(1-\delta\),
\[
  R_n(\widehat q_{\widehat\vartheta})
  \le
  \inf_{\vartheta\in\Theta_n}R_n(\widehat q_\vartheta)
  +C\Big[M_nK_n\sqrt{\tfrac{J+1}{m_{\rm val}}}+B_n\sqrt{\tfrac{\log(2/\delta)}{m_{\rm val}}}\Big]
  +\varepsilon_{\rm opt,n},
\]
with \(C\) universal; hence
\(R_n(\widehat q_{\widehat\vartheta})\le\inf_{\vartheta}R_n(\widehat q_\vartheta)
+O_p[M_nK_n\sqrt{(J+1)/m_{\rm val}}+B_n/\sqrt{m_{\rm val}}+e_n]\). With a population-agent feature
\(u_n^0\) and \(q_\vartheta^0(x)=\sigma\{b_n+\vartheta^\top(1,u_n^0(x)^\top)^\top\}\), if
\(A_n=\sup_\vartheta|R_n(\widehat q_\vartheta)-R_n(q_\vartheta^0)|=O_p(a_n)\), then
\(R_n(\widehat q_{\widehat\vartheta})\le\inf_\vartheta R_n(q_\vartheta^0)+O_p[M_nK_n\sqrt{(J+1)/m_{\rm val}}
+B_n/\sqrt{m_{\rm val}}+e_n+a_n]\); for \(M_n,K_n,B_n=O(1)\) and
\(e_n=O[\sqrt{(J+1)/m_{\rm val}}+a_n]\) the rate is \(O_p[\sqrt{(J+1)/m_{\rm val}}+a_n]\). An independent
test sample adds \(B_n/\sqrt{m_{\rm test}}\).
If the class contains each single agent (\(\widehat q_{\vartheta^{(j)}}=\widehat q_j\)), then
\(\inf_\vartheta R_n(\widehat q_\vartheta)\le\min_j R_n(\widehat q_j)\). For nested spaces
\(\Theta_{J+1,n}\supseteq\{(\vartheta,0):\vartheta\in\Theta_{J,n}\}\), \(R_{J+1}^\star\le R_J^\star\)
where \(R_J^\star=\inf_{\vartheta\in\Theta_{J,n}}R_n(q_\vartheta^0)\); and with
\(q_J^\circ=q_{\vartheta_J^\circ}^0\), \(\mathcal L_J=\operatorname{span}\{1,u_1^0,\ldots,u_J^0\}\),
\(h\in L_2(P_n)\) bounded, \(h_\perp=h-\Pi_{\mathcal L_J}h\), if the enlarged space contains a
neighborhood of \((\vartheta_J^\circ,0)\) and \(\mathbb E_n[\{q_J^\circ(X)-\pi_n(X)\}h_\perp(X)]\ne0\),
then \(R_{J+1}^\star<R_J^\star\).
Under case-control sampling with \(\mathbb P(S=1\mid Y=1,X=x)=s_1\),
\(\mathbb P(S=1\mid Y=0,X=x)=s_0\in(0,1]\), one has
\(\operatorname{logit}\pi_{\rm cc}(x)=\operatorname{logit}\pi_{\rm pop}(x)+\log(s_1/s_0)\); a fitted
case-control forecast with \(\|\operatorname{logit}p_{{\rm cc},n}-\operatorname{logit}\pi_{\rm cc}\|_{L_r(\mu)}\to0\)
yields \(\|p_{{\rm pop},n}-\pi_{\rm pop}\|_{L_r(\mu)}\to0\) after the offset shift
\(b_{\rm pop,n}=b_{\rm cc,n}-\log(s_1/s_0)\), and the intercept enforces average calibration
\(\mathbb E_{\rm cc}\{Y-q_{\rm cc}^\circ(X)\}=0\).
Finally, under the signal-free condition \(\mathbb P_n(Y=1\mid\widehat u_n(X))=\pi_n=\mathbb P_n(Y=1)\),
with the base-rate forecast in the class (\(b_n+\alpha_\pi=\operatorname{logit}\pi_n\)) and nondegenerate
design (\(b^\top\widehat u_n(X)\) a.s.\ constant \(\Rightarrow b=0\)), every population minimizer has
\(q^\circ\equiv\pi_n\) and slope \(\theta^\circ=0\); if
\(M_nK_n\sqrt{(J+1)/m_{\rm val}}+B_n/\sqrt{m_{\rm val}}+e_n\to0\) then
\(\mathbb E_n|\widehat q_{\widehat\vartheta}(X)-\pi_n|\to0\) in probability, no \(\widehat u_n\)-measurable score ranks the dyads above chance, the probability that a positive dyad outscores a negative one being \(1/2\) (so its empirical counterpart \(\to1/2\)), and the fixed-bin
\(\ECE_K(\widehat q_{\widehat\vartheta})\to0\) in probability, with empirical-vs-population fixed-bin ECE
agreeing to \(O_p(\sqrt{K/m_{\rm test}})\).
\end{corollary}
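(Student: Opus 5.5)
The corollary has four parts: an oracle inequality, monotonicity and strict improvement when an agent is added, the case-control offset, and the signal-free reduction. I would prove them in that order. Each part rests on the convexity of the logistic loss in $\vartheta$ and on the independence of the validation sample from $\mathcal F_{\rm tr}$.

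\emph{Oracle inequality.} Condition on $\mathcal F_{\rm tr}$ and work on $\mathcal E_n$. Decompose the excess risk in the usual way:
\[
R_n(\widehat q_{\widehat\vartheta})-\inf_{\vartheta}R_n(\widehat q_\vartheta)\le 2\sup_{\vartheta\in\Theta_n}\big|\widehat R_{\rm val}(\widehat q_\vartheta)-R_n(\widehat q_\vartheta)\big|+\varepsilon_{\rm opt,n}.
\]
I would control the supremum with symmetrization and McDiarmid's bounded-differences inequality. The loss $\ell(y,\sigma(\eta))$ is $1$-Lipschitz in the logit $\eta$, and it is bounded by $B_n+\log 2$ on the class, which produces the term $B_n\sqrt{\log(2/\delta)/m_{\rm val}}$. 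Talagrand's contraction reduces the Rademacher complexity to that of the linear class $\{x\mapsto\vartheta^\top\widetilde u_n(x)\}$. That complexity is at most $M_nK_n\sqrt{d/m_{\rm val}}$ by Cauchy--Schwarz, and the known offset $b_n$ cancels. The oracle-centered version then follows by adding and subtracting $R_n(q^0_\vartheta)$, which costs at most $2A_n$. An independent test sample adds one Hoeffding term, $B_n/\sqrt{m_{\rm test}}$.

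\emph{Monotonicity, strictness, and case-control.} The inclusion $\inf_\vartheta R_n(\widehat q_\vartheta)\le\min_j R_n(\widehat q_j)$ is immediate because each single agent is in the class. The relation $R^\star_{J+1}\le R^\star_J$ holds by nesting. For strictness, I take a directional derivative of $s\mapsto R_n(q^0_{(\vartheta_J^\circ,s)})$ at $s=0$, which equals $\mathbb E_n[(q_J^\circ-\pi_n)u^0_{J+1}]$. The first-order conditions at $\vartheta_J^\circ$ make $q_J^\circ-\pi_n$ orthogonal to $\mathcal L_J$. Taking $h=u^0_{J+1}$, the derivative therefore equals $\mathbb E_n[(q_J^\circ-\pi_n)h_\perp]\neq0$. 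Because the enlarged space contains a neighborhood of $(\vartheta_J^\circ,0)$, a small step in the descending direction strictly lowers the risk.

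The case-control identity is Bayes' rule: $\pi_{\rm cc}/(1-\pi_{\rm cc})=(s_1/s_0)\,\pi_{\rm pop}/(1-\pi_{\rm pop})$. Since $\sigma$ is $\tfrac14$-Lipschitz, $L_r$ convergence on the logit scale transfers to the probability scale after the shift of $b$. Average calibration is the intercept score equation.

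\emph{Signal-free case.} When the features carry no signal, $\pi_n(X)\equiv\pi_n$. The pointwise minimizer of the expected log-loss is the constant $\pi_n$, and it is attained in the class by assumption. By strict convexity of the log-loss in $q$, every population minimizer satisfies $q^\circ(X)=\pi_n$ almost surely. Nondegeneracy of the design then forces $\theta^\circ=0$. The oracle inequality makes the excess risk $R_n(\widehat q)-R_n(\pi_n)=\mathbb E_n\kl(\pi_n,\widehat q(X))$ tend to zero. Pinsker's inequality gives $\mathbb E_n|\widehat q-\pi_n|\to0$, and since ECE is bounded by $\mathbb E_n|\widehat q-\pi_n|$ plus the bin noise, ECE tends to zero as well. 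The AUC equals $1/2$ because $Y$ is independent of any $\widehat u_n$-measurable score. For fixed bins, the gap between empirical and population ECE is $O_p(\sqrt{K/m_{\rm test}})$ by applying Cauchy--Schwarz bin by bin.

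The main obstacle is the strictness step. It needs the first-order conditions at an interior point of the smaller class, and it needs to identify the directional derivative with the residual projection $h_\perp$. I would check that the neighborhood assumption supplies the interiority this requires.
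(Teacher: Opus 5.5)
Your proposal is correct and follows the paper's proof essentially step for step: symmetrization, contraction to the linear class, Cauchy--Schwarz and McDiarmid for the oracle inequality; the directional derivative in the new coordinate, reduced to $h_\perp$ by the first-order conditions; Bayes' rule for the offset; and the entropy-plus-KL decomposition with Pinsker in the signal-free case. Two small points: the signal-free hypothesis conditions on $\widehat u_n(X)$, not on $X$, so $\pi_n(X)\equiv\pi_n$ does not follow (the dyads can have heterogeneous true probabilities that the agents miss), but your argument still goes through, as in the paper, once you write $R_n(q)=H(\pi_n)+\mathbb E_n\{\kl(\pi_n,q(X))\}$ (with $H$ the binary entropy) only for $\widehat u_n$-measurable $q$, which is the whole synthesis class; and the interiority you flag is not needed, because if the $J$-agent gradient at $\vartheta_J^\circ$ is nonzero the assumed neighborhood of $(\vartheta_J^\circ,0)$ already admits a strict descent step with the new weight held at zero, while if that gradient vanishes your $h_\perp$ computation applies.
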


\begin{proof}
Conditional on \(\mathcal F_{\rm tr}\), the validation criterion is an empirical process over a fixed
function class. With \(\eta_\vartheta(x)=b_n+\vartheta^\top\widetilde u_n(x)\),
\(\partial_\eta\ell\{y,\sigma(\eta)\}=\sigma(\eta)-y\in[-1,1]\), so \(\eta\mapsto\ell\{y,\sigma(\eta)\}\)
is \(1\)-Lipschitz, and on \(\mathcal E_n\), \(0\le\ell\{y,\widehat q_\vartheta(x)\}\le\log2+B_n\le CB_n\).
Let \(\Delta_{\rm val}=\sup_\vartheta|\widehat R_{\rm val}(\widehat q_\vartheta)-R_n(\widehat q_\vartheta)|\).
Symmetrization gives \(\mathbb E\{\Delta_{\rm val}\mid\mathcal F_{\rm tr}\}\le 2\mathbb E_\varepsilon
\sup_\vartheta|m_{\rm val}^{-1}\sum_i\varepsilon_i\ell\{Y_i^{\rm val},\widehat q_\vartheta(X_i^{\rm val})\}|\).
Fixing \(\vartheta_0\) and splitting off its constant contribution (expectation \(\le CB_n/\sqrt{m_{\rm val}}\)),
the centered part has \(1\)-Lipschitz increments \(\phi_i\) with \(\phi_i(0)=0\); the Ledoux--Talagrand
contraction inequality bounds it by \(C\mathbb E_\varepsilon\sup_\vartheta|m_{\rm val}^{-1}\sum_i\varepsilon_i
(\vartheta-\vartheta_0)^\top\widetilde u_n(X_i^{\rm val})|\), and Cauchy--Schwarz with
\(\|\vartheta-\vartheta_0\|_2\le2M_n\) and \(\|\widetilde u_n\|_2\le K_n\sqrt d\) bounds this by
\(2M_nK_n\sqrt{d/m_{\rm val}}\). Thus
\(\mathbb E\{\Delta_{\rm val}\mid\mathcal F_{\rm tr}\}\le C[M_nK_n\sqrt{d/m_{\rm val}}+B_n/\sqrt{m_{\rm val}}]\).
One validation point changes \(\Delta_{\rm val}\) by at most \(CB_n/m_{\rm val}\), so McDiarmid gives, with
conditional probability \(\ge1-\delta\),
\(\Delta_{\rm val}\le C[M_nK_n\sqrt{d/m_{\rm val}}+B_n\sqrt{\log(2/\delta)/m_{\rm val}}]\). Approximate
minimization gives, for every \(\vartheta\),
\(R_n(\widehat q_{\widehat\vartheta})\le\widehat R_{\rm val}(\widehat q_{\widehat\vartheta})+\Delta_{\rm val}
\le\widehat R_{\rm val}(\widehat q_\vartheta)+\Delta_{\rm val}+\varepsilon_{\rm opt,n}
\le R_n(\widehat q_\vartheta)+2\Delta_{\rm val}+\varepsilon_{\rm opt,n}\); taking the infimum and
integrating over \(\mathcal F_{\rm tr}\) gives the oracle inequality and its \(O_p\) form. The
population-agent version follows from
\(\inf_\vartheta R_n(\widehat q_\vartheta)\le\inf_\vartheta R_n(q_\vartheta^0)+A_n\); the bounded case and
the test-sample term are immediate (Hoeffding for bounded test losses).
Single-agent inclusion gives \(\inf_\vartheta R_n(\widehat q_\vartheta)\le\min_j R_n(\widehat q_j)\).
Nestedness gives \(R_{J+1}^\star\le R_J^\star\). For strict improvement, with
\(q_\gamma(x)=\sigma\{b_n+(\vartheta_J^\circ)^\top(1,u_1^0,\ldots,u_J^0)^\top+\gamma h(x)\}\),
differentiation under the expectation gives
\(\frac{d}{d\gamma}R_n(q_\gamma)|_{0}=\mathbb E_n[\{q_J^\circ(X)-\pi_n(X)\}h(X)]\); the interior
first-order conditions give \(\mathbb E_n[\{q_J^\circ-\pi_n\}g]=0\) for \(g\in\mathcal L_J\), so the
derivative equals \(\mathbb E_n[\{q_J^\circ-\pi_n\}h_\perp]\), and if nonzero a small \(\gamma\) of the
opposite sign strictly lowers the risk, giving \(R_{J+1}^\star<R_J^\star\).
For case-control, Bayes' rule gives
\(\pi_{\rm cc}(x)=s_1\pi_{\rm pop}(x)/\{s_1\pi_{\rm pop}(x)+s_0(1-\pi_{\rm pop}(x))\}\), hence the logit
shift \(\log(s_1/s_0)\); since \(\sigma\) is \(1\)-Lipschitz the \(L_r\) consistency transfers, and the
intercept score equation \(\mathbb E_{\rm cc}\{Y-q_{\rm cc}^\circ(X)\}=0\) is average calibration.
Under the signal-free condition, for any \(\widehat u_n\)-measurable \(q\),
\(R_n(q)=H(\pi_n)+\mathbb E_n[\operatorname{KL}(\Bern(\pi_n)\|\Bern(q(X)))]\) with
\(H(\pi)=-\pi\log\pi-(1-\pi)\log(1-\pi)\); the KL term is nonnegative and zero iff \(q\equiv\pi_n\), so
every minimizer has \(q^\circ\equiv\pi_n\), and nondegeneracy forces \(\theta^\circ=0\). The oracle
inequality gives \(R_n(\widehat q_{\widehat\vartheta})-H(\pi_n)\to0\), and Pinsker
(\(\operatorname{KL}(\Bern(\pi_n)\|\Bern(q))\ge2(q-\pi_n)^2\)) gives
\(\mathbb E_n[\{\widehat q_{\widehat\vartheta}-\pi_n\}^2]\to0\), hence
\(\mathbb E_n|\widehat q_{\widehat\vartheta}-\pi_n|\to0\) by Cauchy--Schwarz. Signal-freeness gives
\(Y\perp S(X)\) for \(\widehat u_n\)-measurable \(S\), so the conditional score laws coincide and
\(\mathbb P(S_1>S_0)+\tfrac12\mathbb P(S_1=S_0)=\tfrac12\); the empirical version is the
two-sample \(U\)-statistic and converges to \(1/2\). Finally, since
\(\mathbb E_n[Y\mid q(X)]=\pi_n\), \(\ECE_K(q)\le\mathbb E_n|q(X)-\pi_n|\to0\); and for bounded
\(Z_{i,k}=\{Y_i^{\rm test}-q(X_i^{\rm test})\}\1\{q(X_i^{\rm test})\in I_k\}\) with disjoint bins,
\(\sum_k\operatorname{Var}_n(Z_{1,k})\le1\), so Cauchy--Schwarz gives
\(|\widehat{\ECE}_K(q)-\ECE_K(q)|=O_p(\sqrt{K/m_{\rm test}})\).
\end{proof}

\section{Cross-fold orthogonal unconditional limit (Theorem 1c)}
\label{supp:s6}
This section gives the full statement and proof of Theorem~1c of the main text. It specializes
the single-snapshot setting of Section~\ref{supp:t1} to a single snapshot $t$, suppressed from the notation,
and constructs a cross-fitted, Neyman-orthogonal version of the synthesis estimator whose normal
limit holds unconditionally over the dyad draw, the random fold split, and the agent refits.
Throughout, $\sigma$ is the logistic link and the sparse offset $\ell=\log\rho_n$ is carried
explicitly.

Fix a snapshot \(t\) and suppress \(t\) from the notation.  Let
\(\mathcal E_n\) be the dyad set used in the one-snapshot synthesis
update, \(m_n=|\mathcal E_n|\), and let
\[
  I_1,\ldots,I_K
\]
be a random balanced partition of \(\mathcal E_n\), with fixed
\(K<\infty\).  Let
\[
  \mathcal G_n=\sigma(\mathcal D_n,I_1,\ldots,I_K)
\]
be the sigma-field generated by the network design and the dyad split.
The design \(\mathcal D_n\) contains the node-level latent quantities,
the dyad labels, and any nonrandom oracle agent limits.
Conditional on \(\mathcal G_n\), assume that
\[
  A_e\sim {\rm Bernoulli}(p_e),
  \qquad e\in\mathcal E_n,
\]
independently over dyads.  For each fold \(k\), let
\[
  \mathcal T_{n,k}
  =
  \sigma\bigl(\mathcal G_n,\{A_f:f\notin I_k\}\bigr)
\]
be the training sigma-field for fold \(k\).  The cross-fitted nuisance
estimate used on fold \(k\) is denoted by
\(\widehat\xi^{-k}\),
assumed \(\mathcal T_{n,k}\)-measurable, so that the
validation dyads in \(I_k\) are not used to construct their own nuisance
estimate.
The nuisance object is
\[
  \xi=(\eta,\ell),
  \qquad
  \eta=
  \left(
    u^{(1)}_{\cdot},\ldots,u^{(J)}_{\cdot}
  \right),
  \qquad
  \ell=\log\rho_n .
\]
If the sparse offset \(\ell\) is fixed or conditioned upon, it may be
omitted from \(\xi\).  If \(\ell\) is estimated from the same snapshot,
then the fold-specific estimate \(\widehat\ell^{-k}\) must be included
in \(\widehat\xi^{-k}\), or its influence function must be appended to
the orthogonal score below.
For
\[
  \bbeta=(\alpha,\theta_1,\ldots,\theta_J)^\top\in\mathbb R^d,
  \qquad
  d=J+1,
\]
define
\(
  z_e(\xi)
  =
  ( 1,u^{(1)}_e,\ldots,u^{(J)}_e )^\top
\)
and the sparse-offset BPS probability
\[
  q_e(\bbeta,\xi)
  =
  \sigma\left\{
    \ell+\bbeta^\top z_e(\xi)
  \right\}.
\]
Let \(\xi^\circ=(\eta^\circ,\ell^\circ)\)
be the oracle nuisance, and let \(\bbeta^\circ\) be the ordinary
sparse-offset BPS projection target.  Write
\[
  q^\circ_e=q_e(\bbeta^\circ,\xi^\circ),
  \qquad
  w^\circ_e=q^\circ_e(1-q^\circ_e),
  \qquad
  N_n=\sum_{e\in\mathcal E_n}w^\circ_e .
\]
For each fold,
\(
  N_{n,k}=\sum_{e\in I_k}w^\circ_e
\)
with \(N_{n,k}/N_n\to\pi_k\in(0,1)\).
The ordinary BPS score is
\(
  s_e(\bbeta,\xi)
  =
  z_e(\xi)\{A_e-q_e(\bbeta,\xi)\}
\),
the ordinary BPS population moment is
\[
  \Psi^s_n(\bbeta,\xi)
  =
  N_n^{-1}
  \sum_{e\in\mathcal E_n}
  E\{s_e(\bbeta,\xi)\mid\mathcal G_n\},
\]
and the target \(\bbeta^\circ\) satisfies \(\Psi^s_n(\bbeta^\circ,\xi^\circ)=0\).
For each fold \(k\), let \(c_{k,e}(\bbeta,\xi)\)
be a fold-specific correction, and define the foldwise orthogonal BPS
score
\[
  \psi^\perp_{k,e}(\bbeta,\xi)
  =
  s_e(\bbeta,\xi)-c_{k,e}(\bbeta,\xi),
  \qquad e\in I_k.
\]
Define the foldwise and aggregate orthogonal population moments
\[
  \Psi^\perp_{n,k}(\bbeta,\xi)
  =
  N_{n,k}^{-1}
  \sum_{e\in I_k}
  E\{\psi^\perp_{k,e}(\bbeta,\xi)\mid\mathcal G_n\},
  \qquad
  \Psi^\perp_n(\bbeta,\xi)
  =
  N_n^{-1}
  \sum_{k=1}^K
  \sum_{e\in I_k}
  E\{\psi^\perp_{k,e}(\bbeta,\xi)\mid\mathcal G_n\}.
\]
The cross-fold orthogonal estimating equation is
\[
  S^\perp_n(\bbeta)
  =
  \sum_{k=1}^K
  \sum_{e\in I_k}
  \psi^\perp_{k,e}(\bbeta,\widehat\xi^{-k}).
\]

{\renewcommand{\thetheorem}{1c}%
\begin{theorem}[Cross-fold orthogonal unconditional BPS limit]
\label{thm:crossfold-orthogonal-unconditional-bps}
Assume the following conditions.
\begin{enumerate}
\item[\textnormal{(C1)}]
\textnormal{Sparse boundedness and information growth.}
There exist constants \(0<c<C<\infty\) such that, uniformly for
\(\bbeta\) in a neighborhood of \(\bbeta^\circ\) and \(\xi\) in a
neighborhood of \(\xi^\circ\),
\(
  p_e\le C\rho_n
\),
\(
  c\rho_n
  \le
  q_e(\bbeta,\xi)
  \le
  C\rho_n
  \le
  1/2
\),
and \(\sup_{e,\xi}\|z_e(\xi)\|\le C\).
Moreover,
\[
  m_n\asymp n^2,
  \qquad
  N_n=\sum_e w^\circ_e\asymp m_n\rho_n\asymp n^2\rho_n,
  \qquad
  N_n\to\infty .
\]
\item[\textnormal{(C2)}]
\textnormal{Target preservation.}
For every \(\bbeta\) in a neighborhood of \(\bbeta^\circ\),
\(
  \Psi^\perp_n(\bbeta,\xi^\circ)
  =
  \Psi^s_n(\bbeta,\xi^\circ)
\),
equivalently
\(
  N_n^{-1}
  \sum_{k}
  \sum_{e\in I_k}
  E\{c_{k,e}(\bbeta,\xi^\circ)\mid\mathcal G_n\}
  =
  0
\).
Thus the orthogonal score has the same population target
\(\bbeta^\circ\) as the ordinary BPS score.
\item[\textnormal{(C3)}]
\textnormal{Foldwise Neyman orthogonality and second-order smoothness.}
For every fold \(k\) and every admissible nuisance direction \(h\),
\(
  \partial_\xi
  \Psi^\perp_{n,k}(\bbeta^\circ,\xi^\circ)[h]=0
\).
Define the sparse Fisher nuisance norm
\[
  \|h\|_{\mathcal H_n}^2
  =
  \max_{1\le k\le K}
  N_{n,k}^{-1}
  \sum_{e\in I_k}
  w^\circ_e\|h_e\|^2 .
\]
There exists \(C<\infty\) such that, uniformly in \(k\),
\[
  \left\|
    \Psi^\perp_{n,k}(\bbeta^\circ,\xi)
    -
    \Psi^\perp_{n,k}(\bbeta^\circ,\xi^\circ)
    -
    \partial_\xi
    \Psi^\perp_{n,k}(\bbeta^\circ,\xi^\circ)
    [\xi-\xi^\circ]
  \right\|
  \le
  C\|\xi-\xi^\circ\|_{\mathcal H_n}^2 .
\]
\item[\textnormal{(C4)}]
\textnormal{Construction of the correction.}
The correction \(c_{k,e}\) satisfies \textnormal{(C2)} and
\textnormal{(C3)}.  A sufficient finite-dimensional construction is the
following.  Suppose \(\xi=\xi(\gamma)\), and let
\(g_{k,e}(\gamma)\) be a foldwise nuisance score satisfying
\(
  \Gamma_{n,k}(\gamma^\circ)
  =
  N_{n,k}^{-1}
  \sum_{e\in I_k}
  E\{g_{k,e}(\gamma^\circ)\mid\mathcal G_n\}
  =
  0
\),
with nonsingular derivative
\(
  \dot\Gamma_{\gamma,n,k}
  =
  \partial_\gamma\Gamma_{n,k}(\gamma^\circ)
\).
Let
\(
  \dot\Psi^s_{\gamma,n,k}
  =
  \partial_\gamma
  \Psi^s_{n,k}(\bbeta^\circ,\gamma^\circ)
\),
where
\(
  \Psi^s_{n,k}(\bbeta,\gamma)
  =
  N_{n,k}^{-1}
  \sum_{e\in I_k}
  E\{s_e(\bbeta,\gamma)\mid\mathcal G_n\}
\).
Then
\[
  c_{k,e}(\bbeta,\gamma)
  =
  \dot\Psi^s_{\gamma,n,k}
  \dot\Gamma_{\gamma,n,k}^{-1}
  g_{k,e}(\gamma)
\]
is target-preserving at \(\gamma^\circ\) and satisfies the foldwise
orthogonality equation.
\item[\textnormal{(C5)}]
\textnormal{Cross-fitted nuisance rate.}
For
\(
  r_n=
  \max_{k}
  \|\widehat\xi^{-k}-\xi^\circ\|_{\mathcal H_n}
\),
we have \(r_n=o_p(1)\) and \(\sqrt{N_n}\,r_n^2=o_p(1)\).
If the fold-specific nuisance population target differs from
\(\xi^\circ\), the difference is included in \(r_n\).  If
\(c_{k,e}\) is estimated, the estimation error of the correction is also
included in \(r_n\).
\item[\textnormal{(C6)}]
\textnormal{Cross-fit stochastic equicontinuity.}
Let
\(
  \Delta_{k,e}
  =
  \psi^\perp_{k,e}(\bbeta^\circ,\widehat\xi^{-k})
  -
  \psi^\perp_{k,e}(\bbeta^\circ,\xi^\circ)
\).
Then
\[
  N_n^{-1/2}
  \sum_{k=1}^K
  \sum_{e\in I_k}
  \left[
    \Delta_{k,e}
    -
    E\{\Delta_{k,e}\mid\mathcal T_{n,k}\}
  \right]
  =
  o_p(1).
\]
\item[\textnormal{(C7)}]
\textnormal{Local root, local smoothness, and no aliasing.}
There exists a measurable local root \(\widetilde\bbeta\) such that
\(
  S^\perp_n(\widetilde\bbeta)=0
\)
and \(\widetilde\bbeta-\bbeta^\circ=o_p(1)\).
Let
\(
  H^\perp_n
  =
  -
  \partial_{\bbeta}
  \Psi^\perp_n(\bbeta^\circ,\xi^\circ)
\).
There are constants \(0<\lambda<\Lambda<\infty\) such that
\(
  \lambda
  \le
  \lambda_{\min}(H^\perp_n)
  \le
  \lambda_{\max}(H^\perp_n)
  \le
  \Lambda
\).
Moreover, with
\[
  \bar H_n
  =
  -
  N_n^{-1}
  \int_0^1
  \partial_{\bbeta}
  S^\perp_n
  \left(
    \bbeta^\circ+s(\widetilde\bbeta-\bbeta^\circ)
  \right)
  ds,
\]
we have \(\bar H_n-H^\perp_n=o_p(1)\).
\item[\textnormal{(C8)}]
\textnormal{Oracle dyadic CLT.}
Define the centered oracle orthogonal score
\(
  \bar\psi^\perp_{k,e}
  =
  \psi^\perp_{k,e}(\bbeta^\circ,\xi^\circ)
  -
  E\{
    \psi^\perp_{k,e}(\bbeta^\circ,\xi^\circ)
    \mid
    \mathcal G_n
  \}
\),
and let
\[
  \Omega^\perp_n
  =
  N_n^{-1}
  \sum_{k=1}^K
  \sum_{e\in I_k}
  \operatorname{Var}
  \{
    \psi^\perp_{k,e}(\bbeta^\circ,\xi^\circ)
    \mid
    \mathcal G_n
  \}.
\]
The eigenvalues of \(\Omega^\perp_n\) are bounded away from zero and
infinity, and for every \(\epsilon>0\),
\[
  N_n^{-1}
  \sum_{k=1}^K
  \sum_{e\in I_k}
  E\left[
    \|\bar\psi^\perp_{k,e}\|^2
    \mathbf 1
    \{
      \|\bar\psi^\perp_{k,e}\|>\epsilon\sqrt{N_n}
    \}
    \mid
    \mathcal G_n
  \right]
  \to 0
\]
in probability.
\item[\textnormal{(C9)}]
\textnormal{Variance estimation.}
Let
\(
  \widehat H^\perp_n
  =
  -
  N_n^{-1}
  \sum_{k}
  \sum_{e\in I_k}
  \partial_{\bbeta}
  \psi^\perp_{k,e}(\widetilde\bbeta,\widehat\xi^{-k})
\),
and assume \(\widehat H^\perp_n-H^\perp_n=o_p(1)\)
and that a foldwise covariance estimator
\(\widehat\Omega^\perp_n\) satisfies
\(\widehat\Omega^\perp_n-\Omega^\perp_n=o_p(1)\).
Under correct Bernoulli-BPS specification with per-dyad mean-zero
scores, the usual outer-product estimator may be used.  Under
fixed-design misspecification, the naive outer product is not generally
consistent; then \(\widehat\Omega^\perp_n\) must consistently estimate
the conditional Bernoulli variance \(\Omega^\perp_n\).
\end{enumerate}
Then
\[
  \sqrt{N_n}
  (\widetilde\bbeta-\bbeta^\circ)
  =
  (H^\perp_n)^{-1}
  N_n^{-1/2}
  \sum_{k=1}^K
  \sum_{e\in I_k}
  \bar\psi^\perp_{k,e}
  +
  o_p(1).
\]
Consequently, if \(H^\perp_n\to H^\perp\) and \(\Omega^\perp_n\to\Omega^\perp\),
then, unconditionally over the dyad draw, the random split, and the
agent refits,
\[
  \sqrt{N_n}
  (\widetilde\bbeta-\bbeta^\circ)
  \Rightarrow
  N\left(
    0,
    (H^\perp)^{-1}
    \Omega^\perp
    (H^\perp)^{-\top}
  \right).
\]
Equivalently, without deterministic matrix limits,
\(
  [
    (H^\perp_n)^{-1}
    \Omega^\perp_n
    (H^\perp_n)^{-\top}
  ]^{-1/2}
  \sqrt{N_n}
  (\widetilde\bbeta-\bbeta^\circ)
  \Rightarrow
  N(0,I_d)
\).
Moreover, with
\(
  \widehat V_n
  =
  (\widehat H^\perp_n)^{-1}
  \widehat\Omega^\perp_n
  (\widehat H^\perp_n)^{-\top}
\),
for every fixed contrast \(a\in\mathbb R^d\) with
\(
  0<\liminf_n a^\top
  (H^\perp_n)^{-1}\Omega^\perp_n(H^\perp_n)^{-\top}a
\),
\[
  \frac{
    \sqrt{N_n}\,a^\top(\widetilde\bbeta-\bbeta^\circ)
  }{
    \sqrt{a^\top\widehat V_n a}
  }
  \Rightarrow
  N(0,1),
\]
so the Wald interval
\(
  a^\top\widetilde\bbeta
  \pm
  z_{1-\alpha/2}
  \sqrt{
    a^\top\widehat V_n a/N_n
  }
\)
has coverage \(1-\alpha+o(1)\).  If
\(\textnormal{(C1)}\)--\(\textnormal{(C9)}\) hold uniformly over a
class \(\mathcal P_n\), including the conditional CLT and variance
consistency, then the coverage is uniform over \(\mathcal P_n\).
\end{theorem}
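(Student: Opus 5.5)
The plan follows the standard cross-fitted Z-estimator argument of double/debiased machine learning, adapted to the dyadic sparse scale $N_n$. The first step is a mean-value expansion of the estimating equation at the local root. Since $S^\perp_n(\widetilde\bbeta)=0$ by (C7), the fundamental theorem of calculus along the segment from $\bbeta^\circ$ gives
\[
0=N_n^{-1/2}S^\perp_n(\bbeta^\circ)-\bar H_n\sqrt{N_n}(\widetilde\bbeta-\bbeta^\circ).
\]
By (C7), $\bar H_n=H^\perp_n+o_p(1)$ with eigenvalues of $H^\perp_n$ bounded in $[\lambda,\Lambda]$, so $\bar H_n$ is invertible with probability tending to one. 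The whole task therefore reduces to showing
\[
N_n^{-1/2}S^\perp_n(\bbeta^\circ)=N_n^{-1/2}\sum_k\sum_{e\in I_k}\bar\psi^\perp_{k,e}+o_p(1)
\]
and that this leading term is $O_p(1)$. The $O_p(1)$ part then upgrades the matrix replacement $\bar H_n\to H^\perp_n$ to an $o_p(1)$ error, which yields the asymptotic linear representation.

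To obtain that reduction, decompose $S^\perp_n(\bbeta^\circ)$ into three pieces:
\[
\sum_{k,e}\psi^\perp_{k,e}(\bbeta^\circ,\xi^\circ)+\sum_{k,e}\bigl[\Delta_{k,e}-E\{\Delta_{k,e}\mid\mathcal T_{n,k}\}\bigr]+\sum_{k,e}E\{\Delta_{k,e}\mid\mathcal T_{n,k}\}.
\]
\begin{enumerate}
\item In the first (oracle) piece, target preservation (C2) together with $\Psi^s_n(\bbeta^\circ,\xi^\circ)=0$ makes the conditional mean $\sum_{k,e}E\{\psi^\perp_{k,e}(\bbeta^\circ,\xi^\circ)\mid\mathcal G_n\}=N_n\Psi^\perp_n(\bbeta^\circ,\xi^\circ)$ vanish. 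The oracle piece is therefore exactly $\sum\bar\psi^\perp_{k,e}$.
\item The second piece is $o_p(\sqrt{N_n})$ directly by (C6).
\item The third piece is the bias term and is the only place the nuisance rate enters. Because $\widehat\xi^{-k}$ is $\mathcal T_{n,k}$-measurable and the dyads in $I_k$ are conditionally independent of $\{A_f:f\notin I_k\}$ given $\mathcal G_n$, one has $E\{\Delta_{k,e}\mid\mathcal T_{n,k}\}=E\{\psi^\perp_{k,e}(\bbeta^\circ,\xi)\mid\mathcal G_n\}|_{\xi=\widehat\xi^{-k}}-E\{\psi^\perp_{k,e}(\bbeta^\circ,\xi^\circ)\mid\mathcal G_n\}$. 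Summing over $e\in I_k$ gives $N_{n,k}\{\Psi^\perp_{n,k}(\bbeta^\circ,\widehat\xi^{-k})-\Psi^\perp_{n,k}(\bbeta^\circ,\xi^\circ)\}$. By foldwise orthogonality (C3) the first-order term in $\widehat\xi^{-k}-\xi^\circ$ vanishes, and the second-order bound in (C3) controls the remainder by $CN_{n,k}r_n^2$. Summing over the fixed number $K$ of folds and dividing by $\sqrt{N_n}$ leaves $O(\sqrt{N_n}\,r_n^2)=o_p(1)$ by (C5).
\end{enumerate}
For the construction in (C4), I would verify separately that the correction $c_{k,e}=\dot\Psi^s_{\gamma}\dot\Gamma_\gamma^{-1}g_{k,e}$ has zero conditional mean at $\gamma^\circ$, which holds because $\Gamma_{n,k}(\gamma^\circ)=0$. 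Differentiating in $\gamma$ then gives $\dot\Psi^s_\gamma-\dot\Psi^s_\gamma\dot\Gamma_\gamma^{-1}\dot\Gamma_\gamma=0$, which is the orthogonality equation.

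With the linear representation in hand, the normal limit follows from Lindeberg--Feller and Cram\'er--Wold applied conditionally on $\mathcal G_n$. The summands $\bar\psi^\perp_{k,e}$ are conditionally independent and mean zero, their variance is $\Omega^\perp_n$, and the Lindeberg condition is (C8). Since the limit law does not depend on $\mathcal G_n$, dominated convergence of conditional characteristic functions makes the convergence unconditional over the design, the random split, and the refits; this is the step that removes the conditioning present in Theorem~\ref{thm:t1}. Slutsky's lemma with $H^\perp_n\to H^\perp$ and $\Omega^\perp_n\to\Omega^\perp$ gives the sandwich limit. Without deterministic limits, I would instead standardize by $[(H^\perp_n)^{-1}\Omega^\perp_n(H^\perp_n)^{-\top}]^{-1/2}$, which has bounded eigenvalues, and argue along subsequences. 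For the Wald statement, (C9) gives $a^\top\widehat V_na-a^\top(H^\perp_n)^{-1}\Omega^\perp_n(H^\perp_n)^{-\top}a=o_p(1)$. Because the latter is bounded below, the studentized ratio converges to $N(0,1)$, and coverage $1-\alpha+o(1)$ follows. Uniformity over $\mathcal P_n$ comes from running the same argument along arbitrary sequences $P_n\in\mathcal P_n$, using the uniform versions of the conditions.

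I expect the main obstacle to be the bias term in the third piece. The key point there is that $E\{\Delta_{k,e}\mid\mathcal T_{n,k}\}$ really equals the population moment evaluated at the plugged-in nuisance. This requires the conditional independence of held-out dyads from the training $\sigma$-field. With a random fold split and dyads that share nodes, that independence holds only because the Bernoulli draws are independent given $\mathcal G_n$ and the split is part of $\mathcal G_n$. I would make this explicit first, including the case where the offset $\ell$ is estimated, where its fold-specific estimate must sit inside $\widehat\xi^{-k}$. I would also check that the $\mathcal H_n$-norm in (C3) is the one in which $r_n$ is measured.
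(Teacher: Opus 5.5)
Your proposal is correct and follows the paper's proof essentially step for step. It uses the same mean-value expansion at the local root and the same three-way split of $N_n^{-1/2}S^\perp_n(\bbeta^\circ)$: the centered oracle score, the (C6) equicontinuity remainder, and the conditional-mean bias term removed by foldwise orthogonality together with $\sqrt{N_n}\,r_n^2=o_p(1)$. It then applies the conditional Lindeberg CLT, made unconditional through bounded characteristic functions, with Slutsky and (C9) for the Wald interval.

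Your separate check of the (C4) correction matches the paper's verification remark. You also use $N_n^{-1/2}\sum_{k}\sum_{e\in I_k}\bar\psi^\perp_{k,e}=O_p(1)$ explicitly to justify replacing $\bar H_n^{-1}$ by $(H^\perp_n)^{-1}$, a step the paper leaves implicit.
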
%
\addtocounter{theorem}{-1}}

\begin{proof}
By \(\textnormal{(C2)}\) and the definition of \(\bbeta^\circ\),
\(
  \Psi^\perp_n(\bbeta^\circ,\xi^\circ)
  =
  \Psi^s_n(\bbeta^\circ,\xi^\circ)
  =
  0
\),
equivalently
\(
  \sum_{k}
  \sum_{e\in I_k}
  E\{
    \psi^\perp_{k,e}(\bbeta^\circ,\xi^\circ)
    \mid
    \mathcal G_n
  \}
  =
  0
\).
This is an aggregate projection equation; it does not require
per-dyad mean-zero scores.
We first expand the estimating equation at the target.  With
\(
  \Delta_{k,e}
  =
  \psi^\perp_{k,e}(\bbeta^\circ,\widehat\xi^{-k})
  -
  \psi^\perp_{k,e}(\bbeta^\circ,\xi^\circ)
\),
\[
  N_n^{-1/2}S^\perp_n(\bbeta^\circ)
  =
  N_n^{-1/2}
  \sum_{k=1}^K
  \sum_{e\in I_k}
  \bar\psi^\perp_{k,e}
  +
  R_{1n}
  +
  R_{2n},
\]
where
\(
  R_{1n}
  =
  N_n^{-1/2}
  \sum_{k}
  \sum_{e\in I_k}
  [
    \Delta_{k,e}
    -
    E\{\Delta_{k,e}\mid\mathcal T_{n,k}\}
  ]
\)
and
\(
  R_{2n}
  =
  N_n^{-1/2}
  \sum_{k}
  \sum_{e\in I_k}
  E\{\Delta_{k,e}\mid\mathcal T_{n,k}\}
\).
The oracle deterministic mean vanishes because of the aggregate moment
equation above.
By \(\textnormal{(C6)}\), \(R_{1n}=o_p(1)\).
It remains to control \(R_{2n}\).  Conditional on \(\mathcal T_{n,k}\),
the nuisance estimate \(\widehat\xi^{-k}\) is fixed and the validation
dyads in \(I_k\) are independent Bernoulli draws.  Hence
\[
  \sum_{e\in I_k}
  E\{\Delta_{k,e}\mid\mathcal T_{n,k}\}
  =
  N_{n,k}
  \left\{
    \Psi^\perp_{n,k}
    (\bbeta^\circ,\widehat\xi^{-k})
    -
    \Psi^\perp_{n,k}
    (\bbeta^\circ,\xi^\circ)
  \right\}.
\]
By the second-order expansion in \(\textnormal{(C3)}\),
\[
  \Psi^\perp_{n,k}
    (\bbeta^\circ,\widehat\xi^{-k})
    -
    \Psi^\perp_{n,k}
    (\bbeta^\circ,\xi^\circ)
  =
    \partial_\xi
    \Psi^\perp_{n,k}(\bbeta^\circ,\xi^\circ)
    [\widehat\xi^{-k}-\xi^\circ]
    +
    O_p
    (
      \|\widehat\xi^{-k}-\xi^\circ\|_{\mathcal H_n}^2
    ).
\]
The first-order term is zero by foldwise Neyman orthogonality, so
\(
  \|
    \Psi^\perp_{n,k}
    (\bbeta^\circ,\widehat\xi^{-k})
    -
    \Psi^\perp_{n,k}
    (\bbeta^\circ,\xi^\circ)
  \|
  =
  O_p(r_n^2)
\)
uniformly in \(k\).  Since \(\sum_k N_{n,k}=N_n\),
\[
  \|R_{2n}\|
  \le
  N_n^{-1/2}
  \sum_{k=1}^K
  N_{n,k}O_p(r_n^2)
  =
  O_p(\sqrt{N_n}r_n^2)
  =
  o_p(1)
\]
by \(\textnormal{(C5)}\).  Consequently,
\(
  N_n^{-1/2}S^\perp_n(\bbeta^\circ)
  =
  N_n^{-1/2}
  \sum_{k}
  \sum_{e\in I_k}
  \bar\psi^\perp_{k,e}
  +
  o_p(1)
\).
By \(\textnormal{(C8)}\), conditional on \(\mathcal G_n\),
\(
  N_n^{-1/2}
  \sum_{k}
  \sum_{e\in I_k}
  \bar\psi^\perp_{k,e}
  \Rightarrow
  N(0,\Omega^\perp_n)
\)
in the triangular-array sense; if \(\Omega^\perp_n\to\Omega^\perp\) this is
\(N(0,\Omega^\perp)\), and the same holds unconditionally because the
conditional characteriztic functions converge in probability and are
uniformly bounded.
Next, since \(S^\perp_n(\widetilde\bbeta)=0\), the integral form of
Taylor's theorem gives
\[
  0
  =
  S^\perp_n(\bbeta^\circ)
  +
  \left[
    \int_0^1
    \partial_{\bbeta}
    S^\perp_n
    (
      \bbeta^\circ+s(\widetilde\bbeta-\bbeta^\circ)
    )
    ds
  \right]
  (\widetilde\bbeta-\bbeta^\circ)
  =
  S^\perp_n(\bbeta^\circ)
  -
  N_n\bar H_n(\widetilde\bbeta-\bbeta^\circ),
\]
so
\(
  \sqrt{N_n}(\widetilde\bbeta-\bbeta^\circ)
  =
  \bar H_n^{-1}
  N_n^{-1/2}S^\perp_n(\bbeta^\circ)
\).
By \(\textnormal{(C7)}\), \(\bar H_n=H^\perp_n+o_p(1)\) with \(H^\perp_n\)
uniformly nonsingular, so \(\bar H_n^{-1}=(H^\perp_n)^{-1}+o_p(1)\), and
\[
  \sqrt{N_n}
  (\widetilde\bbeta-\bbeta^\circ)
  =
  (H^\perp_n)^{-1}
  N_n^{-1/2}
  \sum_{k=1}^K
  \sum_{e\in I_k}
  \bar\psi^\perp_{k,e}
  +
  o_p(1).
\]
If \(H^\perp_n\to H^\perp\) and \(\Omega^\perp_n\to\Omega^\perp\), Slutsky's
theorem yields the stated normal limit; otherwise the asymptotic linear
representation and \(\textnormal{(C8)}\) give the self-normalized form.
Finally, by \(\textnormal{(C9)}\), \(\widehat V_n=(H^\perp_n)^{-1}\Omega^\perp_n
(H^\perp_n)^{-\top}+o_p(1)\), so for every fixed contrast \(a\) with
nondegenerate limiting variance Slutsky's theorem gives the scalar normal
limit, and the Wald interval has pointwise coverage \(1-\alpha+o(1)\).  If
the assumptions hold uniformly over \(\mathcal P_n\), the same argument with
uniform versions of the CLT, the nuisance-rate bounds, nonsingularity, and
variance consistency gives uniform coverage.
\end{proof}

\begin{remark}[Node-clustered variance under shared-node dependence]
\label{rmk:dyadic-robust}
Condition \textnormal{(C9)} takes the meat \(\widehat\Omega^\perp_n\) of the sandwich to be a consistent
estimator of the score covariance \(\Omega^\perp_n\). When the dyads are conditionally independent the
per-dyad outer product \(N_n^{-1}\sum_e \widehat s_e \widehat s_e^\top\), with
\(\widehat s_e=(Y_e-\widehat p_e)z_e\) the estimated score contribution of dyad \(e\), is consistent. When
two dyads that share a node are dependent, as for the correlation-thresholded networks of the main text
where edges sharing an asset are dependent, the score covariances between node-sharing dyads do not vanish
and the outer product understates \(\Omega^\perp_n\). The consistent replacement is the node-clustered
(dyadic-robust) estimator
\[
  \widehat\Omega^{\,\mathrm{dy}}_n
  =
  N_n^{-1}
  \sum_{e}\sum_{e'}
  \mathbf 1\{e\cap e'\neq\varnothing\}\,
  \widehat s_e\,\widehat s_{e'}^\top ,
\]
summing the outer products of all dyad pairs that share at least one node. Under the dyadic-dependence model
in which dyads with no common node are conditionally independent,
\(\widehat\Omega^{\,\mathrm{dy}}_n-\Omega^\perp_n=o_p(1)\) by the cluster-robust argument with the nodes as
overlapping clusters, so \textnormal{(C9)} holds with \(\widehat\Omega^\perp_n=\widehat\Omega^{\,\mathrm{dy}}_n\)
and the Wald interval retains its coverage. This is the variance reported for the S\&P~500 network in the main
text; it reduces to the per-dyad outer product when node-sharing dyads are uncorrelated.
\end{remark}

\begin{remark}[Verification of the orthogonal construction]
\label{rmk:verify-s6}
The single load-bearing step in Theorem~1c is that the explicit correction of
\(\textnormal{(C4)}\),
\(c_{k,e}=\dot\Psi^s_{\gamma,n,k}\,\dot\Gamma_{\gamma,n,k}^{-1}\,g_{k,e}\), satisfies the
target-preservation condition \(\textnormal{(C2)}\) and the foldwise Neyman-orthogonality
condition \(\textnormal{(C3)}\); the remainder of the proof is the standard asymptotic-linearity
argument. Both follow in one line from the zero-conditional-mean nuisance score
\(\Gamma_{n,k}(\gamma^\circ)=0\). For \(\textnormal{(C2)}\), the correction inherits that zero
mean,
\[
  E\{c_{k,e}(\bbeta^\circ,\gamma^\circ)\mid\mathcal G_n\}
  =\dot\Psi^s_{\gamma,n,k}\,\dot\Gamma_{\gamma,n,k}^{-1}\,\Gamma_{n,k}(\gamma^\circ)=0,
\]
so the orthogonal score keeps the BPS target \(\bbeta^\circ\). For \(\textnormal{(C3)}\),
differentiating \(\Psi^\perp_{n,k}=\Psi^s_{n,k}-E\{c_{k,e}\mid\mathcal G_n\}\) in \(\gamma\) at
\(\gamma^\circ\) and using
\(\partial_\gamma[\dot\Gamma_{\gamma,n,k}^{-1}\Gamma_{n,k}]_{\gamma^\circ}=
\dot\Gamma_{\gamma,n,k}^{-1}\dot\Gamma_{\gamma,n,k}=I\) gives
\(\partial_\gamma\Psi^\perp_{n,k}=\dot\Psi^s_{\gamma,n,k}-\dot\Psi^s_{\gamma,n,k}=0\), the
partialling-out identity of two-step semiparametric estimation. The construction can therefore be
checked by verifying only that \(g_{k,e}\) is a valid nuisance score, of zero conditional mean at
\(\gamma^\circ\) with nonsingular \(\dot\Gamma_{\gamma,n,k}\); the orthogonality is then
automatic.
\end{remark}

\begin{remark}[Which agents satisfy the nuisance-rate condition (C5)]
\label{rmk:c5-agents}
Condition (C5) asks for \(\sqrt{N_n}\,r_n^2\to0\) in the score-scale norm
\(\|\cdot\|_{\mathcal H_n}\). Since a logit-feature deviation equals the relative probability
deviation on the sparse scale \(p_e\asymp\rho_n\), \(r_n^2\) is the probability-scale dyadic
mean-squared error of the agent divided by \(\rho_n^2\). An agent with a fixed number of parameters,
each informed by \(\Theta(n^2)\) dyads, has probability-scale error \(O(\rho_n/n^2)\) and hence
\(r_n^2=O(1/(n^2\rho_n))\), so \(\sqrt{N_n}\,r_n^2=O(K^2/(n\sqrt{\rho_n}))\to0\); the \(K\)-block
model with fixed \(K\) is of this type. An agent with \(\Theta(n)\) node-level parameters, each
informed by only \(\Theta(n)\) dyads, has relative error \(\asymp(n\rho_n)^{-1/2}\) per dyad and so
\(r_n^2\asymp 1/(n\rho_n)\) up to dimension and polylog factors, giving
\(\sqrt{N_n}\,r_n^2\asymp 1/\sqrt{\rho_n}\), which does not vanish in the sparse regime; the
adjacency-spectral (GRDPG) and expected-degree (Chung--Lu) agents are of this type. Theorem~1c
therefore delivers the unconditional limit for finite-dimensional agents. For \(\Theta(n)\)-parameter
agents the second-order condition is not available in the sparse regime, and the relevant guarantee is
the conditional limit of Section~\ref{supp:t1}, which conditions on the fitted features and requires
no agent-rate condition.
\end{remark}

\section{Local minimax lower bound for mechanism tracking (Theorem 3b)}
\label{supp:s7}
This section gives the full statement and proof of the local minimax lower bound matching the tracking upper bound of Section~\ref{supp:t3}, together with its separated-switch
and switch-regret corollaries. The argument is a sparse Bernoulli-logit reduction: the
per-dyad divergence bound of Lemma~1 (Section~\ref{supp:t2}) controls the information in one snapshot, and
the lower bounds follow by Fano, Bretagnolle--Huber, and a stopped-likelihood argument across
the conditionally independent post-switch snapshots.

{\renewcommand{\thetheorem}{3b}%
\begin{theorem}[Local minimax lower bound for BPS mechanism tracking]
\label{thm:bps-tracking-lower}
Let \(J\ge 2\), and define
\[
        L_J=\max\{1,\log J\},
        \qquad
        L_{J,\alpha}=L_J\vee \log(1/\alpha),
        \qquad
        0<\alpha\le \alpha_0<1/4 .
\]
Consider the sparse-offset Bernoulli BPS model
\[
A_{t,e}\mid \bbeta_t^\circ
\sim
\operatorname{Bernoulli}\!\left[
\sigma\left\{
\log\rho_n+\alpha_t^\circ+
\sum_{j=1}^J \theta_{t,j}^\circ u^{(j)}_{t,e}
\right\}
\right],
\qquad
\bbeta_t^\circ=(\alpha_t^\circ,\theta_t^\circ),
\]
with conditionally independent dyads. For
\(\bbeta=(\alpha,\theta_1,\ldots,\theta_J)\), write
\(
        q_{\bbeta,t,e}
        =
        \sigma\{
        \log\rho_n+\alpha+
        \sum_{j}\theta_j u^{(j)}_{t,e}
        \}
\),
\(
        z_{t,e}
        =
        (1,u^{(1)}_{t,e},\ldots,u^{(J)}_{t,e})^\top
\),
\(
        N_t(\bbeta)
        =
        \sum_{e\in E_t}
        q_{\bbeta,t,e}\{1-q_{\bbeta,t,e}\}
\),
and
\[
        H_t(\bbeta)
        =
        N_t(\bbeta)^{-1}
        \sum_{e\in E_t}
        q_{\bbeta,t,e}\{1-q_{\bbeta,t,e}\}
        z_{t,e}z_{t,e}^\top .
\]
Assume that the dynamic BPS class
\(\mathcal P_{n,T}(1,\kappa,N)\) contains the following one-switch
least-favorable family. For every switch time \(s\) with sufficient
post-switch horizon, there are \(J\) laws \(P_1,\ldots,P_J\) satisfying:
\begin{enumerate}
\item \(P_1,\ldots,P_J\) are identical up to time \(s\).
\item After time \(s\), law \(P_r\) has constant BPS state
\(
        \bbeta^{(r)}=(\alpha_0,\kappa e_r),
        \ r=1,\ldots,J,
\)
where \(e_r\) is the \(r\)-th coordinate vector in \(\mathbb R^J\).
\item In this least-favorable family, the agent arrays are deterministic or
oracle-fixed and satisfy \(\max_{t,e,j}|u^{(j)}_{t,e}|\le B\).
The post-switch snapshots are conditionally independent over time given the
states and these agent arrays.
\item There is a compact convex BPS prediction class \(\Theta_0\) containing
\(\{\bbeta^{(1)},\ldots,\bbeta^{(J)}\}\), and, for every \(\bbeta\in\Theta_0\),
\(
        c_q\rho_n
        \le
        q_{\bbeta,t,e}
        \le
        C_q\rho_n
        \le 1/2
\)
uniformly in \(t,e\).
\item For every \(\bbeta\in\Theta_0\), \(N_t(\bbeta)\asymp N\asymp n^2\rho_n\).
\item The BPS design is uniformly non-aliased on \(\Theta_0\):
\(
        0<\lambda I_{J+1}
        \preceq
        H_t(\bbeta)
        \preceq
        \Lambda I_{J+1}
        <\infty
\)
for all \(\bbeta\in\Theta_0\).
\end{enumerate}
The active post-switch mechanism under \(P_r\) is
\(
        r_{s+}
        =
        \arg\max_{j}\theta_j^{(r)}
        =
        r
\),
with margin
\(
        \theta^{(r)}_r-\max_{j\neq r}\theta^{(r)}_j
        =
        \kappa
\).
Assume \(0<\kappa\le\kappa_0\), where \(\kappa_0\) is small enough that the
above sparse-scale bounds hold uniformly.
Then there exist constants \(c,c_0>0\), depending only on
\((B,c_q,C_q,\lambda,\Lambda,\alpha_0,\kappa_0)\), such that the following
bounds hold even if the procedure is told \(s\), the agent arrays, \(N\), and
\(\kappa\).

\noindent
\textbf{(i) Fixed-horizon localization.}
For any estimator \(\widehat r_{s+h}\) measurable with respect to \(A_{1:s+h}\),
\[
        \inf_{\widehat r_{s+h}}
        \sup_{P\in\mathcal P_{n,T}(1,\kappa,N)}
        P\{\widehat r_{s+h}\neq r_{s+}\}
        \ge c
\]
whenever \(hN\kappa^2\le c_0L_J\) and \(s+h\le T\).
Thus constant-probability localization before order
\(
        L_J/(N\kappa^2)
        \asymp
        \log J/(n^2\rho_n\kappa^2)
\)
post-switch snapshots is impossible.

\noindent
\textbf{(ii) \(\alpha\)-reliable declaration delay.}
Let \(\tau\ge s\) be a finite stopping time, and suppose a procedure declares
\(\widehat r_\tau\) with local error at most \(\alpha\):
\(
        \sup_{P}
        P\{\widehat r_\tau\neq r_{s+}\}
        \le \alpha
\).
Then
\[
        \inf_{\substack{\tau,\widehat r_\tau:\\
        \sup_{P}P\{\widehat r_\tau\neq r_{s+}\}\le\alpha}}
        \sup_{P\in\mathcal P_{n,T}(1,\kappa,N)}
        E_P(\tau-s)
        \ge
        c\frac{L_{J,\alpha}}{N\kappa^2}
        \asymp
        \frac{\log(eJ/\alpha)}{n^2\rho_n\kappa^2}.
\]
On a finite horizon, the same statement holds whenever the post-switch regime
lasts at least a constant multiple of \(L_{J,\alpha}/(N\kappa^2)\); otherwise
the lower bound is truncated by the available post-switch run length.
\end{theorem}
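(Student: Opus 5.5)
The plan is a reduction to $J$-ary hypothesis testing on the least-favorable family, with one per-snapshot KL bound feeding Fano for (i) and a stopped-likelihood change-of-measure argument for (ii).

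\emph{Per-snapshot divergence.} Since $P_1,\dots,P_J$ agree up to time $s$, and the post-switch snapshots are conditionally independent given the oracle-fixed agent arrays, the law of $A_{1:s+h}$ under $P_r$ versus $P_{r'}$ differs only through the $h$ post-switch snapshots. For one snapshot $t>s$ and one dyad $e$, the log-odds gap is $x_e=(\bbeta^{(r)}-\bbeta^{(r')})\trans z_{t,e}=\kappa(u^{(r)}_{t,e}-u^{(r')}_{t,e})$, which is bounded by $2B\kappa_0$. The probabilities lie in $[c_q\rho_n,C_q\rho_n]$ by item 4. Lemma~\ref{lem:sparse} therefore gives $\KL\le C\,q_e(1-q_e)x_e^2$. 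Summing over dyads gives
\[
\KL(P_r^{(t)}\|P_{r'}^{(t)})\le C\,N_t(\bbeta^{(r')})\,(\bbeta^{(r)}-\bbeta^{(r')})\trans H_t(\bbeta^{(r')})(\bbeta^{(r)}-\bbeta^{(r')})\le C\Lambda N\cdot 2\kappa^2,
\]
using items 5 and 6. By the chain rule over the $h$ conditionally independent post-switch snapshots, $\KL(P_r^{1:s+h}\|P_{r'}^{1:s+h})\le C_1hN\kappa^2$ uniformly over pairs. The non-aliasing lower bound $\lambda$ is not needed for the lower bound; it only certifies that the family lies in the class.

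\emph{Part (i).} I would apply Fano's inequality to the uniform prior on $\{P_1,\dots,P_J\}$, with the mutual information bounded by the maximal pairwise KL. This gives
\[
\max_r P_r\{\widehat r\ne r\}\ge 1-\frac{C_1hN\kappa^2+\log 2}{\log J}.
\]
For $J$ large I would take $c_0$ small, so that $hN\kappa^2\le c_0L_J$ forces the error above a constant. For small $J$ (so $L_J=1$, e.g.\ $J=2,3$) I would instead use the two-point Le~Cam/Bretagnolle--Huber bound $\tfrac14\exp(-\KL)\ge\tfrac14 e^{-C_1c_0}$. The constant $c$ is the minimum of the two cases. Knowledge of $s$, the arrays, $N$ and $\kappa$ is harmless, because these are common to all hypotheses.

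\emph{Part (ii).} I would use a stopped-likelihood (Wald identity) argument. The log-likelihood ratio between $P_r$ and $P_{r'}$, accumulated snapshot by snapshot after $s$, has conditionally expected increments at most $C_1N\kappa^2$. Optional stopping at the finite stopping time $\tau$ (truncated at $\tau\wedge M$, then let $M\to\infty$) gives
\[
\KL(P_r|_{\F_\tau}\|P_{r'}|_{\F_\tau})\le C_1N\kappa^2E_{P_r}(\tau-s).
\]
Two lower bounds on this divergence follow. First, data processing on the event $\{\widehat r_\tau=r\}$ gives the binary KL $\mathrm{kl}(1-\alpha,\alpha)\ge c\log(1/\alpha)$ for $\alpha\le\alpha_0<1/4$. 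Second, to obtain $\log J$, I would use a Fano-type argument for stopped data: average over $r'$ under a uniform mixture and use $\sum_{r'}P_{r'}\{\widehat r_\tau=r\}\le 1$, so some $r'$ has $P_{r'}\{\widehat r_\tau=r\}\le 1/(J-1)$. Then $\mathrm{kl}(1-\alpha,1/(J-1))\ge c\log J$. Combining the two gives $\max_r E_{P_r}(\tau-s)\ge cL_{J,\alpha}/(N\kappa^2)$. On a finite horizon, the same computation with $\tau\wedge T$ yields the truncated statement.

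The main obstacle is the per-snapshot KL bound. It requires that the sparse-scale constants of Lemma~\ref{lem:sparse} hold uniformly along the whole family, which is why $\kappa\le\kappa_0$ is imposed. It also requires carefully justifying optional stopping for possibly unbounded $\tau$, via truncation and monotone convergence, so that the bound is genuinely on $E(\tau-s)$ rather than on a fixed horizon.
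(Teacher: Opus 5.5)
Your plan is correct and follows the paper's proof: Lemma~\ref{lem:sparse} applied dyad by dyad, tensorization over dyads and over the $h$ conditionally independent post-switch snapshots, and Fano with a Bretagnolle--Huber fallback for (i). For (ii) it uses the stopped-likelihood identity $\KL(P_r^\tau\|P_{r'}^\tau)\le CN\kappa^2E_{P_r}(\tau-s)$, followed by data processing on $\{\widehat r_\tau=r\}$ to get $\kl(1-\alpha,\alpha)\ge c\log(1/\alpha)$.

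Two steps differ from the paper.

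First, you bound the one-snapshot divergence through the upper design constant, $N_t\,\Delta\bbeta^\top H_t\,\Delta\bbeta\le 2\Lambda N_t\kappa^2$. The paper instead uses $|u^{(j)}_{t,e}|\le B$ and $\sum_e q_e(1-q_e)=N_t$. Both are fine.

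Second, for the $\log J$ part of (ii), the paper applies Fano to the stopped experiment: $cL_J\le I(R;\tau,A_{s+1:\tau})\le J^{-2}\sum_{r,r'}\KL(P_r^\tau\|P_{r'}^\tau)\le CN\kappa^2\sup_rE_{P_r}(\tau-s)$. You instead use one change of measure against a well-chosen alternative. That works, but you must fix the quantifier. The inequality $\sum_{r'}P_{r'}\{\widehat r_\tau=r\}\le1$ fails for a general $r$, e.g.\ if every $P_{r'}$ with $r'\ne r$ sends probability $\alpha$ to the output $r$. It holds only for some $r$, by pigeonhole, because $\sum_r\sum_{r'}P_{r'}\{\widehat r_\tau=r\}=J$. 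Take that $r$. Since $P_r\{\widehat r_\tau=r\}\ge1-\alpha$, some $r'\ne r$ has $P_{r'}\{\widehat r_\tau=r\}\le\alpha/(J-1)$. Then $\kl(1-\alpha,\alpha/(J-1))\ge c\log(eJ/\alpha)$ lower-bounds $E_{P_r}(\tau-s)$ for that particular $r$, which is all the supremum requires.

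Your route is more elementary and yields the whole $L_{J,\alpha}$ from a single inequality, without splitting into a confidence piece and a multiplicity piece. The paper's mutual-information route, if run with the average instead of the supremum, also lower-bounds the prior-averaged delay $J^{-1}\sum_rE_{P_r}(\tau-s)$. (Minor: $L_J=1$ only for $J=2$, and Fano already suffices for all $J\ge3$; your Bretagnolle--Huber fallback is still valid for any bounded range of $J$.)
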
%
\addtocounter{theorem}{-1}}

\begin{proof}
It is enough to prove the result on the least-favorable family
\(\{P_1,\ldots,P_J\}\subset\mathcal P_{n,T}(1,\kappa,N)\). All lower bounds proved
for the oracle problem in which \(s\), the agent arrays, \(N\), and \(\kappa\) are
known therefore also hold for the original online tracking problem.
For \(r\neq r'\), define \(q^{(r)}_{t,e}=q_{\bbeta^{(r)},t,e}\) and the dyadic BPS
logit gap
\(
        \Delta^{rr'}_{t,e}
        =
        \operatorname{logit}q^{(r)}_{t,e}
        -
        \operatorname{logit}q^{(r')}_{t,e}
\).
Because \(\bbeta^{(r)}=(\alpha_0,\kappa e_r)\),
\(
        \Delta^{rr'}_{t,e}
        =
        \kappa\{u^{(r)}_{t,e}-u^{(r')}_{t,e}\}
\),
and the bounded-agent assumption gives \(|\Delta^{rr'}_{t,e}|\le 2B\kappa\).
By the sparse-scale bounds and \(N_t(\bbeta^{(r)})\asymp N\),
\(
        \sum_{e\in E_t}
        q^{(r)}_{t,e}\{1-q^{(r)}_{t,e}\}
        \{\Delta^{rr'}_{t,e}\}^2
        \le
        C N\kappa^2
\).
By the sparse Bernoulli-logit perturbation lemma (Lemma~1), for all \(r\neq r'\),
\[
\operatorname{KL}\!\left\{
\operatorname{Bern}(q^{(r)}_{t,e}),
\operatorname{Bern}(q^{(r')}_{t,e})
\right\}
\le
C
q^{(r)}_{t,e}\{1-q^{(r)}_{t,e}\}
\{\Delta^{rr'}_{t,e}\}^2 ,
\]
so summing over dyads gives the one-snapshot bound
\(
        \operatorname{KL}(P_{r,t},P_{r',t})
        \le
        C N\kappa^2
\).
Since the post-switch snapshots are conditionally independent,
\[
        \operatorname{KL}(P_r^{h},P_{r'}^{h})
        =
        \sum_{\ell=1}^{h}
        \operatorname{KL}(P_{r,s+\ell},P_{r',s+\ell})
        \le
        C hN\kappa^2 ,
\]
where \(P_r^h\) denotes the law of \((A_{s+1},\ldots,A_{s+h})\) under \(P_r\).
Let \(R\sim\operatorname{Unif}\{1,\ldots,J\}\) and \(Y_h=(A_{s+1},\ldots,A_{s+h})\)
be generated from \(P_R^h\). The mutual information satisfies
\(
        I(R;Y_h)
        \le
        J^{-2}
        \sum_{r,r'}
        \operatorname{KL}(P_r^h,P_{r'}^h)
        \le
        C hN\kappa^2
\).
For \(J\ge3\), Fano's inequality gives
\[
        \inf_{\widehat r}
        \frac1J\sum_{r=1}^{J}
        P_r\{\widehat r(Y_h)\neq r\}
        \ge
        1-
        \frac{C hN\kappa^2+\log2}{\log J},
\]
so for a sufficiently small \(c_0>0\), \(hN\kappa^2\le c_0\log J\) implies the
average error is at least \(c\). For \(J=2\), the Bretagnolle--Huber inequality
gives, for any \(r\neq r'\),
\(
        P_r\{\widehat r\neq r\}
        +
        P_{r'}\{\widehat r\neq r'\}
        \ge
        \frac12
        \exp\{-\operatorname{KL}(P_r^h,P_{r'}^h)\}
\),
hence
\(
        \max[
        P_r\{\widehat r\neq r\},
        P_{r'}\{\widehat r\neq r'\}
        ]
        \ge
        \frac14\exp\{-C hN\kappa^2\}
\),
and \(hN\kappa^2\le c_0\) proves the same constant-error bound. Since
\(L_J=\max\{1,\log J\}\), we have the Bayes-average inequality
\(
        \inf_{\widehat r_{s+h}}
        J^{-1}\sum_{r}
        P_r\{\widehat r_{s+h}\neq r\}
        \ge c
\)
whenever \(hN\kappa^2\le c_0L_J\). The fixed-horizon minimax statement follows
because the supremum over \(\mathcal P_{n,T}(1,\kappa,N)\) is at least the maximum
over \(\{P_1,\ldots,P_J\}\), which is at least the average.
It remains to prove the reliable declaration-delay bound. Let \(P_r^\tau\) denote
the law, under \(P_r\), of the stopped experiment \((\tau,A_{s+1},\ldots,A_\tau)\).
For any finite stopping time, the likelihood-ratio chain rule gives
\(
        \operatorname{KL}(P_r^\tau,P_{r'}^\tau)
        =
        E_r
        \sum_{\ell\ge1}
        \mathbf 1\{\tau-s\ge \ell\}
        \operatorname{KL}(P_{r,s+\ell},P_{r',s+\ell})
        \le
        C N\kappa^2 E_r(\tau-s)
\).
If \(E_r(\tau-s)=\infty\) the bound is trivial, so assume it finite. Suppose
\(\sup_r P_r\{\widehat r_\tau\neq r\}\le \alpha\). Fix \(r\neq r'\); under \(P_r\),
\(P_r\{\widehat r_\tau=r\}\ge 1-\alpha\), and under \(P_{r'}\),
\(P_{r'}\{\widehat r_\tau=r\}\le \alpha\). By data processing for the event
\(\{\widehat r_\tau=r\}\),
\(
        \operatorname{KL}(P_r^\tau,P_{r'}^\tau)
        \ge
        \operatorname{kl}(1-\alpha,\alpha)
        \ge
        c\log(1/\alpha)
\),
uniformly for \(0<\alpha\le\alpha_0<1/4\). Combining with the stopped-KL upper
bound, \(E_r(\tau-s)\ge c\log(1/\alpha)/(N\kappa^2)\), hence
\(\sup_r E_r(\tau-s)\ge c\log(1/\alpha)/(N\kappa^2)\). For the \(J\)-way piece, let
\(R\sim\operatorname{Unif}\{1,\ldots,J\}\); since \(\widehat r_\tau\) has Bayes error
at most \(\alpha\), Fano gives \(I(R;\tau,A_{s+1:\tau})\ge cL_J\) for \(J\ge3\),
while
\(
        I(R;\tau,A_{s+1:\tau})
        \le
        J^{-2}
        \sum_{r,r'}
        \operatorname{KL}(P_r^\tau,P_{r'}^\tau)
        \le
        C N\kappa^2
        \sup_r E_r(\tau-s)
\),
so \(\sup_r E_r(\tau-s)\ge cL_J/(N\kappa^2)\). For \(J=2\), \(L_J\) is absorbed by
\(\log(1/\alpha)\) since \(\alpha<1/4\). Combining the multiplicity and confidence
pieces, \(\sup_r E_r(\tau-s)\ge c(L_J\vee\log(1/\alpha))/(N\kappa^2)
=cL_{J,\alpha}/(N\kappa^2)\), and the supremum over the full class is at least the
supremum over the least-favorable family.
\end{proof}

{\renewcommand{\thecorollary}{3b.1}\renewcommand{\thetheorem}{3b.1}%
\begin{corollary}[Separated-switch reliable-delay lower bound]
\label{cor:bps-S-switch-delay-lower}
Assume the conditions of Theorem~\ref{thm:bps-tracking-lower}. Suppose the
dynamic class contains \(S\) separated hard switch episodes: after each switch
\(s_\ell\), conditional on the past up to \(s_\ell\), the post-switch law contains
the same \(J\)-point local BPS packing as in
Theorem~\ref{thm:bps-tracking-lower}, the corresponding post-switch testing
windows are disjoint, and each post-switch regime lasts at least
\(
        1+\lceil
        C L_{J,\alpha}/(N\kappa^2)
        \rceil
\)
snapshots. If the procedure is required to make an \(\alpha\)-reliable local
declaration after each switch,
\(
        \sup_P
        P\{\widehat r_{\tau_\ell}\neq r_{s_\ell+}\mid \mathcal F_{s_\ell}\}
        \le \alpha
\)
almost surely for each \(\ell\), then
\[
        \inf_{\widehat r}
        \sup_{P\in\mathcal P_{n,T}(S,\kappa,N)}
        E_P\sum_{\ell=1}^S(\tau_\ell-s_\ell)
        \ge
        c\frac{S L_{J,\alpha}}{N\kappa^2}.
\]
\end{corollary}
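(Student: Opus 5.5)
The plan is to reduce the $S$-switch statement to $S$ applications of the single-switch reliable-delay bound, Theorem~\ref{thm:bps-tracking-lower}(ii), one per episode. The episode bounds are then added, using that each is stated conditionally on the past. Fix a procedure $(\tau_\ell,\widehat r_{\tau_\ell})_{\ell\le S}$ that meets the almost-sure conditional reliability constraint.

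The least-favorable law is built sequentially. After each switch $s_\ell$, the post-switch state is drawn from the $J$-point packing $\{(\alpha_0,\kappa e_r)\}_{r\le J}$ of Theorem~\ref{thm:bps-tracking-lower}. The draw is independent across episodes. Because the testing windows are disjoint and each regime lasts at least $1+\lceil C L_{J,\alpha}/(N\kappa^2)\rceil$ snapshots, the post-switch laws for episode $\ell$ depend only on the local index $r_\ell$. They are also identical across the $J$ candidates up to $s_\ell$.

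The core step is a conditional version of part (ii). Condition on $\mathcal F_{s_\ell}$, and fix the realized past together with the indices of all other episodes. The pair $(\tau_\ell,\widehat r_{\tau_\ell})$ is then a stopping rule and declaration for a one-switch experiment drawn from the family $\{P_1,\dots,P_J\}$. By hypothesis its error is at most $\alpha$ almost surely.

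Part (ii) needs a finite horizon long enough for its truncation to be harmless. The run-length assumption supplies exactly this. To be safe, I would truncate $\tau_\ell$ at $s_\ell$ plus the regime length. This truncation can only lower $E(\tau_\ell-s_\ell)$, and the stopped-KL chain rule still applies to the truncated time. The truncated rule may no longer be $\alpha$-reliable, but I would handle this by noting that the stopped KL bound $\operatorname{KL}\le CN\kappa^2 E(\tau\wedge\cdot - s)$ is all that is used, together with data processing on the event $\{\widehat r=r\}$.

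With this in place, part (ii) gives a conditional bound for some index $r$:
\[
E\{\tau_\ell-s_\ell\mid\mathcal F_{s_\ell}\}\ge cL_{J,\alpha}/(N\kappa^2).
\]
More precisely, I would apply the averaged (Fano plus Bretagnolle--Huber) form of the proof. This yields the bound for the Bayes average over a uniform $r_\ell$, which is more convenient than the supremum because averages compose across episodes.

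To finish, take expectations and sum over $\ell$. The Bayes-average risk over the product prior on $(r_1,\dots,r_S)$ is then at least $cSL_{J,\alpha}/(N\kappa^2)$. The supremum over $\mathcal P_{n,T}(S,\kappa,N)$ dominates this Bayes average, and the infimum over procedures preserves the bound.

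The main obstacle is keeping the single-switch argument valid when the procedure's behavior in episode $\ell$ may depend on everything before it, including its own earlier declarations. This is why the uniform Bayes-average version must be used conditionally. The mutual-information step $I(R_\ell;\tau_\ell,A_{s_\ell+1:\tau_\ell}\mid\mathcal F_{s_\ell})\le CN\kappa^2\,E(\tau_\ell-s_\ell\mid\mathcal F_{s_\ell})$ relies on the conditional independence of post-switch snapshots given the states. I would verify that this still holds after conditioning on the past.
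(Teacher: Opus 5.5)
Your proposal follows the paper's argument: condition on $\mathcal F_{s_\ell}$, apply the reliable-delay part of Theorem~\ref{thm:bps-tracking-lower} inside each episode, take expectations, and sum. Your product prior over $(r_1,\dots,r_S)$ is a genuine gain in rigor. Part (ii) does not bound each $E_r(\tau-s)$. Its two-point data-processing piece holds for every $r$, but its Fano piece controls only the average $J^{-1}\sum_r E_r(\tau-s)$ (hence the supremum). So the paper's asserted per-$P$ bound $E_P(\tau_\ell-s_\ell\mid\mathcal F_{s_\ell})\ge cL_{J,\alpha}/(N\kappa^2)$, and its passage from episode-wise worst cases to one law that is worst for the whole sum, are exactly what your Bayes averaging supplies. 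The average is additive over episodes, and the past up to $s_\ell$ has the same law for every value of $r_\ell$, so the conditional average bound integrates.

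The truncation step should be dropped, because as written it pairs bounds for two different experiments. Once $\tau_\ell$ is truncated, the event $\{\widehat r_{\tau_\ell}=r\}$ is no longer measurable with respect to the truncated stopped data. Data processing then lower-bounds the KL of the untruncated experiment, while your chain-rule bound controls only the truncated one.

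No truncation is needed. Hold the other episodes' indices fixed. Every snapshot after the regime of episode $\ell$ ends then has the same conditional law under each value of $r_\ell$, so it contributes zero to the stopped chain rule. For the original stopping time this gives $\mathrm{KL}(P_r^{\tau_\ell},P_{r'}^{\tau_\ell}\mid\mathcal F_{s_\ell})\le CN\kappa^2E_r\{(\tau_\ell-s_\ell)\wedge L_\ell\mid\mathcal F_{s_\ell}\}\le CN\kappa^2E_r(\tau_\ell-s_\ell\mid\mathcal F_{s_\ell})$, with $L_\ell$ the regime length.

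Consequently the run-length hypothesis is not used in the inequality itself. Its role is to keep the $\alpha$-reliability constraint from being vacuous: if a regime were shorter than order $L_{J,\alpha}/(N\kappa^2)$, the same display shows no procedure could be $\alpha$-reliable.
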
%
\addtocounter{theorem}{-1}}

\begin{proof}
Condition on the history \(\mathcal F_{s_\ell}\) immediately before switch
\(s_\ell\). By assumption, the conditional post-switch experiment contains the
same least-favorable \(J\)-point BPS packing as in
Theorem~\ref{thm:bps-tracking-lower}. Applying the reliable declaration-delay
part of that theorem conditionally on \(\mathcal F_{s_\ell}\) gives
\(
        E_P(\tau_\ell-s_\ell\mid \mathcal F_{s_\ell})
        \ge
        cL_{J,\alpha}/(N\kappa^2)
\).
Taking expectations and summing over the \(S\) disjoint episodes gives the claim,
and the supremum over \(P\) with the infimum over admissible procedures completes
the proof.
\end{proof}

{\renewcommand{\thecorollary}{3b.2}\renewcommand{\thetheorem}{3b.2}%
\begin{corollary}[Switch-regret lower bound for BPS coefficient predictors]
\label{cor:bps-switch-regret-lower}
Assume the conditions of Theorem~\ref{thm:bps-tracking-lower} and the separated
hard-episode condition of Corollary~\ref{cor:bps-S-switch-delay-lower}. Restrict
attention to predictable BPS coefficient predictors
\(
        \widehat\bbeta_{t|t-1}\in\Theta_0
\)
measurable with respect to \(A_{1:t-1}\). For \(N_t=N_t(\bbeta_t^\circ)\), define
the information-normalized BPS risk
\[
        \bar R_t(\bbeta)
        =
        \frac{1}{N_t}
        \sum_{e\in E_t}
        \operatorname{KL}\!\left\{
        \operatorname{Bern}(q_{\bbeta_t^\circ,t,e}),
        \operatorname{Bern}(q_{\bbeta,t,e})
        \right\}.
\]
Then
\[
        \inf_{\widehat\bbeta}
        \sup_{P\in\mathcal P_{n,T}(S,\kappa,N)}
        E_P
        \sum_{t=1}^T
        N_t\{
        \bar R_t(\widehat\bbeta_{t|t-1})
        -
        \bar R_t(\bbeta_t^\circ)
        \}
        \ge
        cS L_J .
\]
Since \(\bar R_t(\bbeta_t^\circ)=0\) in the least-favorable family, the displayed
quantity is the cumulative excess one-step BPS log-score, an ordinary expected
predictive-regret lower bound. The factor \(\log(1/\alpha)\) belongs to reliable
hard localization and is not part of unconstrained expected predictive regret;
the stationary parametric term of order \((J+1)\log(TN)\) requires a separate
stationary BPS regret lower-bound lemma.
\end{corollary}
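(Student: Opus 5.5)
The plan is to reduce the regret bound to the Bayesian redundancy identity for log loss on the least-favorable $J$-point family of Theorem~\ref{thm:bps-tracking-lower}, episode by episode. Three facts make this work:
\begin{enumerate}
\item the supremum over $\mathcal P_{n,T}(S,\kappa,N)$ dominates the average over a uniform prior $R$ on $\{1,\dots,J\}$ for each episode;
\item for a predictable coefficient predictor, the excess one-step risk at time $t$ is a KL divergence between two product laws on the snapshot;
\item among all predictive laws for $A_t$ given $A_{1:t-1}$, the posterior-predictive mixture minimizes the $R$-averaged KL to the truth.
\end{enumerate}

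\textbf{Step 1 (single episode).} Fix one switch $s$ with law $P_R$ after $s$. Since $\bar R_t(\bbeta_t^\circ)=0$, the summand $N_t\bar R_t(\widehat\bbeta_{t|t-1})$ equals $\operatorname{KL}\{P_{R,t}\,\|\,\bigotimes_e\Bern(q_{\widehat\bbeta_{t|t-1},t,e})\}$. The second argument is an $A_{1:t-1}$-measurable predictive law for $A_t$. Hence
\[
J^{-1}\sum_r E_r\,\operatorname{KL}\{P_{r,t}\,\|\,\widehat Q_t\}\ \ge\ J^{-1}\sum_r E_r\,\operatorname{KL}\{P_{r,t}\,\|\,M_t\},
\]
where $M_t$ is the Bayes mixture predictive. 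By the chain rule, summing the right side over $t=s+1,\dots,s+h$ gives exactly $I(R;A_{s+1:s+h})$. So the expected cumulative excess log-score over the window is at least this mutual information, for every predictor in the restricted class.

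\textbf{Step 2 (lower-bound the information).} I would show $I(R;Y_h)\ge cL_J$ once $h\ge C L_J/(N\kappa^2)$, which is available because each regime lasts at least $1+\lceil CL_{J,\alpha}/(N\kappa^2)\rceil$ snapshots.
\begin{itemize}
\item \emph{Separation per snapshot.} For $r\ne r'$ the dyadic logit gap is $\kappa(e_r-e_{r'})^\top z_{t,e}$. Non-aliasing (item 6) gives
\[
\sum_e q(1-q)\Delta^2 \ge 2\lambda N\kappa^2 .
\]
\item \emph{Divergence bounds.} By Lemma~\ref{lem:sparse}, the negative log Hellinger affinity per snapshot is $\ge cN\kappa^2$ and tensorizes over the $h$ independent snapshots.
\item \emph{Recovery of $R$.} The pairwise likelihood-ratio tests have error at most $\exp(-chN\kappa^2)$. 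A union bound over the $J-1$ competitors makes the maximum-likelihood error at most $1/4$ once $hN\kappa^2\ge C\log J$ (for $J=2$, once $hN\kappa^2\ge C$).
\item \emph{Fano in reverse.} Since $I(R;Y_h)\ge H(R)-h_2(P_e)-P_e\log(J-1)$, this gives $I\ge \tfrac34\log J-\log 2\ge cL_J$ for $J\ge3$. For $J=2$, the small error forces $I\ge c$ directly.
\end{itemize}

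\textbf{Step 3 (sum over episodes).} Condition on $\mathcal F_{s_\ell}$ and apply Steps~1--2 to each of the $S$ disjoint windows, exactly as in Corollary~\ref{cor:bps-S-switch-delay-lower}. Regret outside the windows is nonnegative and can be dropped. Taking expectations and summing gives $cSL_J$, and the supremum over the class dominates the product-prior average.

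\textbf{Main obstacle.} I expect Step~2 to be the hardest point. The upper half of the KL sandwich (used for Fano in Theorem~\ref{thm:bps-tracking-lower}) is now insufficient; I need a \emph{lower} separation for every pair simultaneously. This is where the uniform non-aliasing of $H_t$ over $\Theta_0$ is essential, together with the Hellinger form of Lemma~\ref{lem:sparse} so that the pairwise errors decay exponentially. A second, minor point: the mixture $M_t$ is not itself a BPS law, but this is harmless because Step~1 only uses its optimality among all predictive laws.
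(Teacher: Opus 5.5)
Your argument is correct, but it takes a genuinely different route from the paper's.

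\textbf{The paper's route.} The paper works on the loss side. It decodes $\widehat\bbeta_{t|t-1}$ to the nearest packing point $\bbeta^{(r')}$. It then uses uniform strong convexity of $\bar R_t$ on the convex set $\Theta_0$ (from the sparse-scale bounds and non-aliasing) to turn a wrong decoding into excess normalized risk at least $c\kappa^2$. Finally it reuses the impossibility bound of Theorem~\ref{thm:bps-tracking-lower}(i): that decoding errs with constant Bayes probability for the first $M=1+\lfloor c_0L_J/(N\kappa^2)\rfloor$ post-switch forecasts, giving $cMN\kappa^2\ge cL_J$ per switch.

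\textbf{Your route.} You work on the information side. The Bayes-mixture (redundancy) identity makes the cumulative excess log-loss of any predictable forecast at least $I(R;A_{s+1:s+h})$. You then lower-bound this mutual information by reverse Fano, using achievability: Hellinger tensorization from the lower half of Lemma~\ref{lem:sparse}, plus a union bound for the maximum-likelihood decoder.

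\textbf{What your route buys.}
\begin{itemize}
\item It never uses properness of the predictor, so the $SL_J$ floor holds for arbitrary sequential probability assignments, including the non-BPS mixture.
\item It needs neither convexity of $\Theta_0$ nor strong convexity of $\bar R_t$.
\item Your product-prior averaging makes the summation over episodes inside the supremum explicit, which the paper leaves implicit.
\end{itemize}

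\textbf{What the paper's route buys.}
\begin{itemize}
\item It reuses the impossibility half already proved.
\item It needs only about $c_0L_J/(N\kappa^2)$ post-switch snapshots per window. Your Step~2 instead needs windows of length at least $C'L_J/(N\kappa^2)$, with $C'$ large enough for the union-bound decoder to succeed. You must therefore read the constant in the separated-episode hypothesis of Corollary~\ref{cor:bps-S-switch-delay-lower} as sufficiently large.
\item For $\Theta_0$-valued predictors it actually yields the stronger per-switch cost $\gtrsim\max\{N\kappa^2,L_J\}$, since the $m=0$ forecast alone costs order $N\kappa^2$. Your bound is capped at $H(R)=\log J$ per switch, which is exactly what an improper mixture predictor pays.
\end{itemize}

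A small imprecision: the non-aliasing step gives $\sum_e q(1-q)\Delta^2\ge c\lambda N\kappa^2$, not $2\lambda N\kappa^2$. The reason is that $N_t(\bbeta^{(r)})\asymp N$ holds only up to constants. This does not affect the argument.
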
%
\addtocounter{theorem}{-1}}

\begin{proof}
Fix one hard switch at time \(s\), and write \(a=N\kappa^2\), \(L=L_J\). For the
post-switch packing \(\{\bbeta^{(1)},\ldots,\bbeta^{(J)}\}\), define the
nearest-neighbor decoder
\(
        d(\bbeta)
        =
        \arg\min_{r'}
        \|\bbeta-\bbeta^{(r')}\|_2
\).
Because \(\|\bbeta^{(r)}-\bbeta^{(r')}\|_2=\sqrt{2}\kappa\) for \(r\neq r'\), the
event \(d(\bbeta)\neq r\) implies \(\|\bbeta-\bbeta^{(r)}\|_2\ge\kappa/\sqrt{2}\). By
the uniform sparse-scale bounds and the uniform non-aliasing assumption, the
information-normalized logistic risk is uniformly strongly convex on \(\Theta_0\),
so \(\bar R_t(\bbeta)-\bar R_t(\bbeta^{(r)})\ge c\|\bbeta-\bbeta^{(r)}\|_2^2\), and
therefore \(d(\bbeta)\neq r\) implies \(\bar R_t(\bbeta)-\bar R_t(\bbeta^{(r)})\ge
c\kappa^2\). At prediction time \(s+m+1\), the prequential predictor
\(\widehat\bbeta_{s+m+1|s+m}\) has seen only \(m\) post-switch snapshots, so the
induced decoder \(d(\widehat\bbeta_{s+m+1|s+m})\) is an estimator of \(r\) from
\(m\) post-switch snapshots. By the Bayes-average fixed-horizon bound of
Theorem~\ref{thm:bps-tracking-lower}, whenever \(ma\le c_0L\),
\(
        J^{-1}\sum_r
        P_r\{d(\widehat\bbeta_{s+m+1|s+m})\neq r\}
        \ge c
\),
hence
\(
        J^{-1}\sum_r
        E_r[
        \bar R_{s+m+1}(\widehat\bbeta_{s+m+1|s+m})
        -
        \bar R_{s+m+1}(\bbeta^{(r)})
        ]
        \ge
        c\kappa^2
\)
for every \(m\) with \(ma\le c_0L\). Let \(M=1+\lfloor c_0L/a\rfloor\); then
\(ma\le c_0L\) for \(m=0,\ldots,M-1\), and \(Ma\ge cL\), so
\[
        J^{-1}\sum_r
        E_r
        \sum_{m=0}^{M-1}
        N[
        \bar R_{s+m+1}(\widehat\bbeta_{s+m+1|s+m})
        -
        \bar R_{s+m+1}(\bbeta^{(r)})
        ]
        \ge
        c M N\kappa^2
        \ge
        cL_J .
\]
The supremum over \(r\) is at least the Bayes average, so one hard switch costs at
least \(cL_J\) in expected scaled BPS regret. By the separated-hard-episode
assumption the same argument applies on each of the \(S\) disjoint episodes with
non-overlapping prediction windows; summing the \(S\) one-switch bounds gives the
claim.
\end{proof}

\begin{remark}[Verification of the lower-bound constant and the switch-regret scope]
\label{rmk:verify-s7}
Two steps carry Theorem~3b and are worth isolating. First, the constant in the lower bound is the
one supplied by Lemma~1: the per-dyad Kullback--Leibler and Hellinger divergences between the
competing post-switch laws are at most \(C\,p(1-p)x^2\le C'\kappa^2\), with \(C'\) depending only
on the design bounds \([\lambda,\Lambda]\), so one snapshot carries at most \(C'N\kappa^2\) units
of information; the Fano and Bretagnolle--Huber reductions then forbid constant-probability
localization until \(hN\kappa^2\gtrsim\log J\), and the constant in
\(\log J/(C'N\kappa^2)\) is exactly the Lemma~1 constant. This is what makes the bound match the
discounted-filter upper bound up to constants, not only in order. Second, the switch-regret bound \(S\,L_J\) is the cost of identifying the
post-switch mechanism among \(J\) alternatives at each of \(S\) switches, and it deliberately
\emph{excludes} the stationary within-regime estimation term \((J{+}1)\log(TN)\): the
least-favorable family fixes the within-regime coefficients at \((\alpha_0,\kappa e_r)\) and
varies only the active index \(r\), so no within-regime estimation cost is charged, and the bound
is per switch, the matching companion to the \(O(SD^2)\) switch term of the upper bound, with
neither term scaling in \(T\).
\end{remark}

\section{Mechanism localization on a Bitcoin trust network}
\label{sec:e2}
The financial network of Section~\ref{sec:localize} is one real dynamic network on which the
filter localizes a documented regime change. We add a second of an entirely different kind, a
who-trusts-whom network, not a correlation network. We use the Bitcoin-OTC trust network
\citep{kumar2016edge,leskovec2016snap}, a standard temporal benchmark in which members of a trading
platform rate one another from $-10$ to $+10$, observed natively as timestamped ratings among
$5{,}881$ users from November 2010 to January 2016 (we verify the file against its canonical
checksum). We restrict to the active core of $483$ users with at least thirty ratings,
comparable in size to the financial panel, and form one undirected binary snapshot per quarter, placing an edge between two users who
exchanged at least one rating in the quarter regardless of its sign or direction, giving $18$ snapshots on which we run the same one-step-ahead protocol and competitors. The
window contains a documented external shock: the February 2014 collapse of Mt.\ Gox, which had
handled the majority of global Bitcoin volume.

The calibration result is sharper here than on the financial panel. The snapshots are sparse,
so the raw mechanism agents emit edge probabilities near zero and are badly miscalibrated
against a balanced evaluation, and every competitor inherits this. Averaged over the seventeen
one-step forecasts the synthesis attains NLL $0.678$, PIT-KS $0.075$, and ECE $0.097$, against
NLL between $2.8$ and $4.9$ and ECE near $0.48$ for model averaging, stacking, sequential
stacking, equal weights, and the dynamic block and latent-space competitors alike; the
intercept restores the overall level that no convex combiner can reach. The paired bootstrap
places every NLL gap far from zero, for instance $+2.99$ ($[2.55,3.45]$) against the dynamic
latent-space model, with $p<10^{-4}$. The ranking pattern is again the same: the synthesis gives up some separation to the convex combiners, which rank more sharply while remaining badly miscalibrated, because pulling the badly scaled agents back to a calibrated level costs some
separation. The decision experiment resolves which matters. With the threshold set from the
cost ratio, the miscalibrated competitors incur substantially higher decision cost, and the synthesis incurs $49\%$
lower cost at $r=2$, $78\%$ at $r=5$, and $87\%$ at $r=10$, a far wider margin than on the
financial panel precisely because the alternatives are so poorly calibrated.

The localization reads off the documented shock. Figure~\ref{fig:btcloc} plots the synthesis
state across quarters. The degree mechanism (Chung-Lu) holds the largest weight throughout, the
expected signature of a trust network in which reputation concentrates on a few much-rated
hubs, while the community weight decays from $0.78$ to near zero as the early club structure
dissolves. The regime change appears in the intercept, which corrects the agents' overall level
and therefore inversely tracks density: it falls as the network densifies through the 2013
Bitcoin run-up, reaching its minimum at the third-quarter-2013 activity peak, and rises steadily
through the contraction that follows the Mt.\ Gox collapse. Across the financial and trust
networks the filtered weights localize a documented structural break by the same mechanism that
Theorem~\ref{thm:t3} controls.

\begin{figure}[t]
\centering
\includegraphics[width=0.92\textwidth]{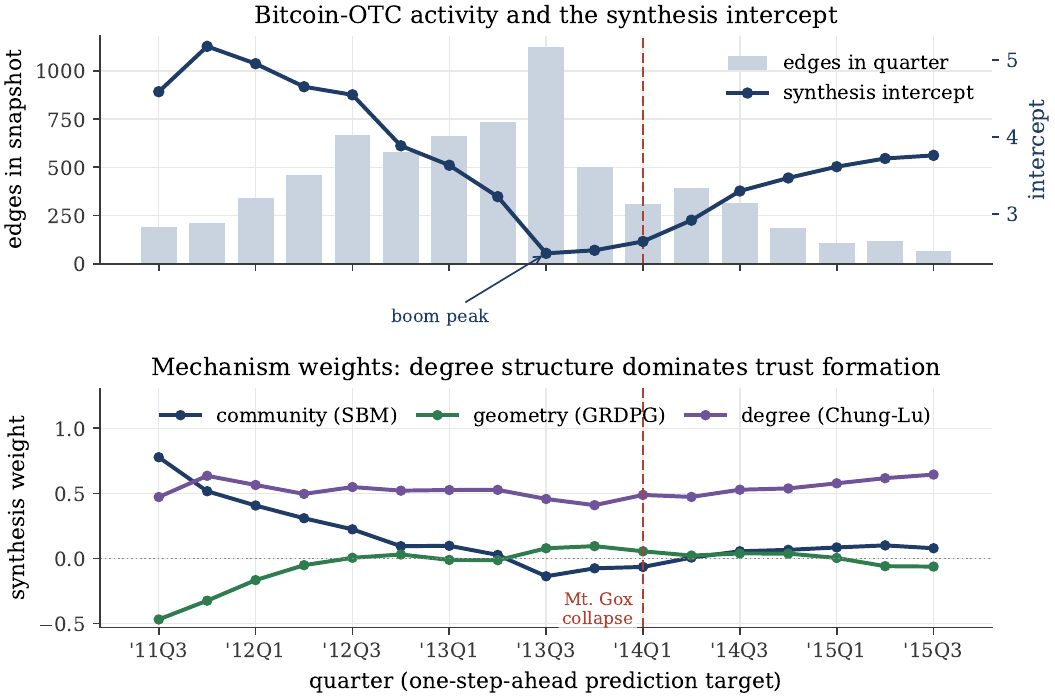}
\caption{Mechanism localization on the real Bitcoin-OTC trust network
\citep{kumar2016edge}, one-step-ahead over $18$ quarterly snapshots
(Theorem~\ref{thm:t3}). Top: per-quarter activity (bars) and the synthesis intercept (line),
which corrects the agents' level and inversely tracks density, bottoming at the 2013 boom peak
and rising through the post-Mt.-Gox contraction (dashed line, February 2014). Bottom: synthesis
weights, with the degree (Chung-Lu) weight stably largest, the expected structure of a
trust network, and the community mechanism decaying. The filtered state tracks a documented
external shock without it being supplied to the model.}
\label{fig:btcloc}
\end{figure}

\section{Static link prediction: the single-snapshot specialization}
\label{sec:e3}

The $T=1$ specialization is static link prediction. We hold out $15\%$ of edges of each of
four real networks for testing, fit agents on a core, learn the synthesis on a validation
fold, and evaluate on the held-out edges against an equal number of sampled non-edges. We
report two agent pools: the three generative agents, and a four-agent pool that adds the
local Adamic--Adar predictor, demonstrating that the synthesis absorbs a heterogeneous
object type. Table~\ref{tab:e3} reports the four-agent results; the effect of adding the
local agent is discussed below.

\begin{table}[t]
\centering
\caption{Link prediction on four real networks (four-agent pool: SBM, GRDPG, Chung--Lu,
Adamic--Adar). Held-out edges versus sampled non-edges. Best PIT-KS in each
block in bold.}
\label{tab:e3}
\small
\begin{tabular}{llccc}
\toprule
Network & Method & NLL & PIT-KS & ECE \\
\midrule
\multirow{4}{*}{polblogs ($n{=}1222$)}
 & \textbf{Static BPS} & \textbf{0.309} & \textbf{0.016} & \textbf{0.014} \\
 & BMA       & 0.999 & 0.157 & 0.382 \\
 & Stacking  & 0.702 & 0.184 & 0.391 \\
 & Equal weight & 0.957 & 0.311 & 0.430 \\
\midrule
\multirow{4}{*}{ca-GrQc ($n{=}4158$)}
 & \textbf{Static BPS} & \textbf{0.314} & \textbf{0.010} & \textbf{0.028} \\
 & BMA       & 1.967 & 0.226 & 0.479 \\
 & Stacking  & 1.535 & 0.257 & 0.464 \\
 & Equal weight & 1.785 & 0.338 & 0.460 \\
\midrule
\multirow{4}{*}{ca-CondMat ($n{=}21363$)}
 & \textbf{Static BPS} & \textbf{0.220} & \textbf{0.005} & \textbf{0.030} \\
 & BMA       & 1.269 & 0.187 & 0.499 \\
 & Stacking  & 1.100 & 0.207 & 0.496 \\
 & Equal weight & 1.522 & 0.384 & 0.495 \\
\midrule
\multirow{4}{*}{power-grid ($n{=}4941$)}
 & \textbf{Static BPS} & \textbf{0.629} & \textbf{0.014} & \textbf{0.010} \\
 & BMA       & 3.963 & 0.477 & 0.498 \\
 & Stacking  & 3.807 & 0.461 & 0.499 \\
 & Equal weight & 4.064 & 0.440 & 0.498 \\
\bottomrule
\end{tabular}
\end{table}

Each row instantiates a clause of Corollary~\ref{cor:t1}. The synthesis has the best predictive log-score and the best calibration on all four networks, with PIT-KS between $0.005$ and $0.016$ against $0.16$--$0.49$ for every competitor; the intercept absorbs the constant case-control offset $\log(s_1/s_0)$, so the reported probabilities are calibrated to the evaluation sample. On ranking it is best on polblogs and competitive on the collaboration networks, where stacking ranks marginally more sharply while being badly miscalibrated, the intercept-invariance of Remark~\ref{rmk:final}. Agent-set monotonicity holds: adding the local Adamic--Adar predictor strictly improves the four-agent forecast on ca-CondMat and ca-GrQc, the strict gain the corollary attributes to an agent carrying residual log-score information. The power-grid mesh is the graceful-failure case: a near-regular planar grid carries essentially no community, geometry, or degree signal, so the synthesis carries no usable ranking signal yet remains the best-calibrated method (PIT-KS $0.014$, ECE $0.010$), reporting calibrated uncertainty, not a confident wrong ordering, the $\theta^\circ\to0$ regime of the corollary.

\section{Held-out validation of the mechanism weights}
\label{sec:supp-weightval}

The leading synthesis weights reported in the main text are validated against signals the agents never
see. This section records the deterministic statistics behind those readings, each recomputed from the
raw contact logs or the S\&P returns.

\paragraph{High school: community, validated by class.}
The aggregate high-school contact graph is assortative by the nine class labels, with attribute
assortativity $0.65$. Fitting the community agent to the aggregate graph and comparing its blocks to
the held-out labels gives adjusted Rand index $0.993$ and normalized mutual information $0.994$; the
comparison uses a seeded spectral fit ($n_{\mathrm{init}}=25$) and is stable, with standard deviation
$0.000$ across seeds. Within the school day the community weight tracks the within-class share of
contacts, correlation $0.49$ ($p=0.0015$).

\paragraph{Hospital: degree, validated by role.}
The ward is not assortative by its four role labels: the attribute assortativity is $-0.12$, slightly
disassortative, so edges run across roles and there is no role-community structure for the community
agent to recover. Block recovery for the roles is seed-unstable on this graph, with adjusted Rand index
ranging from about $0.2$ to $0.5$ depending on the eigensolver start and the clustering initialization,
which is why the deterministic assortativity, not a block-recovery index, is the reported
community-structure measure. What organizes the ward is contact volume: the mean aggregate degrees are
$39.1$ for nurses, $35.5$ for doctors, $29.1$ for administrative staff, and $20.7$ for patients, so the
staff carry roughly $1.7$ times the patient degree and the degree weight leads.

\paragraph{S\&P 500: geometry tracks the regime.}
Realized volatility of the equal-weight index, computed from the same returns but independent of the
network construction, agrees with the network's top-eigenvalue share on the regime (correlation
$0.93$). Against this external series the latent-geometry weight is anticorrelated ($-0.78$ against
volatility, $-0.91$ against the eigenvalue share), the degree weight is positively correlated
($+0.66$), and the community weight is mildly positive ($+0.33$). Counting the single largest weight per
quarter, the community agent leads six of the nine stress quarters and the degree agent three, while in
the calm quarters the lead splits between the geometry and community agents; the reallocation toward
block and degree structure in stress is what the geometry anticorrelation records. No competitor returns
any of these weights: model averaging collapses onto a single agent (Theorem~4 of the main text), and a
recalibrated forecaster returns calibrated probabilities but no mechanism weights.

\section{Single-snapshot versus smoothed-adjacency agents on the large networks}
\label{sec:supp-variant}

The large-network study of the main text fits the three mechanism agents on the discounted-average adjacency through
$t-1$, so that they are the dynamic block, latent-geometry, and degree predictors. A natural
alternative fits each agent on the single previous snapshot $A_{t-1}$ alone, as the released
\texttt{run\_dynamic} code does. Table~\ref{tab:supp-variant} reports both on the same full node sets
and all monthly windows. The choice leaves the calibration conclusions untouched on every network:
the synthesis PIT-KS and ECE move by at most $0.004$. Its effect is confined to the forecast sharpness, and only
where the mechanisms match the structure. On the citation network the smoothed-adjacency agents cut
the synthesis one-step log-score from $0.631$ to $0.413$;
on the e-mail network the forecast is essentially unchanged
(near the base rate under either choice).
Smoothed-history agents are therefore what sharpen matched mechanisms at scale, while the
calibration and mechanism-weight conclusions are a property of the synthesis itself, not of the
agent-fitting window.

\begin{table}[t]
\centering
\caption{Synthesis on the two large networks with agents fit on the single previous snapshot
versus on the discounted-average adjacency, on the full node sets and all monthly windows. Lower
NLL/PIT-KS/ECE better. Smoothed-adjacency agents sharpen the forecast on the citation
network while leaving the calibration unchanged.}
\label{tab:supp-variant}
\small
\begin{tabular}{llccc}
\toprule
Network & Agents & NLL & PIT-KS & ECE \\
\midrule
\multirow{2}{*}{arXiv HEP-PH} & single-snapshot & 0.631 & 0.027 & 0.027 \\
 & smoothed-adjacency & 0.413 & 0.031 & 0.041 \\
\midrule
\multirow{2}{*}{Enron e-mail} & single-snapshot & 0.576 & 0.057 & 0.073 \\
 & smoothed-adjacency & 0.597 & 0.061 & 0.077 \\
\bottomrule
\end{tabular}
\end{table}

\section{Robustness to a modern machine-learning baseline}
\label{sec:supp-gnn}

The comparisons in the main text are to other statistical combination rules, which carry inferential
targets and calibration guarantees. As a robustness check we also place the synthesis beside a modern
machine-learning forecaster that carries neither: a two-layer graph auto-encoder (Dynamic GCN) in the
manner of \citet{kipf2016variational}, trained online by gradient descent on the discounted-average
adjacency and scored on the identical held-out dyads. Table~\ref{tab:supp-gnn} reports both across all
five networks. The pattern is the one the repositioning predicts. The neural forecaster is competitive
or better on ranking and, on several networks, on the log-score, but it is the
less-calibrated of the two on four of the five networks, with calibration error $0.12$ to $0.19$ against the synthesis's $0.02$ to
$0.14$, while on the S\&P network its PIT-KS is slightly better and its calibration error matches; it states no coverage; and it returns none of the mechanism weights or switch localization
that are the contribution of this paper. On the citation network, where the mechanisms match the
structure, the synthesis attains the lower log-score outright. A black-box forecaster is therefore a
useful sharpness reference, not a competitor on the terms the paper claims, and it is reported here, not in the main text, for exactly that reason.

\begin{table}[t]
\centering
\caption{The synthesis against a graph auto-encoder (Dynamic GCN), a non-statistical forecaster with
no inferential guarantees, across all five networks; both are fit on the past-smoothed adjacency and
scored on the identical held-out dyads. Lower NLL/PIT-KS/ECE better. The neural
forecaster ranks well and is often sharp, but is less calibrated on four of the five networks and returns no
mechanism weights, switch localization, or coverage.}
\label{tab:supp-gnn}
\small
\begin{tabular}{llccc}
\toprule
Network & Method & NLL & PIT-KS & ECE \\
\midrule
\multirow{2}{*}{S\&P 500 ($n=470$)} & Dynamic BPS & 0.761 & 0.092 & 0.123 \\
 & Dynamic GCN & 0.619 & 0.074 & 0.123 \\
\midrule
\multirow{2}{*}{High school ($n=327$)} & Dynamic BPS & 0.613 & 0.107 & 0.140 \\
 & Dynamic GCN & 0.470 & 0.152 & 0.177 \\
\midrule
\multirow{2}{*}{Hospital ($n=75$)} & Dynamic BPS & 0.558 & 0.119 & 0.139 \\
 & Dynamic GCN & 0.543 & 0.136 & 0.176 \\
\midrule
\multirow{2}{*}{arXiv HEP-PH ($n=28{,}093$)} & Dynamic BPS & 0.413 & 0.031 & 0.041 \\
 & Dynamic GCN & 0.522 & 0.139 & 0.188 \\
\midrule
\multirow{2}{*}{Enron ($n=86{,}978$)} & Dynamic BPS & 0.597 & 0.061 & 0.077 \\
 & Dynamic GCN & 0.501 & 0.148 & 0.164 \\
\bottomrule
\end{tabular}
\end{table}

\section{Protocols, scoring, interval taxonomy, and notation}
\label{supp:clarify}
This section makes explicit several conventions the main text uses but states compactly: the two estimation protocols, the exact scoring definitions and probability scale, the three interval types and where each is used, the calibration of the predicted-state forecast, the conditions behind the unconditional community interval, the two recovery-delay regimes, and the symbol conventions. Nothing here changes a theorem; the statements are definitions and consequences of results already proved.

\subsection*{Two estimation protocols}
The method runs two operations on the snapshots, answering different questions, and the boxed procedure of the main text contains both.

\emph{Prequential forecasting.} Each agent forecasts the next snapshot from the past, $\widehat P^{(j)}_t=\mathcal A_j(A_{1:t-1})$, and snapshot $A_t$ enters only after the forecast is formed. The discounted filter produces the predicted state $\widehat\bbeta_{t\mid t-1}$, measurable with respect to $A_{1:t-1}$. Because the fitted agents and the state are functions of the past alone, the conditional weight inference of Theorem~\ref{thm:t1} holds with no same-snapshot refitting: there is no own-observation feedback to orthogonalize. This is the regime for the one-step forecast and the displayed weight trajectory.

\emph{Same-snapshot cross-fitting.} For the unconditional weight interval the snapshot is split into $K$ folds; each agent surface is fit on the dyads outside a fold and read off on the held-out fold, which enters the synthesis only. Here the agents are refit on the same snapshot that scores them, so the first-order effect of the agent nuisance must be removed; the orthogonalized cross-fold construction of Theorem~\ref{thm:crossfold-orthogonal-unconditional-bps} does this and delivers the unconditional interval. The cross-fold split is a property of the interval, not of the forecast. The forecast and trajectory come from the filter; the unconditional interval comes from the cross-fold split.

\subsection*{Scoring and resampling}
Write $D_t$ for the scored dyads at snapshot $t$ and $\widehat q_{t,e}$ for the forecast probability. The negative log-likelihood is
\[
\mathrm{NLL}=-\frac{1}{|D_t|}\sum_{e\in D_t}\big[A_{t,e}\log\widehat q_{t,e}+(1-A_{t,e})\log(1-\widehat q_{t,e})\big],
\]
averaged over scored snapshots. The probability integral transform is randomized for the binary outcome: with $U\sim\mathrm{Unif}(0,1)$ drawn independently per dyad, the value is $U(1-\widehat q_{t,e})$ when $A_{t,e}=0$ and $(1-\widehat q_{t,e})+U\,\widehat q_{t,e}$ when $A_{t,e}=1$, uniform on $(0,1)$ under correct conditional calibration; PIT--KS is the Kolmogorov--Smirnov distance between the pooled transform values and the uniform law. The expected calibration error uses $B$ equal-mass bins of $\{\widehat q_{t,e}\}$, $\mathrm{ECE}=\sum_{b=1}^{B}(|D_t\cap b|/|D_t|)\,|\bar q_b-\bar A_b|$, with $\bar q_b$ the mean forecast and $\bar A_b$ the empirical edge frequency in bin $b$; we use $B=15$.

\emph{Probability scale.} On a balanced evaluation set, all edges with an equal-size sample of non-edges, the scored probability is the case-control probability $p_{\mathrm{cc}}(z)=\Prob(A=1\mid z,S=1)$, where $S=1$ marks inclusion in the balanced set. The population edge probability is recovered by the constant offset of Corollary~\ref{cor:t1}, $\logit p_{\mathrm{pop}}(z)=\logit p_{\mathrm{cc}}(z)+\log(s_1/s_0)$ with sampling fractions $s_1,s_0$. Scores on a balanced set are reported on $p_{\mathrm{cc}}$; where population edge forecasting is the target the offset is applied first and the scale is stated with the table. The two scales are not mixed within a table.

\emph{Resampling unit.} Intervals from resampling use the unit that carries the dependence: nodes for networks whose dyads share an endpoint and are dependent, in particular the correlation networks, where a node-clustered bootstrap matches the clustered score covariance behind the sandwich; and forecast rounds for the prequential scores, resampled in blocks to preserve serial dependence.

\subsection*{Three interval types}
Three intervals appear, with different validity,
\[
\underbrace{H_t^{-1}}_{\text{Laplace credible}},\qquad
\underbrace{H_t^{-1}V_tH_t^{-1}}_{\text{conditional Wald}},\qquad
\underbrace{(H_t^{\perp})^{-1}\Omega_t^{\perp}(H_t^{\perp})^{-1}}_{\text{cross-fold unconditional}} .
\]
The Laplace credible covariance $H_t^{-1}$ is the curvature the filter propagates as its internal posterior, exact only under correct specification. The conditional Wald covariance $H_t^{-1}V_tH_t^{-1}$ is the sandwich of Theorem~\ref{thm:t1}, with $V_t=H_t$ only under correct Bernoulli-logit specification and otherwise the clustered score covariance; it is valid conditional on the fitted agents and is what the coverage check uses. The cross-fold covariance $(H_t^{\perp})^{-1}\Omega_t^{\perp}(H_t^{\perp})^{-1}$, from the orthogonalized same-snapshot cross-fit of Theorem~\ref{thm:crossfold-orthogonal-unconditional-bps}, is valid unconditionally for finite-dimensional agents. The displayed weight bands use the conditional Wald sandwich; the unconditional claims use the cross-fold interval; the Laplace covariance is the filter's propagation step and is not reported as a coverage statement.

\subsection*{Calibration of the predicted-state forecast}
Theorem~\ref{thm:t4} establishes the calibration identities at the snapshot projection $\bbeta^\circ_t$. The one-step forecast uses the predicted state $\widehat\bbeta_{t\mid t-1}$, which differs from $\bbeta^\circ_t$ by the tracking error. The reliability functional is Lipschitz in the state on the sparse scale: for any bin the mean forecast $\bar q_b(\bbeta)$ moves with $\bbeta$ through $\sigma(\bbeta^\top z)$, whose gradient has size of order $\rho_n$ there, while the empirical frequency does not move; hence
\[
\mathrm{ECE}\big(\widehat q_{t\mid t-1}\big)\ \le\ \mathrm{ECE}\big(q_t(\bbeta^\circ_t)\big)\ +\ C\rho_n\,\big\|\widehat\bbeta_{t\mid t-1}-\bbeta^\circ_t\big\| .
\]
The first term is the projection calibration of Theorem~\ref{thm:t4}, controlled by the approximation error of the set of agents and the sampling error; the displacement is the tracking error, controlled by the regret bound of Theorem~\ref{thm:t3}. The predicted-state forecast is therefore intercept-calibrated up to approximation error and tracking error, not exactly calibrated at every time, and the displayed coverage reflects both.

\subsection*{Conditions for the unconditional community interval}
The unconditional interval of Theorem~\ref{thm:crossfold-orthogonal-unconditional-bps} for the community weight is stated for a finite-dimensional community nuisance, which requires a fixed number of blocks $K$; block labels known or recovered exactly with probability $1-o(1)$, so label error does not enter at first order; and a community feature carrying no node-level degree parameter. Under these conditions the block-probability nuisance is estimated at the second-order rate and the orthogonalized weight is unconditionally valid. When the implemented community agent adds a degree correction with node-level effects, its leading weight is on the same footing as the degree and latent-space agents, whose per-node nuisance is informed by only $\asymp n\rho_n$ edges; for those the conditional interval is the one claimed. The settled-snapshot leader is the community weight in the regime where these conditions are met.

\subsection*{Two recovery-delay regimes}
The post-switch recovery delay has two regimes. For $\delta$ bounded away from $1$ the binding constraint is information: each snapshot carries $N\asymp n^2\rho_n$, localization completes in $O(1)$ snapshots, and the discount enters only through a constant, since the discounted post-switch information $N(1-\delta^h)/(1-\delta)\asymp hN$ over the $O(1)$ snapshots coincides with the undiscounted $hN$. As $\delta\to1$ the binding constraint is forgetting: the delay $\log(D/\kappa)/|\log\delta|\asymp 1/(1-\delta)$ grows and the discount enters the rate. The empirical delay plateaus near six snapshots at fixed $\delta$, the information-limited regime; the $1/(1-\delta)$ growth is the forgetting-limited worst case.

\subsection*{Symbol conventions}
A few symbols carry a context-dependent meaning, fixed here.
\begin{center}
\begin{tabular}{l p{0.70\linewidth}}
\hline
Symbol & Meaning by context \\
\hline
$\kappa(H_t)$ & condition number of the information matrix, written with its argument; the aliasing diagnostic \\
$\kappa$ & localization margin $\theta^\circ_{t,r_t}-\max_{j\ne r_t}\theta^\circ_{t,j}$ in the tracking and lower-bound results \\
$K$ & number of blocks for the community agent; separately, the number of cross-fitting folds \\
$d$ & latent dimension of the geometry agent; the synthesis dimension is $J+1$ \\
$\alpha$ & calibration intercept $\alpha_t$ in the state; separately, the level $\alpha$ of a confidence or test statement \\
\hline
\end{tabular}
\end{center}
The two readings of $\kappa$ are distinguished by the matrix argument, and the two readings of $K$ never share an expression.

\subsection*{Further conventions}
\emph{Information scale of a balanced fit.} The dyadic rate $(n^2\rho_n)^{-1/2}$ is the information from a full-snapshot weight fit, where the scored dyad count is $m_t\asymp n^2$ and $N_t\asymp m_t\rho_n\asymp n^2\rho_n$. If the weights are fit on a balanced case-control sample of $m_{\mathrm{val}}\asymp n^2\rho_n$ dyads, the unweighted Bernoulli information is $N_t\asymp m_{\mathrm{val}}\rho_n\asymp n^2\rho_n^2$ and the rate is $(n^2\rho_n^2)^{-1/2}$; the population scale is then recovered by the offset of Corollary~\ref{cor:t1} or by inverse-probability weighting.

\emph{Composite score for dependent dyads.} On networks whose dyads are dependent, in particular the correlation networks where edges sharing an asset are functions of the same return series, the product-Bernoulli NLL is a composite, marginal log-score for the edge-probability forecast, not the joint graph density; the same dependence is why the independent-dyad Wald interval is anti-conservative and the node-clustered variance is used.

\emph{Residualized aliasing diagnostic.} The condition number $\kappa(H_t)$ includes the intercept and so also reflects density and intercept scaling. Aliasing among the agents alone is isolated by the condition number of the intercept-residualized weighted Gram $\widetilde H_t=N_t^{-1}\sum_e w_{t,e}(u_{t,e}-\bar u_{w,t})(u_{t,e}-\bar u_{w,t})^\top$; we report $\kappa(H_t)$ for continuity with the theory and note $\mathrm{cond}(\widetilde H_t)$ as the purely agent-level measure.

\emph{Clip regime.} The bounded-feature property $|u^{(j)}_{t,e}|\le\log(1/\epsilon)+O(\rho_n)$ requires the upper sparse cap $\rho_n/\epsilon$ to bind, that is $\rho_n/\epsilon<1-\epsilon$; for $\rho_n\ge\epsilon(1-\epsilon)$, the dense stress snapshots, the cap is the fixed $1-\epsilon$ and the bound widens to $\log((1-\epsilon)/\rho_n)$. The sparse-regime theory assumes $\rho_n/\epsilon\to0$; the dense snapshots are outside it, used as a finite-sample stress check.

\emph{Qualification of optimality.} The recovery-delay lower bound of order $\log(eJ/\alpha)/(n^2\rho_n\kappa^2)$ and the upper bound of order $\log(JT/\alpha)/(n^2\rho_n\kappa^2)$ match in rate up to the logarithmic horizon factor, $\log(JT)$ against $\log(J)$, and the constants from the Freedman and two-point reductions; minimax optimality here means minimax-rate optimality up to logarithmic horizon factors and constants, for the local separated-switch family.

\emph{Predictive mean.} The implementation is a Laplace filter: the predictive mean is the plug-in $\sigma(\widehat\bbeta_{t\mid t-1}^\top z_{t,e})$ and the curvature supplies the intervals. The name predictive synthesis refers to the modeling construction, the synthesis density over agent forecasts, not to an integrated predictive. The integrated predictive $\mathbb{E}[\sigma(\bbeta_t^\top z_{t,e})\mid A_{1:t-1}]$, approximated by the logistic-normal form $\sigma(m_{t\mid t-1}^\top z_{t,e}/\sqrt{1+\pi z_{t,e}^\top C_{t\mid t-1}z_{t,e}/8})$, is available at extra cost and gives the same conclusions.

\end{document}